\documentclass[11pt]{article}
\usepackage{amsmath,amssymb,amsthm}
\usepackage{graphicx}
\usepackage[numbers]{natbib}
\usepackage[margin=1in, top=0.2in, headheight=10pt, headsep=5pt, includehead]{geometry}
\usepackage{xcolor}
\usepackage[colorlinks=true, citecolor=blue, linkcolor=red]{hyperref}
\usepackage{titlesec}

\definecolor{tocgreen}{RGB}{34, 139, 34}

\makeatletter
\AtBeginDocument{%
  \let\orig@starttoc\@starttoc
  \renewcommand{\@starttoc}[1]{%
    \begingroup
    \hypersetup{linkcolor=tocgreen}%
    \orig@starttoc{#1}%
    \endgroup
  }%
}
\makeatother
\usepackage[T1]{fontenc}
\usepackage{newtxtext, newtxmath}  
\usepackage{tocloft}              

\setcounter{tocdepth}{2}

\setcounter{secnumdepth}{2}




\setlength{\cftbeforesecskip}{4pt}
\setlength{\cftbeforesubsecskip}{2pt}
\numberwithin{equation}{section}
\titleformat{\section}
  {\normalfont\Large\bfseries\MakeUppercase}{\thesection}{1em}{| }

\titleformat{\subsection}
  {\normalfont\large\bfseries}{\thesubsection}{1em}{| }

\titleformat{name=\section,numberless}
  {\normalfont\Large\bfseries}{}{0pt}{}


\makeatletter
\newtheoremstyle{dotstyle}{3pt}{3pt}{\normalfont}{}{\bfseries}{.}{ }{\thmname{#1}\thmnumber{ #2}\thmnote{ (#3)}}
\theoremstyle{dotstyle}
\newtheorem{theorem}{Theorem}[section]
\newtheorem{lemma}{Lemma}[section]
\newtheorem{proposition}{Proposition}[section]
\newtheorem{definition}{Definition}[section]
\newtheorem{remark}{Remark}[section]
\newtheorem{corollary}{Corollary}[section]
 
\renewenvironment{proof}[1][\proofname]{%
  \par\pushQED{\qed}%
  \topsep6\p@\@plus6\p@\relax
  \trivlist
  \item[\hskip\labelsep\normalfont\bfseries #1\@addpunct{.}]%
  \ignorespaces
}{%
  \popQED\endtrivlist\@endpefalse
}
\makeatother

\usepackage{etoolbox}   
\appto\appendix{
  \titleformat{\section}                         
    {\normalfont\Large\bfseries\MakeUppercase}  
    {APPENDIX~\thesection:}{1em}{}              
  \titleformat{name=\section,numberless}         
    {\normalfont\Large\bfseries}{}{0pt}{}
}

\title{\bfseries A Two-Parameter Memory-Weighted Velocity Operator for Time and State Variables: Foundations and Fundamental Properties}
\author{
  Jiahao Jiang\thanks{Correspondence to: Jiahao Jiang, School of Mathematics, Southwest Jiaotong University. Email: \href{mailto:jiahao.jiang@swjtu.edu.cn}{jiahao.jiang@swjtu.edu.cn}} \\
  \textit{School of Mathematics, Southwest Jiaotong University, Chengdu 610031, China}
}
\date{}

\begin{document}

\maketitle

\begin{center}
\large\bfseries Abstract
\end{center}
\vspace{10pt}
We introduce and analyze a memory-weighted velocity operator \(\mathscr{V}_{\alpha,\beta}\) within the framework of operator theory, establishing rigorous theoretical results for systems with time-varying memory. The operator employs two independent continuous exponents \(\alpha(t)\) and \(\beta(t)\) that separately weight past state increments and elapsed time scaling. This decoupling mechanism is the main novelty of the work: it allows the two memory aspects to evolve independently, which is essential for systems where the influence of past states and the perception of time may change in qualitatively different ways. Such situations arise naturally in adaptive materials, non-stationary transport, and evolving memory processes.

Motivated by systems across multiple physical contexts—such as viscoelastic materials with stress-dependent relaxation or anomalous transport with history-dependent characteristics—the framework addresses memory aspects that may evolve differently across disciplines.

We establish the operator's foundational properties: an explicit integral representation, linearity, and continuous dependence on the memory exponents with respect to uniform convergence. Central to the analysis are weighted pointwise estimates revealing how the exponent difference \(\beta(t)-\alpha(t)\) modulates the operator, and we establish weighted boundedness estimates for this linear operator. These estimates exhibit a natural compensation mechanism between the two memory weightings.

For the uniform-memory case \(\alpha=\beta\equiv1\), we prove that \(\mathscr{V}_{\alpha,\beta}[x](t)\) asymptotically recovers the classical derivative \(\dot{x}(0)\) as \(t\to 0^{+}\), ensuring consistency with local calculus. The presentation is self-contained and includes detailed proofs of all results, with technical appendices providing auxiliary tools of independent interest. By decoupling the memory weighting of state increments from that of elapsed time, the proposed operator offers a new perspective on memory-dependent systems, and may serve as a modest addition to the analytical methods available in operator theory and its applications.

\vspace{10pt}
\noindent \textbf{Keywords:} memory-dependent systems, time-varying orders, weighted estimates, continuous dependence, operator theory

\vspace{10pt}  
{\centering\Large\bfseries\MakeUppercase{Contents}\par}
\vspace{6pt}
\makeatletter
\@starttoc{toc}
\makeatother
\vspace{12pt}  

\section{INTRODUCTION}

Fractional calculus has emerged as a powerful framework for describing systems with memory and nonlocal effects, where the order of differentiation or integration can be non-integer. Within this broad context, the present work introduces a memory-weighted velocity operator that incorporates time-varying, power-law memory into the notion of a rate of change. A key feature of the proposed operator is the use of two independent, time-varying exponents that separately weight past state increments and elapsed time, thereby allowing the two aspects of memory to evolve independently. This decoupling mechanism is motivated by systems where the influence of history and the scaling of time may change in qualitatively different ways, a situation that falls naturally within the scope of fractional calculus and its interdisciplinary applications. Before presenting the operator and its analysis, we briefly recall the classical derivative and several directions in which generalized velocity concepts have been developed.

The measurement of change is a cornerstone of dynamical systems theory. The classical derivative, defined as a local limit of difference quotients, provides an instantaneous and memoryless rate of change. While supremely effective for Markovian processes and systems governed by local laws, many phenomena across physics, biology, and engineering exhibit a dependence of the present state on its history \cite{bhat1992characterization}. This necessitates the development of generalized velocity concepts that can encapsulate such non-local, memory-dependent dynamics. Investigations into such generalized operators span diverse fields, from the algebraic characterization of velocity maps in operator algebras \cite{bhat1992characterization} and their role in quantum mechanics on noncommutative spaces \cite{kovavcik2013velocity}, to the analysis of global existence for transport equations driven by nonlocal velocity fields \cite{bae2015global}. The challenge of even computing fundamental differential operators (divergence, curl) from the motion of discrete particle clouds further underscores the intricate relationship between observation, averaging, and the definition of velocity in collective systems \cite{maurel2024computing}. A related development in operator theory has introduced delay-coordinate maps and kernel integral operator techniques to construct coherent observables and approximate eigenfunctions of evolution operators, providing a framework for spectral analysis in ergodic dynamical systems \cite{giannakis2021delay}.
Complementing this direction, operatorial methods based on creation and annihilation operators have been applied to interacting populations and migration processes, where algebraic structures provide a systematic framework for handling constraints such as population upper bounds \cite{bagarello2013phenomenological}. Along a similar line, Toeplitz operators on the space of real analytic functions have been studied, including injectivity properties in the spirit of the Coburn–Simonenko theorem and invertibility characterizations via Fredholm index zero \cite{jasiczak2018coburn}. A further development concerns the iterates of composition operators defined by polynomials acting on Gelfand–Shilov classes of ultradifferentiable functions, including topologizability properties and a complete characterization for affine operators \cite{albanese2025topologizability}.

To mathematically capture memory effects, fractional calculus has emerged as a principal framework. The analysis of evolution equations involving fractional Laplacians has led to deep insights into the regularity and long-time behavior of solutions \cite{vazquez2017classical, chen2010heat}. A significant advancement within this paradigm is the theory of **variable-order (VO) differential operators**, where the order of differentiation—and thus the intrinsic memory kernel—is allowed to vary in time or space. This concept, introduced for mechanical systems to model transitions between dynamic regimes \cite{coimbra2003mechanics}, has been rigorously developed for pseudo-differential equations \cite{umarov2009variable}. The physical interpretation of the variable order as a normalized phase shift in oscillatory systems connects the mathematical formalism directly to measurable dynamical properties \cite{ramirez2010selection}. Recent definitions of fractional derivatives with “variable memory” explicitly aim to bridge models with purely local (short) and full (global) historical dependence \cite{baliarsingh2022fractional}.

The modeling power of this VO framework is evidenced by its successful application to complex, multi-scale problems. It describes the propagation of electromagnetic waves in layered media with crossover properties \cite{kachhia2021electromagnetic}, provides accurate numerical solutions for nonlinear reaction-diffusion and advection-diffusion equations with variable parameters \cite{kumar2019gegenbauer}, and forms the basis for novel, hybrid deterministic-stochastic models of biological processes like diabetes progression \cite{hasan2025novel}. The statistical physics of systems possessing such time-varying memory characteristics has been formally articulated, introducing forces linked to gradients in the memory intensity itself \cite{kobelev2003statistical}. Analytically, the study of these systems often employs concepts like subordination, which decomposes complex memory-dependent kinetics into simpler processes, offering both physical insight and technical simplification \cite{iomin2025subordination, gorska2023subordination}.

Naturally, this leads to the study of discrete counterparts \cite{tarasov2024discrete}, numerical methods with memory for solving nonlinear equations \cite{petkovic2011family}, systems with finite, fixed memory lengths \cite{ledesma2023differential}, theoretical limits of finite-dimensional models for memory processes \cite{fanizza2024quantum}, and their manifestations in continuum mechanics, such as history-dependent hemivariational inequalities for non-stationary Navier-Stokes flows with long memory \cite{oultou2025history} and hyperbolic phase-field models incorporating memory kernels \cite{rotstein2001phase}. A substantial body of literature is dedicated to establishing the fundamental analytical properties of such memory systems. This includes research on stabilization and decay for memory-type thermoelasticity \cite{messaoudi2012general}, well-posedness and Mittag-Leffler stability for fractional heat conduction with fading memory \cite{kerbal2024well}, functional viability theorems for differential inclusions constrained by past states \cite{haddad1984functional}, and the convergence, boundedness, and ergodicity of regime-switching stochastic processes with infinite memory \cite{li2021convergence}. Further theoretical depth is provided by explorations of fractional Lagrangian mechanics \cite{jumarie2009lagrangian}, detailed asymptotic profiles for nonlocal heat equations with memory \cite{cortazar2024asymptotic}, and the development of novel analytical concepts like fractional convexity \cite{del2022fractional} alongside sharp regularity results (e.g., Hölder continuity) for solutions to distributed-order evolution equations \cite{kubica2024holder}.

A crucial aspect for both the rigorous analysis and practical application of memory models involves establishing foundational properties of solutions. Work in this area includes investigating the Cauchy problem for a fractional diffusion equation with a memory term, where the time-fractional derivative is taken in the Caputo sense, giving a representation of solutions under Fourier series, analyzing initial value problems, and discussing the stability of the fractional derivative order under some assumptions on the input data \cite{akdemir2023dependence}. Beyond such specific analyses, the study of memory effects facilitated by fractional operators permeates diverse domains. Comprehensive reviews have summarized solution methods and the memory effect in fractional-order chaotic systems \cite{he2022solutions}. In complex fluids, generalized hydrodynamic correlations and their associated fractional memory functions have been analyzed \cite{rodriguez2015generalized}. Comparisons have been made between the behaviors of systems described by fractional differential equations and those described by fractional difference equations, using examples like the Caputo standard \(\alpha\)-family of maps to investigate similarities and differences between power-law and factorial-law memory \cite{edelman2014caputo}. Furthermore, memory represented via fractional-order dynamics has been shown to alter properties such as alternans and spontaneous activity in minimal models of cardiac cells \cite{comlekoglu2017memory}. The theoretical exploration extends naturally to discrete-time evolution equations with memory \cite{ponce2021discrete}, control and stabilization problems in fractional evolution systems \cite{ammari2021stabilization}, and even to sophisticated settings combining memory, mean-field interactions, and singular stochastic control \cite{agram2019singular}. Concurrently, efforts to generalize the very foundations of fractional calculus continue, aiming to provide a more encompassing framework for anomalous stochastic processes \cite{jiang2025study}. The analytical study of singular operators has been further enriched by investigations into Sobolev spaces associated with a singular perturbation of the Laplace operator \cite{georgiev2025sobolev}, regularization and discretization of transfer operators via entropic optimal transport \cite{junge2024entropic}, connections between localization operators and Toeplitz operators \cite{englivs2009toeplitz}, and the classification of operators within small operator ideals \cite{manoussakis2021small}—each offering complementary perspectives that inform the analysis of memory-weighted operators such as the operator \(\mathscr{V}_{\alpha,\beta}\) introduced in Definition~\ref{def:velocity_operator}.

Synthesizing these threads—the need for flexible memory modeling via variable-order operators, the importance of establishing core analytical properties like continuous dependence, and the value of a self-contained mathematical construct—motivates the present work. Herein, we introduce and systematically analyze a **memory-weighted velocity operator** \(\mathscr{V}_{\alpha,\beta}\). Rather than embedding memory within a pre-specified differential equation, we propose a direct, operator-theoretic generalization of the derivative. Its definition centers on a time-varying power-law memory kernel, governed by two independent, continuous exponent functions \(\alpha(t)\) and \(\beta(t) \in (0,1]\). This design allows separate weighting of historical state differences (\(\alpha\)) and elapsed time (\(\beta\)), providing a versatile primitive for describing systems whose memory characteristics evolve dynamically.

The subsequent analysis is structured as follows. Section~\ref{sec:foundations_and_properties} lays the mathematical groundwork: we define the relevant function spaces, the time-varying memory kernel, and formally construct the operator \(\mathscr{V}_{\alpha,\beta}\). We then establish its elementary properties, including an explicit simplified form and linearity. A central theoretical result, Theorem~\ref{thm:continuous_dependence} in Subsection~\ref{subsec:continuous_dependence}, proves that for a fixed function \(x \in C^1\), the mapping \((\alpha, \beta) \mapsto \mathscr{V}_{\alpha,\beta}[x]\) is continuous with respect to uniform convergence of the exponents. This ensures that the velocity depends stably on the memory law, a crucial property for both mathematical analysis and physical modeling where memory parameters may be subject to uncertainty or gradual variation.

Subsection~\ref{subsec:weighted_estimates_boundedness} examines the mapping properties of \(\mathscr{V}_{\alpha,\beta}\) through the development of weighted pointwise estimates. Theorem~\ref{thm:weighted_boundedness} establishes how the exponent difference \(\beta(t)-\alpha(t)\) modulates the magnitude of \(\mathscr{V}_{\alpha,\beta}[x](t)\), revealing a natural compensation mechanism between the two memory aspects. These estimates lead to conditions under which \(\mathscr{V}_{\alpha,\beta}\) defines a bounded linear operator from \(\mathcal{C}^{1}(I)\) into \(\mathcal{C}(I)\), providing insight into the operator's functional-analytic behavior that is relevant for the study of evolution equations governed by such memory-weighted rates of change.

Subsection~\ref{subsec:uniform_memory_case} examines the important uniform-memory case \(\alpha=\beta\equiv 1\). We prove that in this regime, the operator asymptotically recovers the classical derivative at the initial time, providing a clear bridge between the non-local formulation and standard calculus. To ensure a self-contained and rigorous presentation, we have developed and compiled necessary technical tools in the appendices: these include foundational inequalities for controlling logarithmic singularities coupled with power functions (Appendix~\ref{app:log_power_inequalities}), a key lemma establishing the uniform convergence of products of function sequences (Appendix~\ref{app:uniform_convergence}), and a systematic, constructive framework for error-control functions tailored to Taylor remainders (Appendix~\ref{app:error_control}).

By introducing the memory-weighted velocity operator \(\mathscr{V}_{\alpha,\beta}\) and establishing its core mathematical properties---such as linearity, continuous dependence on the memory exponents, weighted estimates characterizing its boundedness, and asymptotic recovery of the classical derivative in the uniform-memory limit---this work provides a well-defined mathematical object for incorporating time-varying, power-law memory into the concept of a rate of change. The analytical framework developed here, together with the supporting technical lemmas, offers a self-contained foundation for further investigation. Potential future directions could include the formulation and analysis of differential equations whose dynamics are governed by this operator, which may find relevance in theoretical studies of processes with non-stationary or adaptively weighted memory, such as those encountered in certain viscoelastic or transport phenomena where distinct aspects of memory may evolve independently. In particular, the operator's linearity and stability properties make it a natural candidate for incorporation into nonlinear evolution equations, where memory effects may interact with state-dependent feedback or external forcing.

\section{MATHEMATICAL FOUNDATIONS AND BASIC PROPERTIES OF THE MEMORY-WEIGHTED VELOCITY OPERATOR} \label{sec:foundations_and_properties}
\subsection{Function Spaces and Notational Conventions} \label{sec:preliminaries}
Throughout this work we fix a finite time horizon $T > 0$ and denote 
$I := [0, T]$. We shall work primarily with the following classical spaces 
of real-valued functions on $I$:

\begin{itemize}
    \item $\mathcal{C}(I)$: the Banach space of continuous functions on $I$, 
          equipped with the supremum norm
          \begin{equation}\label{eq:C_norm}
          \|x\|_{\infty} := \sup_{t \in I} |x(t)|;
          \end{equation}
    
    \item $\mathcal{C}^{1}(I)$: the subspace of continuously differentiable functions; 
          that is, functions $x \in \mathcal{C}(I)$ for which the derivative 
          $\dot{x}(t)$ exists at each $t \in I$ (with the appropriate one-sided 
          derivatives understood at the endpoints) and satisfies $\dot{x} \in \mathcal{C}(I)$.
          We endow this space with the natural norm
          \begin{equation}\label{eq:C1_norm}
          \|x\|_{\mathcal{C}^{1}} := \|x\|_{\infty} + \|\dot{x}\|_{\infty},
          \end{equation}
          where $\|\dot{x}\|_{\infty}$ is defined analogously to \eqref{eq:C_norm}:
          \begin{equation}\label{eq:C1_derivative_norm}
          \|\dot{x}\|_{\infty} := \sup_{t \in I} |\dot{x}(t)|;
          \end{equation}
\end{itemize}

The symbol $\mathbb{R}$ denotes the field of real numbers.  
For the Gamma function we employ the standard definition
\begin{equation}\label{eq:Gamma_def}
\Gamma(z) := \int_{0}^{\infty} s^{z-1} e^{-s}\, ds \qquad (\Re(z) > 0),
\end{equation}
together with the functional equation
\begin{equation}\label{eq:Gamma_functional_eq}
\Gamma(z+1) = z\,\Gamma(z),
\end{equation}
which will be used frequently in what follows.

\subsection{Time-Varying Power-Law Memory Weight Functions} \label{sec:time_varying_weights}
The core of our construction is a family of weight functions that quantify how strongly 
the state of a system at a given instant depends on its past.  In the classical fractional 
calculus framework, the weight function decays according to a fixed power-law.  
Motivated by the desire to incorporate time-dependent memory characteristics, 
we consider a variant where the exponent of the power-law is permitted to vary 
with the current time.  This leads to the following definitions.

\begin{definition}[Time-varying memory exponent] \label{def:time_varying_exponent}
    A function $\varrho \in \mathcal{C}(I)$ with $\varrho(t) \in (0,1]$ for all 
    $t \in I$ is called a \textbf{time-varying memory exponent}.  
    The value $\varrho(t)$ measures the \textit{memory intensity} at time 
    $t$: values close to $0$ indicate a strong emphasis on the recent past, 
    while values near $1$ correspond to an almost uniform weighting over the 
    whole history.
\end{definition}

\begin{definition}[Time-varying power-law memory kernel] \label{def:memory_kernel}
    Let $\varrho \in \mathcal{C}(I)$ be a time-varying memory exponent as defined in 
    Definition~\ref{def:time_varying_exponent}.  For $0 \le \tau < t \le T$ we define 
    the associated \textbf{time-varying power-law memory kernel}
    \begin{equation} \label{eq:memory_kernel}
        K_{\varrho}(t,\tau) := \frac{(t-\tau)^{\varrho(t)-1}}{\Gamma(\varrho(t))}.
    \end{equation}
    For completeness we set $K_{\varrho}(t,\tau) := 0$ whenever $\tau \ge t$.
\end{definition}

\begin{remark}[Causal structure and relation to existing work] \label{rem:causal_structure}
    The kernel \eqref{eq:memory_kernel} is \textit{causal} (it vanishes for future times 
    $\tau \ge t$) and exhibits a power-law dependence on the elapsed time 
    $s = t - \tau$.  In the definition \eqref{eq:memory_kernel}, the exponent 
    $\varrho(t) - 1$ depends explicitly on the upper limit $t$.  In some 
    approaches to variable-order differential operators, the order function may be 
    evaluated at different points; for example, the works 
    \cite{coimbra2003mechanics,umarov2009variable,ramirez2010selection} consider 
    operators with piecewise constant or otherwise specified variable order.  
    The formulation presented here, where the memory exponent depends on the current 
    time $t$, provides one particular way to incorporate time-varying memory 
    characteristics into a non-local operator.
\end{remark}

\begin{lemma}[Integral formula for the time-varying memory kernel] \label{lem:kernel_integral_formula}
    Let $\varrho \in \mathcal{C}(I)$ be a time-varying memory exponent and let 
    $K_{\varrho}$ be the associated kernel defined in \eqref{eq:memory_kernel}.  
    For any $t \in (0,T]$,
    \begin{equation} \label{eq:kernel_normalization}
        \int_{0}^{t} K_{\varrho}(t,\tau)\, d\tau = \frac{t^{\varrho(t)}}{\Gamma(\varrho(t)+1)}.
    \end{equation}
    Consequently, for each fixed $t$, the kernel $K_{\varrho}(t,\cdot)$ is integrable 
    on $[0,t]$ and its total mass is a continuous function of $t$.
\end{lemma}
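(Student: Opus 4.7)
The plan is a direct computation exploiting the structure of the kernel, followed by a short continuity argument. The key observation is that for each fixed $t \in (0,T]$ the exponent $\varrho(t)-1$ is a \emph{constant} in the integration variable $\tau$, so the integral reduces to a standard power-function integral. Since $\varrho(t) \in (0,1]$ gives $\varrho(t)-1 \in (-1,0]$, the only singularity, at $\tau = t$, is integrable; this already establishes integrability of $K_{\varrho}(t,\cdot)$ on $[0,t]$.

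To evaluate the integral, I would substitute $s = t-\tau$ (so $ds = -d\tau$, with $\tau = 0 \leftrightarrow s = t$ and $\tau = t \leftrightarrow s = 0$) to obtain
\begin{equation*}
\int_{0}^{t} K_{\varrho}(t,\tau)\, d\tau \;=\; \frac{1}{\Gamma(\varrho(t))} \int_{0}^{t} s^{\varrho(t)-1}\, ds \;=\; \frac{t^{\varrho(t)}}{\varrho(t)\,\Gamma(\varrho(t))},
\end{equation*}
and then apply the functional equation \eqref{eq:Gamma_functional_eq} in the form $\varrho(t)\,\Gamma(\varrho(t)) = \Gamma(\varrho(t)+1)$ to arrive at \eqref{eq:kernel_normalization}.

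For the continuity assertion, I would treat the numerator and denominator of $t \mapsto t^{\varrho(t)}/\Gamma(\varrho(t)+1)$ separately. Because $\varrho$ is continuous on $I$ with $\varrho(t)+1 \in (1,2]$ and $\Gamma$ is continuous and strictly positive on $(0,\infty)$, the denominator is continuous on $I$ and bounded below by a positive constant. For the numerator, continuity at any $t_0 > 0$ follows immediately from the representation $t^{\varrho(t)} = \exp(\varrho(t)\ln t)$ together with continuity of $\varrho$, $\ln$, and $\exp$. Continuity at the left endpoint $t = 0$ requires one extra remark: by compactness of $I$ together with the strict positivity $\varrho > 0$, there is a uniform lower bound $\varrho_{\min} := \min_{t\in I}\varrho(t) > 0$, whence for $t \in (0,1]$ one has $t^{\varrho(t)} \leq t^{\varrho_{\min}} \to 0$ as $t \to 0^{+}$, matching the value assigned by the formula at $t=0$.

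I do not anticipate any substantive obstacle: the main computation is elementary, and the only steps requiring care are verifying integrability at $\tau = t$ (handled by $\varrho(t) > 0$) and establishing continuity of $t^{\varrho(t)}$ at the left endpoint (handled by the uniform lower bound $\varrho_{\min} > 0$, which in turn rests on the compactness of $I$ and the strict positivity built into Definition~\ref{def:time_varying_exponent}).
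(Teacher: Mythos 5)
Your proof is correct and follows essentially the same route as the paper's: the key observation that $\varrho(t)$ is constant in $\tau$, the change of variable $s = t-\tau$, and the functional equation $\varrho(t)\Gamma(\varrho(t)) = \Gamma(\varrho(t)+1)$. Your treatment of continuity is somewhat more detailed than the paper's one-line remark (in particular, the uniform lower bound $\varrho_{\min} > 0$ used to handle the left endpoint), but the substance of the argument is the same.
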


\begin{proof}
    Substituting the definition \eqref{eq:memory_kernel} and noting that $\varrho(t)$ 
    depends only on the upper limit $t$ (hence is constant with respect to the integration 
    variable $\tau$), we obtain
    \[
    \int_{0}^{t} K_{\varrho}(t,\tau)\, d\tau 
    = \frac{1}{\Gamma(\varrho(t))} \int_{0}^{t} (t-\tau)^{\varrho(t)-1}\, d\tau.
    \]
    The change of variable $u = t - \tau$ yields
    \[
    \int_{0}^{t} (t-\tau)^{\varrho(t)-1}\, d\tau = \int_{0}^{t} u^{\varrho(t)-1}\, du 
    = \frac{t^{\varrho(t)}}{\varrho(t)}.
    \]
    Using the functional equation $\Gamma(\varrho(t)+1) = \varrho(t)\,\Gamma(\varrho(t))$ 
    gives \eqref{eq:kernel_normalization}.  The continuity of the total mass as a function of 
    $t$ follows from the continuity of $\varrho$ and of the Gamma function on $(0,1]$.
\end{proof}

\begin{remark}[Special choices of the memory exponent] \label{rem:special_exponents}
    Several particular cases of $\varrho(\cdot)$ are worth noting:
    \begin{itemize}
        \item \textbf{Time-independent exponent} $\varrho(t) \equiv \rho \in (0,1]$: 
              then $K_{\varrho}(t,\tau)$ reduces to the classical kernel 
              $(t-\tau)^{\rho-1}/\Gamma(\rho)$ appearing in Riemann--Liouville 
              fractional integrals of order $\rho$.  The study of fractional 
              operators and their generalizations, including various approaches to 
              memory effects, has been pursued in many directions; see, e.g., works 
              on fractional stochastic models \cite{hasan2025novel}, differential 
              equations with fixed memory length \cite{ledesma2023differential}, 
              and statistical systems with variable memory \cite{kobelev2003statistical}.
              
        \item \textbf{Uniform memory} $\varrho(t) \equiv 1$: the kernel becomes 
              identically $1$, corresponding to an ordinary averaging over the past 
              without any preferential weighting.
              
        \item \textbf{Asymptotically vanishing exponent}: if $\varrho(t) \to 0^{+}$ as 
              $t$ increases, the kernel becomes increasingly singular near $\tau = t$,
              which can be interpreted as the system placing heightened emphasis on the 
              immediate past.  The study of systems whose memory characteristics evolve 
              over time has been undertaken in various contexts; see, for instance, 
              investigations of dynamic systems with variable memory 
              \cite{kachhia2021electromagnetic, kumar2019gegenbauer}.
    \end{itemize}
    In the sequel we keep $\varrho(\cdot)$ as a general continuous function taking values 
    in $(0,1]$; this generality allows the description of systems whose memory properties 
    evolve in time.
\end{remark}

\subsection{Definition of the Memory-Weighted Velocity Operator} \label{sec:velocity_operator_def}

Based on the time-varying memory kernel introduced in the preceding section, we now define 
the main object of study in this work.  The operator we shall consider is designed to 
incorporate historical information in a non-local manner when measuring rates of change.

\begin{definition}[Memory-weighted velocity operator] \label{def:velocity_operator}
    Let $\alpha, \beta \in \mathcal{C}(I)$ be two time-varying memory exponents 
    as in Definition~\ref{def:time_varying_exponent}.  For a function $x \in \mathcal{C}^{1}(I)$, 
    we define a new function $\mathscr{V}_{\alpha,\beta}[x] : I \to \mathbb{R}$ by
    \begin{equation} \label{eq:velocity_operator}
        \mathscr{V}_{\alpha,\beta}[x](t) := 
        \begin{cases}
            \displaystyle
            \frac{\int_{0}^{t} K_{\alpha}(t,\tau)\,\bigl[x(t)-x(\tau)\bigr]\,d\tau}
                 {\int_{0}^{t} K_{\beta}(t,\tau)\,(t-\tau)\,d\tau}, & t \in (0,T], \\[12pt]
            \dot{x}(0), & t = 0,
        \end{cases}
    \end{equation}
    where $K_{\alpha}$ and $K_{\beta}$ are the memory kernels corresponding to 
    $\alpha$ and $\beta$ respectively, as defined in \eqref{eq:memory_kernel}.  
    At the initial time $t=0$, the value is taken to be the ordinary derivative 
    $\dot{x}(0)$; this choice reflects the absence of historical information at 
    the starting point.
    
    The mapping 
    \[
    \mathscr{V}_{\alpha,\beta} : \mathcal{C}^{1}(I) \longrightarrow \mathcal{F}(I)
    \]
    is called the \textbf{memory-weighted velocity operator} with memory exponents 
    $\alpha$ and $\beta$.  Here $\mathcal{F}(I)$ denotes the set of all 
    real-valued functions on $I$.
\end{definition}

\begin{remark}[Interpretation and well-definedness] \label{rem:interpretation_well_defined}
    The definition \eqref{eq:velocity_operator} provides a generalized notion of 
    rate of change that incorporates historical information in a weighted manner.  
    
    \textbf{Well-definedness.} For $t \in (0,T]$, the denominator is strictly 
    positive because $K_{\beta}(t,\tau) > 0$ for all $\tau \in [0,t)$ and 
    $t-\tau > 0$ on this interval.  An explicit computation gives
    \[
     \int_{0}^{t} K_{\beta}(t,\tau)(t-\tau)\,d\tau 
    = \frac{1}{\Gamma(\beta(t))} \int_{0}^{t} (t-\tau)^{\beta(t)}\, d\tau 
    = \frac{t^{\beta(t)+1}}{(\beta(t)+1)\Gamma(\beta(t))} > 0.
    \]
    The numerator is also well-defined.  For a fixed $t \in (0,T]$, consider the integrand 
$F_t(\tau) := K_{\alpha}(t,\tau)\bigl[x(t)-x(\tau)\bigr]$ for $\tau \in [0,t)$.  

Since $\alpha(t) \in (0,1]$, the kernel $K_{\alpha}(t,\tau) = (t-\tau)^{\alpha(t)-1}/\Gamma(\alpha(t))$ 
has at most an integrable singularity as $\tau \to t^{-}$; indeed, 
$K_{\alpha}(t,\cdot) \in L^1[0,t]$ because $\alpha(t)-1 > -1$.

The factor $x(t)-x(\tau)$ is continuous on the compact interval $[0,t]$ and therefore 
bounded.  Moreover, the stronger estimate $|x(t)-x(\tau)| \leq L |t-\tau|$ holds for 
some $L > 0$ and all $\tau \in [0,t]$ because $x \in \mathcal{C}^{1}(I)$.  

Consequently, there exists a constant $C>0$ such that for all $\tau$ sufficiently close to $t$,  
$|F_t(\tau)| \leq C (t-\tau)^{\alpha(t)}$.  Since $\alpha(t) > 0$, this bound guarantees 
the integrability of $F_t$ on $[0,t]$.
    
    \textbf{Interpretation.} The operator $\mathscr{V}_{\alpha,\beta}$ provides a non‑local 
generalization of the derivative by forming the quotient of the accumulated historical 
change of $x$ and a weighted measure of elapsed time, both constructed from the 
same history.

\begin{itemize}
    \item \textbf{Numerator.} 
          The quantity $N_{\alpha}(t;x) := \int_{0}^{t} K_{\alpha}(t,\tau)\,\bigl[x(t)-x(\tau)\bigr]\,d\tau$
          accumulates past increments of $x$, weighted by the memory kernel $K_{\alpha}$.  
          The weighting emphasizes recent history more strongly when $\alpha(t)$ is small, 
          whereas a value of $\alpha(t)$ close to $1$ corresponds to a nearly uniform 
          weighting over the entire interval $[0,t]$.

    \item \textbf{Denominator.} 
          The quantity $D_{\beta}(t) := \int_{0}^{t} K_{\beta}(t,\tau)\,(t-\tau)\,d\tau$
          provides a weighted measure of the elapsed time.  It supplies a time scale 
          against which the accumulated change is measured.  An explicit computation gives 
          $D_{\beta}(t) = t^{\beta(t)+1}/[(\beta(t)+1)\Gamma(\beta(t))]$.

    \item \textbf{Two independent exponents.}
      Using separate exponents $\alpha$ and $\beta$ permits independent control 
      over two memory aspects: the weighting of past changes (governed by $\alpha$) 
      and the weighting of elapsed time (governed by $\beta$).  
      This separation provides additional flexibility in modeling systems where 
      the memory effects on the state and on the time scale may differ.
      In the symmetric case $\alpha=\beta$, both aspects obey the same memory law.

    \item \textbf{Initial condition.} 
          At $t=0$ no historical information is available; the definition therefore 
          prescribes $\mathscr{V}_{\alpha,\beta}[x](0) = \dot{x}(0)$, which coincides 
          with the classical derivative at the initial instant.
\end{itemize}
\end{remark}


Having defined the memory-weighted velocity operator, we now obtain a simplified explicit representation that will facilitate the subsequent analysis.

\begin{corollary}[Simplified expression] \label{cor:simplified_form}
    Let $\alpha, \beta \in \mathcal{C}(I)$ be time-varying memory exponents and 
    $x \in \mathcal{C}^{1}(I)$.  For every $t \in (0,T]$,
    \begin{equation} \label{eq:simplified_operator}
        \mathscr{V}_{\alpha,\beta}[x](t) = 
        \frac{(\beta(t)+1)\,\Gamma(\beta(t))}
             {t^{\beta(t)+1}\,\Gamma(\alpha(t))}
        \int_{0}^{t} (t-\tau)^{\alpha(t)-1} 
        \bigl[x(t)-x(\tau)\bigr] \, d\tau.
    \end{equation}
\end{corollary}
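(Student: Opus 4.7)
The plan is to obtain \eqref{eq:simplified_operator} by direct computation from the defining formula \eqref{eq:velocity_operator}, inserting the explicit form of the kernels and evaluating the denominator in closed form. Since the case $t=0$ is outside the scope of the statement, the entire argument takes place at a fixed $t \in (0,T]$, on which $\alpha(t)$ and $\beta(t)$ are genuine constants with respect to the integration variable $\tau$.

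First, I would substitute \eqref{eq:memory_kernel} into the numerator of \eqref{eq:velocity_operator} and factor out $1/\Gamma(\alpha(t))$, obtaining
\[
\int_{0}^{t} K_{\alpha}(t,\tau)\,[x(t)-x(\tau)]\, d\tau
= \frac{1}{\Gamma(\alpha(t))} \int_{0}^{t} (t-\tau)^{\alpha(t)-1}\,[x(t)-x(\tau)]\, d\tau,
\]
which is already the shape needed in \eqref{eq:simplified_operator}. Well-definedness of this integral has been verified in Remark~\ref{rem:interpretation_well_defined}, so no additional integrability discussion is required here.

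Next, I would handle the denominator by the same type of substitution used in the proof of Lemma~\ref{lem:kernel_integral_formula}. Writing $u=t-\tau$,
\[
\int_{0}^{t} K_{\beta}(t,\tau)\,(t-\tau)\, d\tau
= \frac{1}{\Gamma(\beta(t))}\int_{0}^{t}(t-\tau)^{\beta(t)}\,d\tau
= \frac{t^{\beta(t)+1}}{(\beta(t)+1)\,\Gamma(\beta(t))},
\]
which coincides with the explicit formula already recorded in Remark~\ref{rem:interpretation_well_defined}. Since $\beta(t)>0$, the denominator factor $(\beta(t)+1)\Gamma(\beta(t))=\Gamma(\beta(t)+2)/(\beta(t)+1)\cdot(\beta(t)+1)$ is strictly positive, so inverting it is legitimate.

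Finally, forming the quotient of the two expressions above and rearranging yields \eqref{eq:simplified_operator} after a single algebraic step. The main obstacle is essentially absent: the argument is a short computation, and the only item that deserves care is ensuring that the factors depending on $t$ (in particular $\alpha(t)$ and $\beta(t)$) are correctly treated as constants when integrating in $\tau$, which is justified by the pointwise evaluation structure built into Definition~\ref{def:memory_kernel}.
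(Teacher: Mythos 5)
Your proof is correct and follows essentially the same route as the paper: substitute the kernel definitions, evaluate the denominator in closed form (as already recorded in Remark~\ref{rem:interpretation_well_defined}), and divide. One tiny slip worth fixing: the parenthetical identity $(\beta(t)+1)\Gamma(\beta(t))=\Gamma(\beta(t)+2)/(\beta(t)+1)\cdot(\beta(t)+1)$ is incorrect (the right side equals $\beta(t)(\beta(t)+1)\Gamma(\beta(t))$), but this is irrelevant since positivity of $(\beta(t)+1)\Gamma(\beta(t))$ is immediate from positivity of both factors.
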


\begin{proof}
    From Definition~\ref{def:velocity_operator}, for $t \in (0,T]$,
    \[
    \mathscr{V}_{\alpha,\beta}[x](t) = 
    \frac{\displaystyle \int_{0}^{t} K_{\alpha}(t,\tau)\bigl[x(t)-x(\tau)\bigr] \, d\tau}
         {\displaystyle \int_{0}^{t} K_{\beta}(t,\tau)\,(t-\tau) \, d\tau}.
    \]
    The denominator has already been evaluated explicitly (see Remark~\ref{rem:interpretation_well_defined}):
    \[
    \int_{0}^{t} K_{\beta}(t,\tau)\,(t-\tau) \, d\tau 
    = \frac{t^{\beta(t)+1}}{(\beta(t)+1)\Gamma(\beta(t))}.
    \]
    Substituting this expression together with the explicit form 
    $K_{\alpha}(t,\tau) = (t-\tau)^{\alpha(t)-1}/\Gamma(\alpha(t))$ yields
    \begin{align*}
        \mathscr{V}_{\alpha,\beta}[x](t) 
        &= \frac{\displaystyle \frac{1}{\Gamma(\alpha(t))} 
                 \int_{0}^{t} (t-\tau)^{\alpha(t)-1} 
                 \bigl[x(t)-x(\tau)\bigr] \, d\tau}
               {\displaystyle \frac{t^{\beta(t)+1}}{(\beta(t)+1)\Gamma(\beta(t))}} \\[4pt]
        &= \frac{(\beta(t)+1)\,\Gamma(\beta(t))}
                 {t^{\beta(t)+1}\,\Gamma(\alpha(t))}
           \int_{0}^{t} (t-\tau)^{\alpha(t)-1} 
           \bigl[x(t)-x(\tau)\bigr] \, d\tau,
    \end{align*}
    which coincides precisely with \eqref{eq:simplified_operator}.
\end{proof}

\begin{proposition}[Linearity of the memory-weighted velocity operator] \label{prop:linearity}
    Let $\alpha, \beta \in \mathcal{C}(I)$ be time-varying memory exponents.  
    For any functions $x_1, x_2 \in \mathcal{C}^{1}(I)$ and any scalars 
    $c_1, c_2 \in \mathbb{R}$,
    \begin{equation} \label{eq:linearity}
        \mathscr{V}_{\alpha,\beta}[c_1 x_1 + c_2 x_2] = 
        c_1 \mathscr{V}_{\alpha,\beta}[x_1] + c_2 \mathscr{V}_{\alpha,\beta}[x_2].
    \end{equation}
    That is, for fixed exponents $\alpha$ and $\beta$, the mapping 
    $\mathscr{V}_{\alpha,\beta} : \mathcal{C}^{1}(I) \to \mathcal{F}(I)$ is a linear operator.
\end{proposition}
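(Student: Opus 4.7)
The plan is to reduce the claim to the linearity of the classical derivative (for $t=0$) and the linearity of the Lebesgue integral (for $t>0$), exploiting the fact that the denominator in \eqref{eq:velocity_operator} is independent of $x$. Since $\mathcal{C}^{1}(I)$ is a real vector space, $c_1 x_1 + c_2 x_2 \in \mathcal{C}^{1}(I)$, so both sides of \eqref{eq:linearity} define bona fide functions on $I$ via Definition~\ref{def:velocity_operator}.

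For $t = 0$, the definition prescribes $\mathscr{V}_{\alpha,\beta}[x](0) = \dot{x}(0)$, so the identity at this point is just the sum/scalar rule for the ordinary derivative, namely $(c_1 x_1 + c_2 x_2)^{\cdot}(0) = c_1 \dot{x}_1(0) + c_2 \dot{x}_2(0)$.

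For $t \in (0,T]$, I would invoke the simplified representation \eqref{eq:simplified_operator} from Corollary~\ref{cor:simplified_form}. The key structural observation is that the prefactor
\[
\frac{(\beta(t)+1)\,\Gamma(\beta(t))}{t^{\beta(t)+1}\,\Gamma(\alpha(t))}
\]
depends only on $t$, $\alpha(t)$ and $\beta(t)$, not on the argument function, so the entire $x$-dependence of $\mathscr{V}_{\alpha,\beta}[x](t)$ is carried by the integral $\int_{0}^{t} (t-\tau)^{\alpha(t)-1}\bigl[x(t)-x(\tau)\bigr]\,d\tau$. Using the pointwise identity $(c_1 x_1 + c_2 x_2)(t) - (c_1 x_1 + c_2 x_2)(\tau) = c_1[x_1(t)-x_1(\tau)] + c_2[x_2(t)-x_2(\tau)]$ together with the linearity of the Lebesgue integral, this single integral splits into the corresponding linear combination of the two individual integrals; multiplying through by the common $x$-independent prefactor then yields \eqref{eq:linearity} on $(0,T]$.

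There is no substantive obstacle: the proof is a routine verification once Corollary~\ref{cor:simplified_form} is available. The only point deserving a brief remark is the legitimacy of splitting the integral, which rests on each of the three integrals (for $c_1 x_1 + c_2 x_2$, $x_1$, and $x_2$ separately) being absolutely convergent; this follows immediately from the Lipschitz-type bound $|x(t)-x(\tau)| \le \|\dot{x}\|_{\infty}(t-\tau)$ combined with $\alpha(t) > 0$, exactly as established in the well-definedness discussion of Remark~\ref{rem:interpretation_well_defined}.
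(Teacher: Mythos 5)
Your proof is correct and follows essentially the same route as the paper's: reduce to the simplified representation of Corollary~\ref{cor:simplified_form}, observe that the prefactor is independent of $x$, split the integral by linearity, and handle $t=0$ via linearity of the classical derivative. Your added remark on absolute convergence justifying the split is a small extra care that the paper leaves implicit (relying on Remark~\ref{rem:interpretation_well_defined}), but the argument is the same.
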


\begin{proof}
    Set $x = c_1 x_1 + c_2 x_2$.  For $t \in (0,T]$, employing the simplified expression 
    \eqref{eq:simplified_operator},
    \begin{align*}
        \mathscr{V}_{\alpha,\beta}[x](t) 
        &= \frac{(\beta(t)+1)\,\Gamma(\beta(t))}
                 {t^{\beta(t)+1}\,\Gamma(\alpha(t))}
           \int_{0}^{t} (t-\tau)^{\alpha(t)-1} 
           \bigl[x(t)-x(\tau)\bigr] \, d\tau \\
        &= \frac{(\beta(t)+1)\,\Gamma(\beta(t))}
                 {t^{\beta(t)+1}\,\Gamma(\alpha(t))}
           \int_{0}^{t} (t-\tau)^{\alpha(t)-1} 
           \bigl[c_1\bigl(x_1(t)-x_1(\tau)\bigr) 
               + c_2\bigl(x_2(t)-x_2(\tau)\bigr)\bigr] \, d\tau.
    \end{align*}
    By the linearity of the Riemann integral,
    \begin{align*}
        \int_{0}^{t} (t-\tau)^{\alpha(t)-1} & 
        \bigl[c_1\bigl(x_1(t)-x_1(\tau)\bigr) 
            + c_2\bigl(x_2(t)-x_2(\tau)\bigr)\bigr] \, d\tau \\
        &= c_1 \int_{0}^{t} (t-\tau)^{\alpha(t)-1} 
                 \bigl[x_1(t)-x_1(\tau)\bigr] \, d\tau \\
        &\quad + c_2 \int_{0}^{t} (t-\tau)^{\alpha(t)-1} 
                 \bigl[x_2(t)-x_2(\tau)\bigr] \, d\tau.
    \end{align*}
    Substituting this decomposition and invoking \eqref{eq:simplified_operator} once more yields
    \[
    \mathscr{V}_{\alpha,\beta}[x](t) = 
    c_1 \mathscr{V}_{\alpha,\beta}[x_1](t) + c_2 \mathscr{V}_{\alpha,\beta}[x_2](t).
    \]
    
    At the initial instant $t=0$, the equality follows directly from the definition:
    \[
    \mathscr{V}_{\alpha,\beta}[x](0) = \dot{x}(0) 
    = c_1 \dot{x}_1(0) + c_2 \dot{x}_2(0)
    = c_1 \mathscr{V}_{\alpha,\beta}[x_1](0) + c_2 \mathscr{V}_{\alpha,\beta}[x_2](0).
    \]
    
    Since the identity holds for every $t \in I$, the linearity relation \eqref{eq:linearity} 
    is established.
\end{proof}

\begin{remark}[Origin of the linearity] 
    \label{rem:linearity_consequence}
    Proposition~\ref{prop:linearity} confirms that for prescribed memory exponents 
    $\alpha$ and $\beta$, the mapping $\mathscr{V}_{\alpha,\beta} : \mathcal{C}^{1}(I) \to \mathcal{F}(I)$ 
    constitutes a linear operator.  This linear character originates directly from the 
    integral structure of the operator, which consists of a pointwise multiplication 
    by a coefficient depending solely on $t$, followed by a linear integral transform 
    applied to the difference $x(t)-x(\cdot)$.
\end{remark}

\subsection{Continuous dependence on the memory exponents}\label{subsec:continuous_dependence}

After introducing the memory-weighted velocity operator $\mathscr{V}_{\alpha,\beta}$ (Definition~\ref{def:velocity_operator}) and verifying its elementary properties such as linearity (Proposition~\ref{prop:linearity}), we now investigate how the operator behaves when the memory exponents are provided as sequences $\{\alpha_n\}, \{\beta_n\}$ rather than fixed functions.  

The following theorem demonstrates that if these sequences converge uniformly to limits $\alpha$ and $\beta$, then the corresponding operator sequence $\mathscr{V}_{\alpha_n,\beta_n}$ converges locally uniformly to $\mathscr{V}_{\alpha,\beta}$.  This result establishes that the velocity depends continuously on the memory exponents with respect to the topology of uniform convergence.

\begin{theorem}[Continuous dependence of the memory-weighted velocity operator on the memory exponents]\label{thm:continuous_dependence}
Let $\{\alpha_n\}_{n\geq1}, \{\beta_n\}_{n\geq1}, \alpha, \beta$ be time-varying memory exponent 
functions; that is, each function belongs to $\mathcal{C}(I)$ and takes values in $(0,1]$.  
Assume that:
\begin{enumerate}
    \item $\alpha_n$ and $\beta_n$ converge uniformly on $I$ to $\alpha$ and $\beta$, respectively; i.e.,
    \[
    \lim_{n\to\infty} \sup_{t\in I} |\alpha_n(t)-\alpha(t)| = 0,\qquad
    \lim_{n\to\infty} \sup_{t\in I} |\beta_n(t)-\beta(t)| = 0.
    \]
    \item There exist constants $0 < m \leq M \leq 1$ such that for all $n\geq1$ and all $t\in I$,
    \[
    m \leq \alpha_n(t), \beta_n(t), \alpha(t), \beta(t) \leq M.
    \]
\end{enumerate}
Then for any fixed function $x \in \mathcal{C}^1(I)$, the memory-weighted velocity operators converge 
locally uniformly on $(0,T]$: for every $\varepsilon \in (0,T)$,
\begin{equation}\label{eq:conv_main}
\sup_{t\in[\varepsilon,T]} 
\bigl| \mathscr{V}_{\alpha_n,\beta_n}[x](t) - \mathscr{V}_{\alpha,\beta}[x](t) \bigr| \to 0 \quad (n\to\infty).
\end{equation}
Equivalently,
\[
\mathscr{V}_{\alpha_n,\beta_n}[x] \rightrightarrows \mathscr{V}_{\alpha,\beta}[x] \quad \text{on } [\varepsilon,T] \ (n\to\infty).
\]
\end{theorem}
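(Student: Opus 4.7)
The plan is to exploit the simplified representation from Corollary~\ref{cor:simplified_form} by decomposing the velocity operator as a product of a \emph{prefactor} and an \emph{integral factor}. Writing
\[
P_{\alpha,\beta}(t) := \frac{(\beta(t)+1)\,\Gamma(\beta(t))}{t^{\beta(t)+1}\,\Gamma(\alpha(t))}, \qquad I_{\alpha}(t) := \int_{0}^{t}(t-\tau)^{\alpha(t)-1}\bigl[x(t)-x(\tau)\bigr]\,d\tau,
\]
we have $\mathscr{V}_{\alpha,\beta}[x](t)=P_{\alpha,\beta}(t)\,I_{\alpha}(t)$ on $(0,T]$, and the telescoping identity
\[
\mathscr{V}_{\alpha_n,\beta_n}[x]-\mathscr{V}_{\alpha,\beta}[x] = \bigl(P_{\alpha_n,\beta_n}-P_{\alpha,\beta}\bigr)I_{\alpha_n} + P_{\alpha,\beta}\bigl(I_{\alpha_n}-I_{\alpha}\bigr)
\]
reduces the problem to two uniform-convergence claims together with a uniform boundedness estimate for $I_{\alpha_n}$ and the continuous boundedness of $P_{\alpha,\beta}$ on $[\varepsilon,T]$.

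First I would establish a crude uniform bound on the integral factor that is valid on all of $I$. Using $|x(t)-x(\tau)|\leq\|\dot{x}\|_{\infty}(t-\tau)$ (which holds because $x\in\mathcal{C}^{1}(I)$) yields the bound $|I_{\alpha_n}(t)|\leq\|\dot{x}\|_{\infty}\,T^{M+1}/(m+1)$, uniform in both $n$ and $t$. Next I would show $P_{\alpha_n,\beta_n}\rightrightarrows P_{\alpha,\beta}$ on $[\varepsilon,T]$: the maps $u\mapsto \Gamma(u)$ and $u\mapsto (u+1)\Gamma(u)$ are continuous, hence uniformly continuous, on $[m,1]$, so the Gamma-type factors converge uniformly; for the factor $t^{\beta_n(t)+1}$, writing it as $\exp\bigl((\beta_n(t)+1)\log t\bigr)$ and using that $\log t$ is bounded on $[\varepsilon,T]$ while $\exp$ is uniformly continuous on compact sets gives uniform convergence. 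On $[\varepsilon,T]$ the quantity $t^{\beta_n(t)+1}$ is additionally bounded below by $\min(\varepsilon^{M+1},\varepsilon^{m+1})>0$, so quotient formation preserves uniform convergence; this is precisely where the restriction $t\geq\varepsilon$ becomes essential. The product-convergence principle of Appendix~\ref{app:uniform_convergence} then assembles these pieces into $P_{\alpha_n,\beta_n}\rightrightarrows P_{\alpha,\beta}$ on $[\varepsilon,T]$.

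The crux of the argument, and the main technical obstacle, is the uniform convergence $I_{\alpha_n}\rightrightarrows I_{\alpha}$ on $I$. To estimate the pointwise difference of integrands I would apply the mean value theorem to $u\mapsto (t-\tau)^{u-1}=\exp\bigl((u-1)\log(t-\tau)\bigr)$, which yields
\[
\bigl|(t-\tau)^{\alpha_n(t)-1}-(t-\tau)^{\alpha(t)-1}\bigr| \leq |\alpha_n(t)-\alpha(t)|\,|\log(t-\tau)|\,(t-\tau)^{\xi-1},
\]
where $\xi\in[m,M]$ lies between $\alpha_n(t)$ and $\alpha(t)$. Multiplying by the Lipschitz factor $|x(t)-x(\tau)|\leq\|\dot{x}\|_{\infty}(t-\tau)$ produces an integrand $|\log(t-\tau)|\,(t-\tau)^{\xi}$ with $\xi\geq m>0$, which admits a uniform-in-$t$-and-$n$ majorant of the form $|\log s|\,(s^{m}+s^{M})$. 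The log--power inequalities collected in Appendix~\ref{app:log_power_inequalities} supply the requisite integrability and produce a constant $C=C(m,M,T)$ for which
\[
\sup_{t\in I}\bigl|I_{\alpha_n}(t)-I_{\alpha}(t)\bigr| \leq C\,\|\dot{x}\|_{\infty}\,\|\alpha_n-\alpha\|_{\infty}\longrightarrow 0.
\]

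Substituting the three uniform estimates into the telescoping identity, and using that $P_{\alpha,\beta}$ is bounded on $[\varepsilon,T]$ by continuity, yields \eqref{eq:conv_main}. The principal difficulty is the coupling between the logarithmic factor $|\log(t-\tau)|$ and the near-singular power $(t-\tau)^{\alpha_n(t)-1}$ in the MVT estimate: converting it into a bound that is integrable uniformly in the variable exponent is precisely what motivates the inequalities of Appendix~\ref{app:log_power_inequalities}, and is what allows the final rate of convergence to depend on the exponents only through the a priori bounds $m$ and $M$.
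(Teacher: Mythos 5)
Your proposal follows essentially the same route as the paper's proof: the same prefactor/integral decomposition from Corollary~\ref{cor:simplified_form}, uniform convergence of the prefactor via uniform continuity of $\Gamma$ and of $t\mapsto t^{a}$ on $[\varepsilon,T]\times[m+1,M+1]$ (with the lower bound $\min(\varepsilon^{M+1},\varepsilon^{m+1})$ justifying the quotient), the mean-value theorem in the exponent to extract the factor $|\alpha_n(t)-\alpha(t)|\,|\log(t-\tau)|\,(t-\tau)^{\xi}$ with $\xi\in[m,M]$, the log--power inequalities of Appendix~\ref{app:log_power_inequalities} to bound this uniformly, and a telescoping product identity to conclude. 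The only minor slip is the claimed uniform bound $|I_{\alpha_n}(t)|\le \|\dot{x}\|_\infty T^{M+1}/(m+1)$, which fails when $T<1$ (there $t^{\alpha_n(t)+1}$ can exceed $T^{M+1}$); the correct bound should involve $\max\{1,T^{M+1}\}$, as in the paper's Step~3.7, but this does not affect the structure of the argument.
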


\begin{remark}
The convergence stated in Theorem~\ref{thm:continuous_dependence} is sometimes described as 
``locally uniform convergence on $(0,T]$'' or ``uniform convergence on every closed subinterval 
$[\varepsilon, T]$ with $\varepsilon > 0$.''  It reflects the fact that the operator 
$\mathscr{V}_{\alpha,\beta}[x]$ depends continuously on the memory exponents in a topology 
that respects the uniform convergence of the exponents.
\end{remark}

\begin{proof}
We divide the proof into five systematic steps.  Throughout we denote $I = [0,T]$ and fix 
an arbitrary $\varepsilon \in (0,T)$.  All statements concerning uniform convergence in the sequel 
refer to the closed subinterval $[\varepsilon, T]$.

\vspace{0.5em}
\noindent
\textbf{Step 1: Simplified representation and notation.}
From Corollary~\ref{cor:simplified_form}, for any $t \in (0,T]$ we have
\begin{equation} \label{eq:V_simplified_start}
\mathscr{V}_{\alpha,\beta}[x](t) = 
\frac{(\beta(t)+1)\Gamma(\beta(t))}
     {t^{\beta(t)+1}\Gamma(\alpha(t))}
\int_0^t (t-\tau)^{\alpha(t)-1} \bigl[x(t)-x(\tau)\bigr] \, d\tau.
\end{equation}
To streamline the presentation we introduce, for $t \in (0,T]$ (the condition $t>0$ ensures 
that the denominator $t^{\beta(t)+1}$ is non‑zero),
\[
A_n(t) := \frac{(\beta_n(t)+1)\Gamma(\beta_n(t))}
               {t^{\beta_n(t)+1}\Gamma(\alpha_n(t))},\qquad
A(t)   := \frac{(\beta(t)+1)\Gamma(\beta(t))}
               {t^{\beta(t)+1}\Gamma(\alpha(t))},
\]
and
\[
I_n(t) := \int_0^t (t-\tau)^{\alpha_n(t)-1} \bigl[x(t)-x(\tau)\bigr] \, d\tau,\qquad
I(t)   := \int_0^t (t-\tau)^{\alpha(t)-1} \bigl[x(t)-x(\tau)\bigr] \, d\tau.
\]
Thus for every $t \in (0,T]$,
\[
\mathscr{V}_{\alpha_n,\beta_n}[x](t) = A_n(t) I_n(t),\qquad
\mathscr{V}_{\alpha,\beta}[x](t) = A(t) I(t).
\]
Our aim is to show that the product $A_n I_n$ converges uniformly to $A I$ on $[\varepsilon, T]$.

\vspace{0.5em}
\noindent
\textbf{Step 2: Uniform convergence of the coefficient functions $A_n(t)$.}

\textbf{2.1. Handling the Gamma function terms.}  
By hypothesis (2) all function values lie in the compact interval $[m,M] \subset (0,1]$.  
The Gamma function $\Gamma : (0,\infty) \to \mathbb{R}$ is continuous on $[m,M]$ and therefore 
uniformly continuous on this compact set.  Consequently, for any prescribed $\vartheta_1 > 0$ there exists 
$\delta_1 > 0$ such that for all $z_1, z_2 \in [m,M]$ with $|z_1 - z_2| < \delta_1$, 
$|\Gamma(z_1) - \Gamma(z_2)| < \vartheta_1$.

Because $\alpha_n \rightrightarrows \alpha$ on $I$, we can select $N^{(1)}$ such that for all $n > N^{(1)}$,
\[
\sup_{t\in I} |\alpha_n(t) - \alpha(t)| < \delta_1.
\]
For any $t \in [\varepsilon, T]$ the numbers $\alpha_n(t)$ and $\alpha(t)$ belong to $[m,M]$ 
and satisfy $|\alpha_n(t) - \alpha(t)| < \delta_1$; hence
\[
|\Gamma(\alpha_n(t)) - \Gamma(\alpha(t))| < \vartheta_1.
\]
This establishes the uniform convergence $\Gamma \circ \alpha_n \rightrightarrows \Gamma \circ \alpha$ on $[\varepsilon, T]$.  
An identical argument yields $\Gamma \circ \beta_n \rightrightarrows \Gamma \circ \beta$ on $[\varepsilon, T]$.

\textbf{2.2. Handling the power terms $t^{\beta_n(t)+1}$.}  
Since $0 < m \leq M \leq 1$ by hypothesis, we have $\beta_n(t)+1 \in [m+1, M+1] \subset [1,2]$.  
Consider the two‑variable function
\[
\Phi(t,a) = t^a = e^{a\ln t}, \qquad (t,a) \in [\varepsilon, T] \times [m+1, M+1].
\]
Because $t \geq \varepsilon > 0$, $\ln t$ is bounded, so $\Phi$ is continuous on the compact 
set $[\varepsilon, T] \times [m+1, M+1]$ and consequently uniformly continuous there.

Given $\vartheta_2 > 0$, uniform continuity provides a $\delta_2 > 0$ such that whenever 
$|t_1 - t_2| < \delta_2$ and $|a_1 - a_2| < \delta_2$,
\[
|\Phi(t_1, a_1) - \Phi(t_2, a_2)| < \vartheta_2.
\]

Fix $t \in [\varepsilon, T]$ and take $t_1 = t_2 = t$, 
$a_1 = \beta_n(t)+1$, $a_2 = \beta(t)+1$.  
Since $\beta_n \rightrightarrows \beta$ on $I$, there exists $N^{(2)}$ such that for all $n > N^{(2)}$,
\[
\sup_{t\in [\varepsilon,T]} |\beta_n(t) - \beta(t)| < \delta_2,
\]
and consequently $|a_1 - a_2| < \delta_2$.  Then
\[
|t^{\beta_n(t)+1} - t^{\beta(t)+1}| = |\Phi(t,\beta_n(t)+1) - \Phi(t,\beta(t)+1)| < \vartheta_2.
\]
Thus $t^{\beta_n(\cdot)+1} \rightrightarrows t^{\beta(\cdot)+1}$ on $[\varepsilon, T]$.

\textbf{2.3. Convergence of the individual factors of $A_n(t)$.}  
We decompose $A_n(t)$ into four factors:
\[
A_n(t) = \underbrace{(\beta_n(t)+1)}_{B_n(t)}\,
        \underbrace{\Gamma(\beta_n(t))}_{G_n(t)}\,
        \underbrace{\frac{1}{t^{\beta_n(t)+1}}}_{P_n(t)}\,
        \underbrace{\frac{1}{\Gamma(\alpha_n(t))}}_{H_n(t)}.
\]
Define $B(t), G(t), P(t), H(t)$ analogously.

\begin{itemize}
    \item $B_n(t) = \beta_n(t)+1$:  From $\beta_n \rightrightarrows \beta$ and the continuity 
          of addition, $B_n \rightrightarrows B$ on $[\varepsilon, T]$.
    
    \item $G_n(t) = \Gamma(\beta_n(t))$:  Step 2.1 yields $\Gamma\circ\beta_n \rightrightarrows \Gamma\circ\beta$, 
          i.e., $G_n \rightrightarrows G$ on $[\varepsilon, T]$.
    
    \item $P_n(t) = \dfrac{1}{t^{\beta_n(t)+1}}$:  Consider its reciprocal $Q_n(t) = t^{\beta_n(t)+1}$.
          Step 2.2 has established the uniform convergence $Q_n \rightrightarrows Q$ on $[\varepsilon, T]$, 
          where $Q(t)=t^{\beta(t)+1}$.
          
          On the interval $[\varepsilon, T]$, because $\beta(t) \in [m, M]$, we have the uniform positive lower bound
          \[
          Q(t) = t^{\beta(t)+1} \geq \min\{\varepsilon^{M+1}, \varepsilon^{m+1}\} =: \eta_\varepsilon > 0.
          \]
          By the uniform convergence $Q_n \rightrightarrows Q$, there exists $N^{(3)}$ such that for all $n > N^{(3)}$,
          \[
          \sup_{t\in [\varepsilon,T]} |Q_n(t) - Q(t)| < \frac{\eta_\varepsilon}{2}.
          \]
          Consequently, for any $t \in [\varepsilon, T]$ and any $n > N^{(3)}$,
          \[
          Q_n(t) \geq Q(t) - |Q_n(t)-Q(t)| > \eta_\varepsilon - \frac{\eta_\varepsilon}{2} = \frac{\eta_\varepsilon}{2} > 0.
          \]
          
          \textbf{Upper bound estimation.}  To obtain a common compact interval containing all values of 
          $Q_n(t)$ (for $n > N^{(3)}$) and $Q(t)$, we observe:
          \[
          Q(t) = t^{\beta(t)+1} \leq 
          \begin{cases}
          1, & \text{if } T \leq 1 \ (\text{since } t \leq T \leq 1 \text{ and } \beta(t)+1 \geq 1), \\[4pt]
          T^{M+1}, & \text{if } T > 1 \ (\text{since } t \leq T \text{ and } \beta(t)+1 \leq M+1).
          \end{cases}
          \]
          Therefore
          \[
          Q(t) \leq \max\{1, T^{M+1}\} \quad \text{for all } t \in [\varepsilon, T].
          \]
          For $n > N^{(3)}$, using the uniform convergence estimate,
          \[
          Q_n(t) < Q(t) + \frac{\eta_\varepsilon}{2} \leq \max\{1, T^{M+1}\} + \frac{\eta_\varepsilon}{2}.
          \]
          
          Hence, for all $n > N^{(3)}$ and all $t \in [\varepsilon, T]$, the values $Q_n(t)$ and $Q(t)$ lie in the compact interval
          \[
          J_\varepsilon := \Bigl[\,\frac{\eta_\varepsilon}{2},\; \max\{1, T^{M+1}\} + \frac{\eta_\varepsilon}{2}\Bigr].
          \]
          
          The function $f(y) = 1/y$ is continuous on the compact interval $J_\varepsilon$ and therefore uniformly continuous there.
          A uniformly continuous function preserves uniform convergence under composition, which yields
          \[
          P_n = f \circ Q_n \rightrightarrows f \circ Q = P \quad \text{on } [\varepsilon, T] \ (n \to \infty).
          \]
    
    \item $H_n(t) = \dfrac{1}{\Gamma(\alpha_n(t))}$:  Step 2.1 gives $\Gamma\circ\alpha_n \rightrightarrows \Gamma\circ\alpha$.  
          Define
          \[
          \Gamma_{\min} := \min_{z \in [m,M]} \Gamma(z) > 0, \qquad
          \Gamma_{\max} := \max_{z \in [m,M]} \Gamma(z).
          \]
          Since $\alpha_n(t), \alpha(t) \in [m,M]$ for all $t \in [\varepsilon, T]$, we have
          \[
          \Gamma(\alpha_n(t)), \Gamma(\alpha(t)) \in [\Gamma_{\min}, \Gamma_{\max}] \quad \text{for all } t \in [\varepsilon, T].
          \]
          The function $g(y) = 1/y$ is uniformly continuous on the compact interval $[\Gamma_{\min}, \Gamma_{\max}]$;  
          consequently, by composition with the uniformly convergent sequence $\Gamma\circ\alpha_n$, we obtain
          \[
          H_n = g \circ (\Gamma\circ\alpha_n) \rightrightarrows g \circ (\Gamma\circ\alpha) = H \quad \text{on } [\varepsilon, T].
          \]
\end{itemize}

\textbf{2.4. Uniform convergence and boundedness of $A_n(t)$.}  
Each of the four factor sequences $B_n, G_n, P_n, H_n$ converges uniformly on $[\varepsilon, T]$, as established in Step 2.3.  
Moreover, these sequences are uniformly bounded on $[\varepsilon, T]$:

Using the constants $\Gamma_{\min}$ and $\Gamma_{\max}$ introduced in the analysis of $H_n(t)$, we have the following uniform bounds:
\begin{itemize}
    \item $|B_n(t)| = \beta_n(t)+1 \leq M+1$  (since $\beta_n(t) \leq M$);
    \item $|G_n(t)| = \Gamma(\beta_n(t)) \leq \Gamma_{\max}$  (by definition of $\Gamma_{\max}$);
    \item $|P_n(t)| = t^{-\beta_n(t)-1} \leq p_\varepsilon := \max\{\varepsilon^{-M-1},\varepsilon^{-m-1}\}$  
          (because $t \geq \varepsilon$ and $\beta_n(t) \in [m,M]$);
    \item $|H_n(t)| = \dfrac{1}{\Gamma(\alpha_n(t))} \leq \dfrac{1}{\Gamma_{\min}}$  
          (since $\Gamma(\alpha_n(t)) \geq \Gamma_{\min} > 0$).
\end{itemize}

Define the constant
\[
C_A := (M+1) \cdot \frac{\Gamma_{\max}}{\Gamma_{\min}} \cdot p_\varepsilon.
\]

\textbf{Boundedness of both $A_n(t)$ and the limit $A(t)$.}  
For every $n\geq 1$ and every $t\in[\varepsilon,T]$, the product bound yields
\[
|A_n(t)| = |B_n(t)G_n(t)P_n(t)H_n(t)| \leq C_A.
\]

Now, by Lemma~\ref{lem:product_uniform_convergence} (specifically, Step 1 of its proof), the uniform boundedness 
of each factor sequence together with its uniform convergence implies that the corresponding limit function 
satisfies the same bound.  Consequently, for the limit functions $B(t), G(t), P(t), H(t)$ we have:
\[
|B(t)| \leq M+1,\quad |G(t)| \leq \Gamma_{\max},\quad |P(t)| \leq p_\varepsilon,\quad |H(t)| \leq \frac{1}{\Gamma_{\min}}.
\]
Therefore,
\[
|A(t)| = |B(t)G(t)P(t)H(t)| \leq C_A \quad \text{for all } t \in [\varepsilon, T].
\]

Thus both $\{A_n\}$ and $A$ are uniformly bounded by the same constant $C_A$ on $[\varepsilon, T]$.

Finally, Lemma~\ref{lem:product_uniform_convergence} (applied with $k=4$ to the factor sequences 
$B_n, G_n, P_n, H_n$) asserts that the product of finitely many uniformly convergent, 
uniformly bounded sequences converges uniformly.  Hence,
\begin{equation} \label{eq:A_n_uniform_convergence}
A_n \rightrightarrows A \quad \text{on } [\varepsilon, T],
\end{equation}
or equivalently,
\[
\lim_{n\to\infty} \sup_{t\in[\varepsilon,T]} |A_n(t) - A(t)| = 0.
\]

\vspace{0.5em}
\noindent
\textbf{Step 3: Uniform convergence of the integral parts $I_n(t)$.}

\textbf{3.1. Lipschitz estimate.}  
Since $x \in \mathcal{C}^1(I)$ and $I$ is compact, $x$ is Lipschitz continuous on $I$.  
Let $L_x > 0$ be a Lipschitz constant for $x$, so that for all $t,\tau \in I$,
\[
|x(t)-x(\tau)| \leq L_x |t-\tau|.
\]

\textbf{3.2. Integral representation of the difference.}  
For $t \in [\varepsilon, T]$,
\begin{align*}
|I_n(t)-I(t)| 
&\leq \int_0^t \bigl| (t-\tau)^{\alpha_n(t)-1} - (t-\tau)^{\alpha(t)-1} \bigr|
       |x(t)-x(\tau)| \, d\tau \\
&\leq L_x \int_0^t \bigl| (t-\tau)^{\alpha_n(t)-1} - (t-\tau)^{\alpha(t)-1} \bigr|
       (t-\tau) \, d\tau.
\end{align*}
Perform the change of variable $u = t-\tau$ (hence $du = -d\tau$).  
When $\tau$ runs from $0$ to $t$, $u$ runs from $t$ to $0$; reversing the integration limits removes the minus sign, giving
\[
|I_n(t)-I(t)| \leq L_x \int_0^t \bigl| u^{\alpha_n(t)-1} - u^{\alpha(t)-1} \bigr| u \, du.
\]

\textbf{3.3. Algebraic simplification.}  
Observe the elementary identity
\[
u^{\alpha_n(t)} - u^{\alpha(t)} = u\bigl(u^{\alpha_n(t)-1} - u^{\alpha(t)-1}\bigr),
\]
which implies
\[
\bigl| u^{\alpha_n(t)-1} - u^{\alpha(t)-1} \bigr| \cdot u = \bigl| u^{\alpha_n(t)} - u^{\alpha(t)} \bigr|.
\]
Therefore
\begin{equation} \label{eq:I_diff_simplified}
|I_n(t)-I(t)| \leq L_x \int_0^t \bigl| u^{\alpha_n(t)} - u^{\alpha(t)} \bigr| \, du.
\end{equation}

\textbf{Remark (behavior at the left endpoint).}  
Because $\alpha_n(t),\alpha(t) \geq m > 0$, we have $\lim_{u\to 0^+} u^{\alpha_n(t)} = 0$ and 
$\lim_{u\to 0^+} u^{\alpha(t)} = 0$.  Defining the integrand at $u=0$ to be $0$ renders the function 
$u \mapsto u^{\alpha_n(t)} - u^{\alpha(t)}$ continuous on the whole interval $[0,t]$.  
Consequently the integral is well defined in both the Riemann and Lebesgue senses, and the point 
$u=0$ does not affect its value.

\textbf{3.4. Application of the mean‑value theorem.}  
For fixed $u > 0$ consider the function $h(p)=u^p = e^{p\ln u}$.  Its derivative is 
$h'(p)=u^p \ln u$.  By the mean‑value theorem, for any real numbers $p_1, p_2$ there exists 
$\xi$ between $p_1$ and $p_2$ such that
\begin{equation} \label{eq:mean_value_power}
u^{p_1} - u^{p_2} = h(p_1) - h(p_2) = h'(\xi)(p_1-p_2) = u^{\xi} \ln u \,(p_1-p_2).
\end{equation}
Take $p_1 = \alpha_n(t)$ and $p_2 = \alpha(t)$.  Then $\xi = \xi(u,t,n)$ satisfies
\[
\min\{\alpha_n(t),\alpha(t)\} \leq \xi \leq \max\{\alpha_n(t),\alpha(t)\} \subset [m,M],
\]
and consequently
\[
\bigl| u^{\alpha_n(t)} - u^{\alpha(t)} \bigr| 
= u^{\xi} |\ln u| \, |\alpha_n(t)-\alpha(t)|.
\]
\textbf{3.5. Uniform bound for the factor \(u^{\xi}|\ln u|\) on the region \([0,T]\times[m,M]\).}  
Define
\[
F(u,\xi) = 
\begin{cases}
u^{\xi} |\ln u|, & u > 0,\\
0, & u = 0,
\end{cases}
\qquad (u,\xi) \in D := [0,T]\times[m,M].
\]
We show that \(F\) is continuous and bounded on the compact set \(D\).

\textit{Continuity at \(u=0\):}  By Lemma~\ref{lem:log_power}, for any \(\delta > 0\) we have 
\(|\ln u| \leq \frac{1}{\delta e} u^{-\delta}\).  Taking \(\delta = \xi/2 > 0\) (since \(\xi \geq m > 0\)),
\[
u^{\xi} |\ln u| \leq u^{\xi} \cdot \frac{1}{\frac{\xi}{2} e} u^{-\xi/2} 
                 = \frac{2}{\xi e} u^{\xi/2}.
\]
Because \(\xi/2 > 0\), \(\lim_{u\to 0^+} u^{\xi/2} = 0\); hence \(\lim_{u\to 0^+} u^{\xi}|\ln u| = 0\).  
Thus the definition \(F(0,\xi)=0\) makes \(F\) continuous at \(u=0\).

Now we estimate \(F\) separately for \(u \in (0,1]\) and for \(u \in [1,T]\) (the latter only when \(T>1\)).

\textit{Case (i): \(u \in (0,1]\).}  Using Lemma~\ref{lem:log_power} with \(\delta = m/2 > 0\),
\[
|\ln u| \leq \frac{1}{\delta e} u^{-\delta} = \frac{2}{m e} u^{-m/2}.
\]
For \(u \in (0,1]\) and \(\xi \in [m,M]\),
\[
u^{\xi} |\ln u| \leq \frac{2}{m e} u^{\xi - m/2} \leq \frac{2}{m e} u^{m - m/2} 
                = \frac{2}{m e} u^{m/2}.
\]
Since \(m/2 > 0\), the function \(u \mapsto u^{m/2}\) is increasing on \((0,1]\), with 
\(\lim_{u\to 0^+} u^{m/2} = 0\) and \(1^{m/2}=1\); therefore \(u^{m/2} \leq 1\) for all \(u \in (0,1]\).  
Consequently,
\[
F(u,\xi) \leq \frac{2}{m e} \quad \text{for } u \in (0,1],\ \xi \in [m,M].
\]
Moreover, by the definition \(F(0,\xi)=0\), the same bound holds trivially at \(u=0\).

\textit{Case (ii): \(u \in [1, T]\) (when \(T > 1\)).}  By Lemma~\ref{lem:log_growth} with \(\alpha_0 = 1\),
\[
|\ln u| \leq u \quad \text{for all } u \geq 1.
\]
Hence for \(u \in [1,T]\) and \(\xi \in [m,M]\),
\[
u^{\xi} |\ln u| \leq u^{M} \cdot u = u^{M+1} \leq T^{M+1}.
\]

\textbf{Synthesis of the estimates.}  Distinguish two situations:
\begin{itemize}
    \item If \(T \leq 1\), then for all \((u,\xi) \in D\) we have \(u \in [0,1]\).  
          From Case (i) and the definition \(F(0,\xi)=0\), we obtain
          \[
          F(u,\xi) \leq \frac{2}{m e} \quad \text{for all } (u,\xi) \in D,
          \]
          and we may set \(C_F := \frac{2}{m e}\).
    
    \item If \(T > 1\), the region splits as \(D = ([0,1]\times[m,M]) \cup ([1,T]\times[m,M])\).  
          From the two cases we obtain
          \[
          F(u,\xi) \leq \frac{2}{m e} \ \text{for } u\in[0,1],\qquad
          F(u,\xi) \leq T^{M+1} \ \text{for } u\in[1,T].
          \]
          Hence for every \((u,\xi) \in D\),
          \[
          F(u,\xi) \leq \max\Bigl\{\frac{2}{m e},\ T^{M+1}\Bigr\},
          \]
          and we set \(C_F := \max\bigl\{\frac{2}{m e},\ T^{M+1}\bigr\}\).
\end{itemize}
For a unified notation we may write
\[
C_F := 
\begin{cases}
\dfrac{2}{m e}, & T \leq 1, \\[8pt]
\displaystyle\max\Bigl\{\frac{2}{m e},\ T^{M+1}\Bigr\}, & T > 1.
\end{cases}
\]
Thus
\[
F(u,\xi) = u^{\xi} |\ln u| \leq C_F \quad \text{for all } (u,\xi) \in D,
\]
showing that \(F\) is uniformly bounded on the compact set \(D\).

\textbf{3.6. Integral estimate and uniform convergence.}  
Inserting the bound from Step 3.4,
\[
|I_n(t)-I(t)| \leq L_x \int_0^t u^{\xi(u,t,n)} |\ln u| \, |\alpha_n(t)-\alpha(t)| \, du.
\]
For each fixed $t$ and $n$, the point $\xi(u,t,n)$ belongs to $[m,M]$ for all $u \in [0,t]$.  
Hence by the uniform bound established in Step 3.5,
\[
u^{\xi(u,t,n)} |\ln u| \leq C_F \quad \text{for all } u \in [0,t].
\]
Therefore
\[
|I_n(t)-I(t)| \leq L_x C_F |\alpha_n(t)-\alpha(t)| \int_0^t du
                = L_x C_F |\alpha_n(t)-\alpha(t)| \cdot t.
\]
Since $t \leq T$ on $[\varepsilon, T]$,
\[
|I_n(t)-I(t)| \leq L_x C_F T \, |\alpha_n(t)-\alpha(t)|.
\]
Taking suprema over $t \in [\varepsilon, T]$,
\[
\sup_{t\in[\varepsilon,T]} |I_n(t)-I(t)| 
\leq L_x C_F T \sup_{t\in[\varepsilon,T]} |\alpha_n(t)-\alpha(t)|
\leq L_x C_F T \sup_{t\in I} |\alpha_n(t)-\alpha(t)|.
\]
The right‑hand side tends to zero as $n\to\infty$ because $\alpha_n \rightrightarrows \alpha$ on $I$.  
Consequently,
\begin{equation} \label{eq:I_uniform_convergence}
\lim_{n\to\infty} \sup_{t\in[\varepsilon,T]} |I_n(t)-I(t)| = 0,
\end{equation}
i.e., $I_n \rightrightarrows I$ on $[\varepsilon, T]$.

\textbf{3.7. Uniform boundedness of $\{I_n\}$ and $I$.}  
From the Lipschitz estimate in Step 3.1,
\[
|I_n(t)| \leq L_x \int_0^t (t-\tau)^{\alpha_n(t)} \, d\tau.
\]
The change of variable $u = t-\tau$ yields
\[
\int_0^t (t-\tau)^{\alpha_n(t)} \, d\tau = \int_0^t u^{\alpha_n(t)} \, du.
\]
Since $\alpha_n(t) \in [m, M] \subset (0,1]$, Lemma~\ref{lem:power_linear_bound} gives 
$u^{\alpha_n(t)} \leq 1 + u$ for all $u \geq 0$.  Hence,
\[
\int_0^t u^{\alpha_n(t)} \, du \leq \int_0^t (1 + u) \, du = t + \frac{t^2}{2}.
\]
Because $t \in [\varepsilon, T]$, we have $t \leq T$ and $t^2 \leq T^2$; thus
\[
|I_n(t)| \leq L_x\Bigl(T + \frac{T^2}{2}\Bigr).
\]

The same argument applied to the limit exponent $\alpha(t)$ (which also lies in $[m,M]$) yields
\[
|I(t)| \leq L_x\Bigl(T + \frac{T^2}{2}\Bigr).
\]

Therefore we may set the uniform bound
\[
C_I := L_x\Bigl(T + \frac{T^2}{2}\Bigr),
\]
so that for every $n \geq 1$ and every $t \in [\varepsilon, T]$,
\[
|I_n(t)| \leq C_I,\qquad |I(t)| \leq C_I.
\]

\vspace{0.5em}
\noindent
\textbf{Step 4: Uniform convergence of the product $A_n I_n$.}

We have established the following three facts:
\begin{enumerate}
    \item $A_n \rightrightarrows A$ on $[\varepsilon, T]$ (equation \eqref{eq:A_n_uniform_convergence});
    \item $I_n \rightrightarrows I$ on $[\varepsilon, T]$ (equation \eqref{eq:I_uniform_convergence});
    \item The sequences $\{A_n\}, \{A\}$ are uniformly bounded by $C_A$, and 
          $\{I_n\}, \{I\}$ are uniformly bounded by $C_I$.
\end{enumerate}
Now for any $t \in [\varepsilon, T]$, using the elementary device of adding and subtracting 
$A_n(t)I(t)$,
\begin{align*}
\bigl| A_n(t)I_n(t) - A(t)I(t) \bigr| 
&= \bigl| A_n(t)I_n(t) - A_n(t)I(t) + A_n(t)I(t) - A(t)I(t) \bigr| \\
&\leq \bigl| A_n(t)I_n(t) - A_n(t)I(t) \bigr| 
     + \bigl| A_n(t)I(t) - A(t)I(t) \bigr| \\
&= |A_n(t)| \, |I_n(t)-I(t)| + |I(t)| \, |A_n(t)-A(t)| \\
&\leq C_A |I_n(t)-I(t)| + C_I |A_n(t)-A(t)|.
\end{align*}
Taking the supremum over all $t \in [\varepsilon, T]$ yields
\[
\sup_{t\in[\varepsilon,T]} \bigl| A_n(t)I_n(t) - A(t)I(t) \bigr| 
\leq C_A \sup_{t\in[\varepsilon,T]} |I_n(t)-I(t)| 
     + C_I \sup_{t\in[\varepsilon,T]} |A_n(t)-A(t)|.
\]
Both terms on the right‑hand side tend to zero as $n \to \infty$ owing to the uniform convergences 
established in Steps 2 and 3.  Consequently,
\begin{equation} \label{eq:product_uniform_convergence}
\lim_{n\to\infty} \sup_{t\in[\varepsilon,T]} \bigl| A_n(t)I_n(t) - A(t)I(t) \bigr| = 0,
\end{equation}
which is precisely the statement that $A_n I_n$ converges uniformly to $A I$ on $[\varepsilon, T]$.

\vspace{0.5em}
\noindent
\textbf{Step 5: Completion of the proof.}

Recall from Step 1 that for every $t \in (0,T]$,
\[
\mathscr{V}_{\alpha_n,\beta_n}[x](t) = A_n(t)I_n(t),\qquad
\mathscr{V}_{\alpha,\beta}[x](t) = A(t)I(t),
\]
while at the initial instant $t=0$ both operators are defined to equal the classical derivative 
$\dot{x}(0)$.  Since we are working on the subinterval $[\varepsilon, T]$ with $\varepsilon > 0$, 
the identities above are valid for all $t$ under consideration.  Therefore Step 4 yields directly
\begin{equation} \label{eq:final_convergence}
\lim_{n\to\infty} \sup_{t\in[\varepsilon,T]} 
\bigl| \mathscr{V}_{\alpha_n,\beta_n}[x](t) - \mathscr{V}_{\alpha,\beta}[x](t) \bigr| = 0.
\end{equation}

The positive number $\varepsilon$ was chosen arbitrarily in $(0,T)$; the proof demonstrates that 
for each such $\varepsilon$ the convergence is uniform on $[\varepsilon, T]$.  
Consequently, for any fixed function $x \in \mathcal{C}^1(I)$, the operator sequence 
$\mathscr{V}_{\alpha_n,\beta_n}[x]$ converges to $\mathscr{V}_{\alpha,\beta}[x]$ 
\textbf{uniformly on every compact interval of the form $[\varepsilon, T]$ with $0 < \varepsilon < T$}.

In standard terminology, this is exactly \textbf{local uniform convergence} on the half‑open 
interval $(0,T]$: for every compact subset $K \subset (0,T]$, the convergence is uniform on $K$.
(Indeed, any such compact $K$ is contained in some interval $[\varepsilon, T]$ with $\varepsilon > 0$, 
and the uniformity on $[\varepsilon, T]$ implies uniformity on $K$.)

This completes the proof of Theorem~\ref{thm:continuous_dependence}.  The theorem establishes that 
the memory‑weighted velocity operator depends continuously on its time‑varying memory exponents 
when the exponents are equipped with the topology of uniform convergence.
\end{proof}

\begin{remark}[Continuous dependence in the context of operator theory and nonlinear analysis]
Theorem~\ref{thm:continuous_dependence} establishes that the memory-weighted velocity operator \(\mathscr{V}_{\alpha,\beta}\) depends continuously on its memory exponents in the following precise sense: the mapping
\[
(\alpha,\beta) \mapsto \mathscr{V}_{\alpha,\beta}
\]
is continuous when the exponent functions are equipped with the topology of uniform convergence and the resulting operators are considered with the topology of local uniform convergence on \((0,T]\). This can be viewed as a form of **stable dependence of an operator family on its defining parameters**—a property of fundamental interest in both linear operator theory and the analysis of parameter-dependent nonlinear evolution equations.

From the perspective of functional analysis, the result ensures that the operator \(\mathscr{V}_{\alpha,\beta}\) behaves regularly under perturbations of its kernel exponents, which is essential for studying its spectral approximation, compactness properties, or embedding into operator ideals. Within the broader context of nonlinear analysis, this continuity underpins the well-posedness of equations involving memory-weighted velocities: if the memory exponents are subject to uncertainty, approximation, or adaptive evolution, the corresponding velocity fields remain close, thereby preserving the qualitative behavior of solutions.

The proof combines real-analytic estimates with operator-theoretic reasoning—decomposing \(\mathscr{V}_{\alpha,\beta}\) into a product of coefficient and integral components, and controlling their uniform convergence via boundedness and continuity arguments. This blend of concrete estimates and functional-analytic structure reflects the dual nature of the operator as both an integral transform and a weighted bounded linear map between function spaces, in the sense of Theorem~\ref{thm:weighted_boundedness}.

Beyond its theoretical value, the continuity established here provides a rigorous foundation for numerical and modeling practices: it justifies the use of discretized or experimentally inferred memory exponents, supports parameter identification procedures in inverse problems (where memory exponents are to be recovered from observational data), and facilitates the analysis of evolution equations in which memory characteristics themselves evolve over time—such as in viscoelastic materials with aging or transport processes with history-dependent diffusivity. These connections position \(\mathscr{V}_{\alpha,\beta}\) as a versatile object at the interface of operator theory and nonlinear dynamics.
\end{remark}

\subsection{The Uniform-Memory Regime: Asymptotic Recovery of the Classical Derivative}
\label{subsec:uniform_memory_case}

We now examine an important special case of the memory-weighted velocity operator 
introduced in Definition~\ref{def:velocity_operator}. When both memory exponents are 
identically equal to one, i.e., $\alpha(t)\equiv 1$ and $\beta(t)\equiv 1$, the 
memory kernels simplify substantially. Indeed, from \eqref{eq:memory_kernel} with 
$\varrho(t)=1$ and $\Gamma(1)=1$, both memory kernels reduce to the constant function one: $K_{\alpha}(t,\tau) = K_{\beta}(t,\tau) \equiv 1$ for all $0 \le \tau < t \le T$. Consequently, the general expression \eqref{eq:velocity_operator} 
reduces to a purely time-averaged form. This ``uniform-memory'' regime serves as a 
natural testing ground for the theory and provides a direct link to classical 
differential calculus.

\begin{theorem}[Asymptotic recovery of the classical derivative at the origin]
\label{thm:uniform_memory_limit}
Let $x \in \mathcal{C}^{1}(I)$. The uniform-memory velocity operator $\mathscr{V}_{1,1}$, obtained from 
Definition~\ref{def:velocity_operator} by setting $\alpha(t)\equiv\beta(t)\equiv 1$, is given by
\begin{equation}\label{eq:V11_form}
\mathscr{V}_{1,1}[x](t) = 
\begin{cases}
\displaystyle
\frac{\int_{0}^{t} \bigl[ x(t) - x(\tau) \bigr] \, d\tau}
     {\int_{0}^{t} (t - \tau) \, d\tau}, & t \in (0, T], \\[12pt]
\dot{x}(0), & t = 0.
\end{cases}
\end{equation}
For this operator, the following asymptotic relation holds:
\begin{equation}\label{eq:V11_limit}
\lim_{t \to 0^{+}} \mathscr{V}_{1,1}[x](t) = \dot{x}(0).
\end{equation}
\end{theorem}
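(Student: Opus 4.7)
The plan is to show that $\mathscr{V}_{1,1}[x](t)-\dot{x}(0)$ tends to zero as $t\to 0^{+}$ by isolating the "tangent-line" contribution and estimating the remainder via the continuity of $\dot{x}$ at the origin. First I would evaluate the denominator explicitly as
\[
\int_{0}^{t}(t-\tau)\,d\tau=\frac{t^{2}}{2},
\]
so that \eqref{eq:V11_form} reduces to a quotient whose only non-trivial quantity is the numerator $N(t):=\int_{0}^{t}[x(t)-x(\tau)]\,d\tau$.

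Next I would use the fundamental theorem of calculus, valid because $x\in\mathcal{C}^{1}(I)$, to write $x(t)-x(\tau)=\int_{\tau}^{t}\dot{x}(s)\,ds$, and then add and subtract $\dot{x}(0)$ inside the inner integral:
\[
x(t)-x(\tau)=\dot{x}(0)(t-\tau)+\int_{\tau}^{t}\bigl[\dot{x}(s)-\dot{x}(0)\bigr]\,ds.
\]
Substituting into $N(t)$, the first piece integrates to $\dot{x}(0)\cdot t^{2}/2$, which exactly matches the denominator and produces the target value $\dot{x}(0)$. The remaining double integral is then simplified by Fubini's theorem on the triangle $\{0\le\tau\le s\le t\}$, giving
\[
\int_{0}^{t}\!\!\int_{\tau}^{t}\bigl[\dot{x}(s)-\dot{x}(0)\bigr]\,ds\,d\tau=\int_{0}^{t}s\bigl[\dot{x}(s)-\dot{x}(0)\bigr]\,ds.
\]
Dividing by the denominator $t^{2}/2$ yields the clean identity
\[
\mathscr{V}_{1,1}[x](t)-\dot{x}(0)=\frac{2}{t^{2}}\int_{0}^{t}s\bigl[\dot{x}(s)-\dot{x}(0)\bigr]\,ds,\qquad t\in(0,T].
\]

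Finally I would close the argument by the standard $\varepsilon$--$\delta$ route: given $\varepsilon>0$, continuity of $\dot{x}$ at $0$ supplies $\delta>0$ with $|\dot{x}(s)-\dot{x}(0)|<\varepsilon$ for $s\in[0,\delta]$; then for $t\in(0,\delta]$,
\[
\Bigl|\mathscr{V}_{1,1}[x](t)-\dot{x}(0)\Bigr|\le\frac{2}{t^{2}}\int_{0}^{t}s\,\varepsilon\,ds=\varepsilon.
\]
This gives \eqref{eq:V11_limit}. I do not anticipate a genuine obstacle here: the argument is entirely elementary once the identity above is obtained. The only point requiring minor care is the Fubini exchange, which is justified because the integrand $(s,\tau)\mapsto\dot{x}(s)-\dot{x}(0)$ is continuous on the compact triangle and hence bounded and integrable there; everything else is a direct consequence of the fundamental theorem of calculus together with the continuity of $\dot{x}$ at the initial instant.
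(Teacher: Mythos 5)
Your proof is correct, but it takes a genuinely different and more economical route than the paper's. You reconstruct the numerator from $\dot{x}$ via the fundamental theorem of calculus, split off the tangent-line piece, and use Fubini to collapse the remaining double integral into the weighted average $\frac{2}{t^{2}}\int_{0}^{t}s\,[\dot{x}(s)-\dot{x}(0)]\,ds$, which is then dispatched by an $\varepsilon$--$\delta$ argument using continuity of $\dot{x}$ at $0$. The paper instead writes $x(h)=x(0)+\dot{x}(0)h+r(h)$, builds an error-control function $\epsilon_t=\sup_{0<h\le t}|r(h)/h|$ in a dedicated appendix, bounds $\bigl|\int_0^t[r(t)-r(\tau)]\,d\tau\bigr|\le\tfrac{3}{2}\epsilon_t t^{2}$, and closes with a squeeze argument; it also interleaves a pair of mean-value-theorem reductions that are not actually needed for the limit itself but feed into the subsequent corollary about the asymptotic position of the mean-value point. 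Your approach is shorter, uses only textbook tools, and fully exploits the $\mathcal{C}^{1}$ hypothesis through FTC and Fubini; the paper's remainder-based machinery is longer but slightly more robust in hypothesis (it requires only continuity of $x$ plus differentiability at the single point $t=0$, not a continuous derivative), and it supports the paper's stated goal of developing a self-contained error-control toolbox reused elsewhere. Both arguments are sound; the Fubini exchange you flag is indeed the only point needing justification, and your reason (continuity of the integrand on the compact triangle) is adequate.
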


\begin{proof}
The proof is organized into eight systematic steps, each providing a detailed derivation together with the necessary justifications to ensure logical completeness.

\vspace{0.5em}
\noindent
\textbf{Step 1: Simplifying the expression.}

First compute the denominator exactly:
\begin{align}
\int_{0}^{t} (t - \tau) \, d\tau 
&= \Bigl[ t\tau - \frac{\tau^{2}}{2} \Bigr]_{\tau=0}^{\tau=t} 
 = t^{2} - \frac{t^{2}}{2} 
 = \frac{t^{2}}{2}. \label{eq:denominator_exact}
\end{align}
(Note that $t$ is fixed during the integration over $\tau$.)

Using this result, the operator $\mathscr{V}_{1,1}[x](t)$ defined in the theorem can be rewritten as
\begin{equation}
\label{eq:V11_simplified}
\mathscr{V}_{1,1}[x](t) = \frac{2}{t^{2}} \int_{0}^{t} \bigl[ x(t) - x(\tau) \bigr] \, d\tau, \qquad t \in (0, T].
\end{equation}
This compact form will facilitate the subsequent limit analysis.

\vspace{0.5em}
\noindent
\textbf{Step 2: Application of the first mean‑value theorem for integrals (representation via an intermediate point).}

We now analyze the numerator $\int_{0}^{t} [x(t) - x(\tau)] \, d\tau$. It is essential to verify the conditions for applying the first mean‑value theorem for integrals.

Since $x \in \mathcal{C}^{1}(I)$, in particular $x$ is continuous on $I$. Hence for a fixed $t > 0$, the function
\[
f_{t}(\tau) := x(t) - x(\tau), \qquad \tau \in [0, t],
\]
is continuous on the closed interval $[0, t]$. The weight function is constant and equal to $1$, which is certainly integrable and does not change sign on $[0, t]$.

The precise statement of the \textbf{first mean‑value theorem for integrals} (for continuous integrands) is: if $F$ is continuous on $[a, b]$ and $w$ is integrable and does not change sign on $[a, b]$, then there exists $\xi \in [a, b]$ such that
\[
\int_{a}^{b} F(\tau) w(\tau) \, d\tau = F(\xi) \int_{a}^{b} w(\tau) \, d\tau.
\]

In our setting we take $[a, b] = [0, t]$, $F(\tau) = x(t) - x(\tau)$, and $w(\tau) \equiv 1$. Applying the theorem, we obtain a point $\xi(t) \in [0, t]$ (notice that $\xi(t)$ depends on $t$) satisfying
\begin{align}
\int_{0}^{t} \bigl[ x(t) - x(\tau) \bigr] \, d\tau 
&= \bigl[ x(t) - x(\xi(t)) \bigr] \int_{0}^{t} 1 \, d\tau \notag \\
&= \bigl[ x(t) - x(\xi(t)) \bigr] \cdot t. \label{eq:mean_value_numerator}
\end{align}
Here $\int_{0}^{t} 1 \, d\tau = t$ is precisely the length of the interval.

For later convenience we introduce the normalized (relative) position $\theta(t)$ defined by
\begin{equation}
\label{eq:theta_def}
\theta(t) := \frac{\xi(t)}{t}, \qquad \text{so that } \theta(t) \in [0, 1].
\end{equation}
Thus $\xi(t) = \theta(t) t$ and consequently $x(\xi(t)) = x(\theta(t) t)$. 
Using this notation, equation \eqref{eq:mean_value_numerator} takes the compact form
\[
\int_{0}^{t} \bigl[ x(t) - x(\tau) \bigr] \, d\tau = \bigl[ x(t) - x(\theta(t) t) \bigr] \cdot t.
\]

\vspace{0.5em}
\noindent
\textbf{Step 3: Expressing the memory velocity in terms of the intermediate point.}

Combining the result of Step 2, \eqref{eq:mean_value_numerator}, with the exact denominator $t^{2}/2$ from Step 1, we obtain from the expression for $\mathscr{V}_{1,1}[x](t)$:
\begin{align}
\mathscr{V}_{1,1}[x](t) 
&= \frac{\bigl[ x(t) - x(\xi(t)) \bigr] \cdot t}{t^{2}/2} \notag \\
&= \frac{x(t) - x(\xi(t))}{t/2}. \label{eq:V11_in_xi}
\end{align}
This representation exhibits the memory velocity as a difference quotient with respect to the intermediate point $\xi(t)$, where the effective increment is $t/2$ (half the length of the integration interval).
\vspace{0.5em}
\noindent

\textbf{Step 4: Applying the Lagrange mean‑value theorem to $x(t)-x(\xi(t))$.}

To analyze $x(t) - x(\xi(t))$ further we employ the Lagrange mean‑value theorem. Because $x \in \mathcal{C}^{1}(I)$, the theorem applies on any closed subinterval of $I$. We distinguish two possibilities according to the relative position of $\xi(t)$.

\textbf{Case 1: $\xi(t) < t$ (non‑degenerate interval).}  
Then $t - \xi(t) > 0$ and the interval $[\xi(t), t]$ has positive length. By the Lagrange mean‑value theorem, there exists at least one point $\zeta(t) \in (\xi(t), t)$ such that
\begin{equation}
\label{eq:lagrange_strict}
x(t) - x(\xi(t)) = \dot{x}(\zeta(t)) \cdot (t - \xi(t)).
\end{equation}
Here $\dot{x}$ denotes the derivative of $x$. Note that $\zeta(t)$ depends on $t$ and satisfies $\xi(t) < \zeta(t) < t$.

\textbf{Case 2: $\xi(t) = t$ (degenerate interval).}  
In this situation $x(t) - x(\xi(t)) = 0$ and $t - \xi(t) = 0$. The equality
\[
x(t) - x(\xi(t)) = \dot{x}(\zeta(t)) \cdot (t - \xi(t))
\]
holds for any choice of $\zeta(t) \in [0, t]$ because both sides are zero. For definiteness we choose $\zeta(t) = t$ (which coincides with $\xi(t)$ in this case).

\textbf{Remark: The case $\xi(t) > t$ is impossible.}  
Indeed, from the definition \eqref{eq:theta_def} we have $\theta(t) = \xi(t)/t \in [0, 1]$, which implies $\xi(t) \leq t$ for all $t > 0$.

In summary, we can always select a point $\zeta(t) \in [0, t]$ for which
\begin{equation}
\label{eq:lagrange_general}
x(t) - x(\xi(t)) = \dot{x}(\zeta(t)) \cdot (t - \xi(t)).
\end{equation}
Concretely:
\begin{itemize}
    \item when $\xi(t) < t$, the point $\zeta(t)$ is provided by the mean‑value theorem and satisfies $\xi(t) < \zeta(t) < t$;
    \item when $\xi(t) = t$, we have set $\zeta(t) = t$.
\end{itemize}
In either case $\zeta(t) \in [0, t]$.

\vspace{0.5em}
\noindent
\textbf{Step 5: Expressing the memory velocity using $\theta(t)$ and $\zeta(t)$.}

Insert \eqref{eq:lagrange_general} into \eqref{eq:V11_in_xi} and use the relation $\xi(t) = \theta(t) t$ from \eqref{eq:theta_def}:
\begin{align}
\mathscr{V}_{1,1}[x](t) 
&= \frac{\dot{x}(\zeta(t)) \cdot (t - \xi(t))}{t/2} \notag \\
&= 2 \dot{x}(\zeta(t)) \cdot \bigl(1 - \theta(t)\bigr). \label{eq:V11_theta_form}
\end{align}
This compact formula expresses the memory velocity as the product of three factors: 
the constant $2$, the derivative value $\dot{x}(\zeta(t))$ at the point $\zeta(t)$ provided by the Lagrange mean‑value theorem, and $1-\theta(t)$ where $\theta(t)=\xi(t)/t$ is the normalized position of the integral mean‑value point $\xi(t)$.

\vspace{0.5em}
\noindent
\textbf{Step 6: Employing the error‑control functions constructed in Appendix~\ref{app:error_control}.}

Recall from \eqref{eq:V11_simplified} that $\mathscr{V}_{1,1}[x](t) = \frac{2}{t^{2}} \int_{0}^{t} [x(t)-x(\tau)] \, d\tau$.

Set $a_0 := \dot{x}(0)$. By the differentiability of $x$ at $0$, we introduce the remainder function
\[
r(h) := x(h) - x(0) - a_0 h, \qquad h \in I.
\]

We now invoke the machinery developed in Appendix~\ref{app:error_control}:
\begin{enumerate}
    \item The function $r(h)$ defined above coincides precisely with the remainder function introduced in Definition~\ref{def:remainder_function_app}.
    
    \item Following Definition~\ref{def:epsilon_function_app}, we construct the error‑control function
          $\epsilon(s) := \sup_{0 < h \leq s} |r(h)/h|$ for $s>0$, with $\epsilon(0):=0$.
    
    \item For a fixed $t>0$, define $\epsilon_{t} := \sup_{0 \leq s \leq t} \epsilon(s)$ in accordance with Definition~\ref{def:epsilon_t_app}. Proposition~\ref{prop:epsilon_t_explicit_app} supplies the useful identification $\epsilon_{t} = \epsilon(t)$ together with the limit $\displaystyle\lim_{t\to 0^{+}} \epsilon_{t} = 0$.
    
    \item Lemma~\ref{lem:uniform_error_bound_app} furnishes the uniform bounds: for any $0 \leq \tau \leq t$,
          \begin{align}
          |r(\tau)| &\leq \epsilon_{t} \, \tau, \label{eq:bound_r_tau_main} \\
          |r(t)|    &\leq \epsilon_{t} \, t.    \label{eq:bound_r_t_main}
          \end{align}
\end{enumerate}
Thus $\epsilon_t$ furnishes a uniform control on the remainder throughout the entire interval $[0,t]$, and $\epsilon_t \to 0$ as $t \to 0^+$.

\vspace{0.5em}
\noindent
\textbf{Step 7: Precise estimation of the integral and completion of the limit computation.}

Employing the expansion of $x$ about $0$ with the constant $a_0 = \dot{x}(0)$ introduced in Step 6, we decompose $x(t)-x(\tau)$ into a principal part and a remainder term:
\begin{align*}
x(t) - x(\tau)
&= \bigl[ x(0) + a_0 t + r(t) \bigr] - \bigl[ x(0) + a_0 \tau + r(\tau) \bigr] \\
&= a_0(t - \tau) + \bigl[ r(t) - r(\tau) \bigr].
\end{align*}

Now evaluate the integral:
\begin{align*}
\int_0^t \bigl[ x(t) - x(\tau) \bigr] \, d\tau
&= \int_0^t a_0(t - \tau) \, d\tau + \int_0^t \bigl[ r(t) - r(\tau) \bigr] \, d\tau \\
&= a_0 \int_0^t (t - \tau) \, d\tau + \int_0^t r(t) \, d\tau - \int_0^t r(\tau) \, d\tau.
\end{align*}

The first term was computed exactly in Step 1: $\int_0^t (t-\tau)d\tau = t^{2}/2$, whence
\[
a_0 \int_0^t (t - \tau) \, d\tau = a_0 \cdot \frac{t^{2}}{2}.
\]

For the second term (the remainder part) we estimate each component separately. First,
\[
\Bigl| \int_0^t r(t) \, d\tau \Bigr|
= \Bigl|\, r(t) \int_0^t 1 \, d\tau \Bigr|
= |r(t)| \cdot \int_0^t 1 \, d\tau
= |r(t)| \cdot t.
\]
By the bound \eqref{eq:bound_r_t_main}, $|r(t)| \leq \epsilon_{t} t$, consequently
\[
\Bigl| \int_0^t r(t) \, d\tau \Bigr| \leq \epsilon_{t} t \cdot t = \epsilon_{t} t^{2}.
\]

Second,
\[
\Bigl| \int_0^t r(\tau) \, d\tau \Bigr| \leq \int_0^t |r(\tau)| \, d\tau.
\]
Using \eqref{eq:bound_r_tau_main}, $|r(\tau)| \leq \epsilon_{t} \tau$ for all $0 \leq \tau \leq t$, we obtain
\[
\int_0^t |r(\tau)| \, d\tau \leq \int_0^t \epsilon_{t} \tau \, d\tau = \epsilon_{t} \cdot \frac{t^{2}}{2}.
\]

Thus the total remainder satisfies
\begin{align*}
\Bigl| \int_0^t \bigl[ r(t) - r(\tau) \bigr] \, d\tau \Bigr|
&\leq \Bigl| \int_0^t r(t) \, d\tau \Bigr| + \Bigl| \int_0^t r(\tau) \, d\tau \Bigr| \\
&\leq \epsilon_{t} t^{2} + \epsilon_{t} \cdot \frac{t^{2}}{2} = \frac{3}{2} \epsilon_{t} t^{2}.
\end{align*}

Define $R(t) := \displaystyle\int_0^t \bigl[ r(t) - r(\tau) \bigr]  \, d\tau$, which constitutes precisely the remainder part of the integral. From the estimate above we have $|R(t)| \leq \frac{3}{2} \epsilon_{t} t^{2}$, and therefore
\begin{equation}
\label{eq:integral_with_error_main}
\int_0^t \bigl[ x(t) - x(\tau) \bigr]  \, d\tau = a_0 \cdot \frac{t^{2}}{2} + R(t).
\end{equation}

Substituting \eqref{eq:integral_with_error_main} into the simplified expression \eqref{eq:V11_simplified} yields
\begin{align*}
\mathscr{V}_{1,1}[x](t)
&= \frac{2}{t^{2}} \Bigl( a_0 \cdot \frac{t^{2}}{2} + R(t) \Bigr) \\
&= a_0 + \frac{2R(t)}{t^{2}}.
\end{align*}

From the estimate $|R(t)| \leq \frac{3}{2} \epsilon_{t} t^{2}$ we deduce
\[
\Bigl| \frac{2R(t)}{t^{2}} \Bigr| \leq \frac{2 \cdot \frac{3}{2} \epsilon_{t} t^{2}}{t^{2}} = 3 \epsilon_{t}.
\]

\vspace{0.5em}
\noindent
\textbf{Step 8: Completion of the limit argument.}

Since $\bigl| \frac{2R(t)}{t^{2}} \bigr| \leq 3\epsilon_{t}$ and, by Proposition~\ref{prop:epsilon_t_explicit_app}, $\epsilon_{t} \to 0$ as $t \to 0^{+}$, the squeeze theorem (if $|f(t)| \leq g(t)$ and $\lim_{t\to a} g(t) = 0$, then $\lim_{t\to a} f(t) = 0$) yields
\[
\lim_{t \to 0^{+}} \frac{2R(t)}{t^{2}} = 0.
\]

Assembling the preceding results,
\begin{align*}
\lim_{t \to 0^{+}} \mathscr{V}_{1,1}[x](t) 
&= \lim_{t \to 0^{+}} \Bigl[ a_0 + \frac{2R(t)}{t^{2}} \Bigr] \\
&= a_0 + \lim_{t \to 0^{+}} \frac{2R(t)}{t^{2}} \\
&= a_0 + 0 = a_0 = \dot{x}(0).
\end{align*}

Thus we have established rigorously
\[
\lim_{t \to 0^{+}} \mathscr{V}_{1,1}[x](t) = \dot{x}(0),
\]
demonstrating that in the uniform‑memory case $\alpha=\beta\equiv1$, the memory‑weighted velocity operator recovers the classical derivative at the origin.
\end{proof}

\begin{corollary}[Asymptotic location of the mean‑value point]
\label{cor:asymptotic_position}
Under the hypotheses of Theorem~\ref{thm:uniform_memory_limit}, assume additionally that 
$\dot{x}(0) \neq 0$. Let $\xi(t) \in [0, t]$ denote the point furnished by the first 
mean‑value theorem for integrals in the proof of that theorem (see equation \eqref{eq:mean_value_numerator}); that is, $\xi(t)$ satisfies
\[
\int_{0}^{t} \bigl[ x(t) - x(\tau) \bigr] \, d\tau = \bigl[ x(t) - x(\xi(t)) \bigr] \cdot t,
\qquad t \in (0, T].
\]
Then $\xi(t)$ exhibits the following precise asymptotic behavior:
\begin{equation}\label{eq:asymptotic_midpoint}
\lim_{t \to 0^{+}} \frac{\xi(t)}{t} = \frac{1}{2}.
\end{equation}
In words, the mean‑value point $\xi(t)$ asymptotically approaches the midpoint of the 
interval $[0, t]$ as $t \to 0^{+}$.
\end{corollary}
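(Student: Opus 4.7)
The plan is to exploit the identity derived in Step 5 of the preceding proof, namely
\[
\mathscr{V}_{1,1}[x](t) = 2\,\dot{x}(\zeta(t))\,\bigl(1-\theta(t)\bigr),\qquad \theta(t)=\xi(t)/t,
\]
which is valid for every $t\in(0,T]$ with $\zeta(t)\in[0,t]$ and $\theta(t)\in[0,1]$. Solving this identity for $\theta(t)$ and passing to the limit $t\to 0^{+}$ will directly yield the claim, provided the factor $\dot{x}(\zeta(t))$ can be controlled away from zero.

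First, I would record that $\zeta(t)\in[0,t]$ forces $\zeta(t)\to 0$ as $t\to 0^{+}$, so by continuity of $\dot{x}$ on $I$ we have $\dot{x}(\zeta(t))\to \dot{x}(0)$. Because $\dot{x}(0)\neq 0$ by hypothesis, continuity of $\dot{x}$ provides a $t_{0}\in(0,T]$ such that $|\dot{x}(\zeta(t))|\geq |\dot{x}(0)|/2>0$ for all $t\in(0,t_{0}]$. On this neighborhood the identity above may be rearranged without loss to
\[
1-\theta(t) = \frac{\mathscr{V}_{1,1}[x](t)}{2\,\dot{x}(\zeta(t))}.
\]

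Next, I would invoke Theorem~\ref{thm:uniform_memory_limit} to conclude $\mathscr{V}_{1,1}[x](t)\to \dot{x}(0)$ as $t\to 0^{+}$. Combining this with the already-established $\dot{x}(\zeta(t))\to \dot{x}(0)\neq 0$ and applying the quotient rule for limits yields
\[
\lim_{t\to 0^{+}} \bigl(1-\theta(t)\bigr) \;=\; \frac{\dot{x}(0)}{2\,\dot{x}(0)} \;=\; \frac{1}{2},
\]
whence $\theta(t)\to 1/2$, i.e., $\xi(t)/t\to 1/2$, as required.

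I do not anticipate a substantial obstacle: the crux is simply to notice that the representation formula from Step 5 of the preceding proof already encodes the asymptotic midpoint property, once the conclusion of Theorem~\ref{thm:uniform_memory_limit} is combined with the continuity of $\dot{x}$ at $0$. The only minor subtlety worth spelling out explicitly is the need for the hypothesis $\dot{x}(0)\neq 0$, which is precisely what legitimizes the division by $\dot{x}(\zeta(t))$ on a right-neighborhood of $0$; without this nondegeneracy the identity would only give $0=0$ in the limit and could not pin down $\theta(t)$.
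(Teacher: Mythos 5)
Your argument is correct and is essentially the paper's own proof: you invoke the same representation $\mathscr{V}_{1,1}[x](t) = 2\dot{x}(\zeta(t))(1-\theta(t))$ from Step 5, combine the conclusion of Theorem~\ref{thm:uniform_memory_limit} with $\zeta(t)\to 0$ and continuity of $\dot{x}$, use the hypothesis $\dot{x}(0)\neq 0$ to justify division on a right-neighborhood of the origin, and pass to the limit by the quotient rule. The only cosmetic difference is that you solve for $1-\theta(t)$ whereas the paper solves for $\theta(t)$ directly, which changes nothing.
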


\begin{proof}
The proof follows directly from the relations established in the proof of Theorem~\ref{thm:uniform_memory_limit} together with the limit result of the theorem itself.

\vspace{0.5em}
\noindent
\textbf{Step 1: Recalling the essential identities.}

From Step 2 of the theorem's proof (see equation \eqref{eq:mean_value_numerator}), 
there exists $\xi(t) \in [0, t]$ such that
\begin{equation}
\label{eq:cor_mean_value_revisit}
\int_{0}^{t} \bigl[ x(t) - x(\tau) \bigr] \, d\tau = \bigl[ x(t) - x(\xi(t)) \bigr] \cdot t.
\end{equation}
Introduce the normalized position
\[
\theta(t) := \frac{\xi(t)}{t} \in [0, 1].
\]

Step 4 of the theorem's proof (equation \eqref{eq:lagrange_general}) supplies a point 
$\zeta(t) \in [0, t]$ for which
\begin{equation}
\label{eq:cor_lagrange_revisit}
x(t) - x(\xi(t)) = \dot{x}(\zeta(t)) \cdot (t - \xi(t)) 
                  = \dot{x}(\zeta(t)) \cdot t\bigl(1 - \theta(t)\bigr).
\end{equation}

Finally, from Step 5 (equation \eqref{eq:V11_theta_form}) we have the compact representation
\begin{equation}
\label{eq:cor_V11_form_revisit}
\mathscr{V}_{1,1}[x](t) = 2 \dot{x}(\zeta(t)) \cdot \bigl(1 - \theta(t)\bigr).
\end{equation}

\vspace{0.5em}
\noindent
\textbf{Step 2: Applying the limit from Theorem~\ref{thm:uniform_memory_limit}.}

Theorem~\ref{thm:uniform_memory_limit} asserts that
\begin{equation}
\label{eq:cor_limit_V11}
\lim_{t \to 0^{+}} \mathscr{V}_{1,1}[x](t) = \dot{x}(0).
\end{equation}
Moreover, since $x \in \mathcal{C}^{1}(I)$, its derivative $\dot{x}$ is continuous on the compact interval $I$, and in particular at the point $0$.

Observe that $\zeta(t) \in [0, t]$ implies $0 \leq \zeta(t) \leq t$. Consequently, $\zeta(t) \to 0$ as $t \to 0^{+}$. 

Because $\dot{x}$ is continuous at $0$, it preserves limits: for any sequence (or function) converging to $0$, the image under $\dot{x}$ converges to $\dot{x}(0)$. Applying this observation to the function $\zeta(t) \to 0$, we obtain
\begin{equation}
\label{eq:cor_limit_zeta}
\lim_{t \to 0^{+}} \dot{x}(\zeta(t)) = \dot{x}(0).
\end{equation}

\vspace{0.5em}
\noindent
\textbf{Step 3: Determining the limit of $\theta(t)$.}

From the compact representation \eqref{eq:cor_V11_form_revisit} we have for all $t>0$ that
$\mathscr{V}_{1,1}[x](t) = 2\dot{x}(\zeta(t)) \cdot \bigl(1 - \theta(t)\bigr)$.

Since $\dot{x}(0) \neq 0$ by hypothesis and $\dot{x}$ is continuous on $I$, there exists $\rho > 0$ 
such that $\dot{x}(s) \neq 0$ for all $s \in [0,\rho)$. Because $\zeta(t) \in [0,t]$ and 
$\zeta(t) \to 0$ as $t \to 0^{+}$, for all sufficiently small $t>0$ we have 
$\zeta(t) \in [0,\rho)$ and consequently $\dot{x}(\zeta(t)) \neq 0$. For such $t$ we can safely solve 
for $\theta(t)$:
\begin{equation}
\label{eq:theta_explicit}
\theta(t) = 1 - \frac{\mathscr{V}_{1,1}[x](t)}{2\dot{x}(\zeta(t))}.
\end{equation}

Now consider the limit $t \to 0^{+}$. By Theorem~\ref{thm:uniform_memory_limit} and the continuity of $\dot{x}$ at $0$,
\[
\lim_{t \to 0^{+}} \mathscr{V}_{1,1}[x](t) = \dot{x}(0), \qquad
\lim_{t \to 0^{+}} \dot{x}(\zeta(t)) = \dot{x}(0) \neq 0.
\]

Since the denominator limit $2\dot{x}(0)$ is non‑zero, applying the quotient rule for limits to \eqref{eq:theta_explicit} yields
\[
\lim_{t \to 0^{+}} \theta(t) = 
1 - \frac{\displaystyle\lim_{t \to 0^{+}} \mathscr{V}_{1,1}[x](t)}
          {\displaystyle 2\lim_{t \to 0^{+}} \dot{x}(\zeta(t))}
= 1 - \frac{\dot{x}(0)}{2\dot{x}(0)} = \frac{1}{2}.
\]

\vspace{0.5em}
\noindent
\textbf{Step 4: Conclusion for $\xi(t)$.}

Recalling that $\theta(t) = \xi(t)/t$, we immediately deduce
\[
\lim_{t \to 0^{+}} \frac{\xi(t)}{t} = \lim_{t \to 0^{+}} \theta(t) = \frac{1}{2}.
\]

\vspace{0.5em}
\noindent
\textbf{Step 5: Interpretation and completion of the proof.}

Thus, as the time interval $[0, t]$ contracts to zero, the relative position of the 
mean‑value point $\xi(t)$ within the interval converges to the midpoint. 
This completes the proof of Corollary~\ref{cor:asymptotic_position}.
\end{proof}

\begin{remark}[The uniform‑memory case as a consistency check]
\label{rem:uniform_memory_consistency}

The results of Theorem~\ref{thm:uniform_memory_limit} and Corollary~\ref{cor:asymptotic_position} together illustrate how the memory‑weighted velocity operator $\mathscr{V}_{\alpha,\beta}$ connects to classical differential calculus in the simplest non‑trivial setting.

\textbf{Recovery of the classical derivative.}
When both memory exponents are set to unity, the operator reduces to the purely averaged form \eqref{eq:V11_form}, where past information is weighted uniformly over $[0,t]$. Theorem~\ref{thm:uniform_memory_limit} establishes that despite this non‑local averaging, $\mathscr{V}_{1,1}[x](t)$ converges to the ordinary derivative $\dot{x}(0)$ as $t\to0^{+}$. The detailed analysis presented here—based on explicit integral estimates and the systematic error‑control functions constructed in Appendix~\ref{app:error_control}—provides a self‑contained and conceptually transparent derivation that aligns naturally with the analytical framework developed throughout this work. This approach not only confirms the consistency of the definition $\mathscr{V}_{1,1}[x](0)=\dot{x}(0)$ with the limiting behavior of the integral expression but also deepens the understanding of how the averaging process interacts with differentiability.

\textbf{Geometric interpretation of the averaging process.}
Under the condition $\dot{x}(0)\neq0$, Corollary~\ref{cor:asymptotic_position} reveals that the mean‑value point $\xi(t)$ furnished by the integral mean‑value theorem satisfies $\xi(t)/t\to 1/2$. This asymptotic midpoint property indicates that in the uniform‑memory regime, the historical information is balanced symmetrically about the center of the interval as the observation window shrinks. The averaging inherent in $\mathscr{V}_{1,1}$ thus distributes the influence of the past in an even‑handed manner when no preferential weighting is present.

\textbf{Transition from memory-dependent to local behavior.}
The limit $t\to 0^{+}$ in Theorem~\ref{thm:uniform_memory_limit} represents a gradual transition from a non-local, memory-inclusive description to a purely local, memoryless one. The convergence $\mathscr{V}_{1,1}[x](t) \to \dot{x}(0)$ demonstrates that as the time interval over which history is considered contracts to zero, the memory-weighted rate of change smoothly approaches the classical instantaneous derivative. This continuity between the memory-dependent formulation and the local limit reflects an inherent consistency in the mathematical framework: the incorporation of historical information does not introduce discontinuities or artifacts but rather extends the classical concept in a manner that preserves the expected local behavior in the appropriate limit.

\textbf{Significance for the general framework.}
Together, these results offer a fundamental consistency verification for the construction of $\mathscr{V}_{\alpha,\beta}$. They demonstrate that even in this elementary special case—where memory is made uniform—the operator preserves a transparent link with ordinary differentiation, while its integral structure already exhibits the smoothing and balancing characteristics typical of memory‑weighted rates of change. The uniform‑memory case therefore serves as a valuable reference point for understanding the more general situation where the memory exponents $\alpha(t)$ and $\beta(t)$ vary with time. From the perspective of operator theory, this case also provides a concrete example of how a family of integral operators can interpolate between nonlocal and local descriptions—a theme of interest in the analysis of parameter-dependent operator families arising in nonlinear dynamics.
\end{remark}

\subsection{Weighted Estimates and Mapping Properties: Boundedness Analysis of $\mathscr{V}_{\alpha,\beta}$} 
\label{subsec:weighted_estimates_boundedness}

Having laid the groundwork with the foundational properties established in the preceding sections, we now turn to examine the mapping properties of the memory-weighted velocity operator between standard function spaces. A central question is whether $\mathscr{V}_{\alpha,\beta}$ defines a bounded linear operator from $\mathcal{C}^{1}(I)$ into $\mathcal{C}(I)$. As we shall see, the boundedness property hinges essentially on the relative magnitude of the two memory exponents $\alpha(t)$ and $\beta(t)$. To render this relationship explicit, we develop a \emph{weighted pointwise estimate} that delineates how the exponent difference $\beta(t)-\alpha(t)$ modulates the magnitude of $\mathscr{V}_{\alpha,\beta}[x](t)$. The precise formulation follows.

\begin{theorem}[Exponent-difference modulated pointwise bound for $\mathscr{V}_{\alpha,\beta}$]
\label{thm:weighted_boundedness}
Let $\alpha, \beta \in \mathcal{C}(I)$ satisfy $0 < m \le \alpha(t), \beta(t) \le M \le 1$ for every $t \in I$. 
Define the extremal Gamma‑function values
\[
\Gamma_{\mathrm{low}} := \min_{z \in [m, M]} \Gamma(z) > 0, \qquad 
\Gamma_{\mathrm{up}} := \max_{z \in [m, M]} \Gamma(z),
\]
and introduce the constant
\begin{equation}\label{eq:weighted_constant_explicit}
C_W := \frac{(M+1)\Gamma_{\mathrm{up}}}{(m+1)\Gamma_{\mathrm{low}}} \ge 1.
\end{equation}
(The inequality $C_W \ge 1$ follows immediately from $M \ge m$ and $\Gamma_{\mathrm{up}} \ge \Gamma_{\mathrm{low}}$.)

\textbf{(i) For $t \in (0,T]$:} For any $x \in \mathcal{C}^{1}(I)$, the memory-weighted velocity operator obeys the weighted estimate
\begin{equation}\label{eq:weighted_inequality}
\min\!\bigl\{1, t^{\beta(t)-\alpha(t)}\bigr\} \cdot 
\bigl|\mathscr{V}_{\alpha,\beta}[x](t)\bigr| \le C_W \|x\|_{\mathcal{C}^{1}}.
\end{equation}

\textbf{(ii) At the initial instant $t=0$:} For any $x \in \mathcal{C}^{1}(I)$,
\[
\bigl|\mathscr{V}_{\alpha,\beta}[x](0)\bigr| \le C_W \|x\|_{\mathcal{C}^{1}}.
\]
\end{theorem}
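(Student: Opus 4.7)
The plan is to dispense with part (ii) immediately and then reduce part (i) to a direct pointwise estimate via the simplified representation from Corollary~\ref{cor:simplified_form}. For (ii), at $t=0$ the operator is defined to equal $\dot{x}(0)$, so $|\mathscr{V}_{\alpha,\beta}[x](0)| \le \|\dot{x}\|_{\infty} \le \|x\|_{\mathcal{C}^{1}}$, and the inequality $C_W \ge 1$ finishes it.

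For part (i) and a generic $t \in (0,T]$, I would start from
\[
\mathscr{V}_{\alpha,\beta}[x](t) = \frac{(\beta(t)+1)\Gamma(\beta(t))}{t^{\beta(t)+1}\Gamma(\alpha(t))} \int_{0}^{t} (t-\tau)^{\alpha(t)-1}\bigl[x(t)-x(\tau)\bigr]\,d\tau,
\]
apply the ordinary mean value theorem to bound $|x(t)-x(\tau)| \le \|\dot{x}\|_{\infty}(t-\tau) \le \|x\|_{\mathcal{C}^{1}}(t-\tau)$, and combine the exponents to rewrite the remaining integrand as $(t-\tau)^{\alpha(t)}$. An explicit calculation gives $\int_{0}^{t}(t-\tau)^{\alpha(t)}\,d\tau = t^{\alpha(t)+1}/(\alpha(t)+1)$, and after the obvious cancellation of powers of $t$ this leaves the clean pointwise bound
\[
|\mathscr{V}_{\alpha,\beta}[x](t)| \;\le\; \frac{(\beta(t)+1)\Gamma(\beta(t))}{(\alpha(t)+1)\Gamma(\alpha(t))}\; t^{\alpha(t)-\beta(t)}\,\|x\|_{\mathcal{C}^{1}}.
\]
The prefactor is uniformly dominated by $C_W$ thanks to the two‑sided bounds $\alpha(t),\beta(t) \in [m,M]$ and the definitions of $\Gamma_{\mathrm{low}}, \Gamma_{\mathrm{up}}$; monotonicity of the map $z \mapsto z+1$ together with the extremal values of $\Gamma$ on $[m,M]$ yield the ratio bound directly.

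It remains to absorb the weight $\min\{1, t^{\beta(t)-\alpha(t)}\}$, and I would handle this by a simple dichotomy on the sign of $\beta(t)-\alpha(t)$ combined with whether $t \lessgtr 1$. If $t^{\beta(t)-\alpha(t)} \le 1$, the minimum equals $t^{\beta(t)-\alpha(t)}$ and multiplying the displayed bound through by this factor yields exactly $C_W\|x\|_{\mathcal{C}^{1}}$. If instead $t^{\beta(t)-\alpha(t)} \ge 1$, then $t^{\alpha(t)-\beta(t)} \le 1$, so the factor $t^{\alpha(t)-\beta(t)}$ already appearing on the right‑hand side is itself $\le 1$, giving $|\mathscr{V}_{\alpha,\beta}[x](t)| \le C_W\|x\|_{\mathcal{C}^{1}}$, which is what is needed since the minimum equals $1$. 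I do not anticipate a genuine obstacle here; the argument is a short computation, and the main care required is in making the two‑case analysis for the $\min$ crisp and verifying that the pointwise Gamma‑function ratio is uniformly controlled by $C_W$ on the compact range $[m,M]$.
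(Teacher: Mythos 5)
Your proposal is correct and follows essentially the same route as the paper: the simplified form from Corollary~\ref{cor:simplified_form}, the Lipschitz bound $|x(t)-x(\tau)|\le\|\dot x\|_\infty(t-\tau)$, the explicit evaluation $\int_0^t(t-\tau)^{\alpha(t)}\,d\tau=t^{\alpha(t)+1}/(\alpha(t)+1)$, the uniform domination of the Gamma-function prefactor by $C_W$, and a casewise absorption of the weight. The only cosmetic difference is that you split directly on whether $t^{\beta(t)-\alpha(t)}\le 1$ or $\ge 1$ (two cases) instead of the paper's four subcases on $\operatorname{sgn}(\beta(t)-\alpha(t))$ and $t\lessgtr 1$, which is an equivalent but slightly tidier way to verify $\min\{1,t^{\beta(t)-\alpha(t)}\}\cdot t^{\alpha(t)-\beta(t)}\le 1$.
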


\begin{proof}
We provide a systematic proof organized into six logical steps. The argument combines the explicit representation of $\mathscr{V}_{\alpha,\beta}$ with elementary properties of the Gamma function, Lipschitz continuity of $\mathcal{C}^{1}$ functions, and a careful case analysis of the weighting factor modulated by the exponent difference.

\textbf{Step 1: Preliminary decomposition and coefficient bounds.}

For $t \in (0,T]$, recall from Corollary~\ref{cor:simplified_form} that
\[
\mathscr{V}_{\alpha,\beta}[x](t) = \frac{(\beta(t)+1)\Gamma(\beta(t))}{\Gamma(\alpha(t))} \cdot 
\frac{1}{t^{\beta(t)+1}} \int_{0}^{t} (t-\tau)^{\alpha(t)-1} \bigl[x(t)-x(\tau)\bigr] d\tau.
\]

Introduce the coefficient function
\[
\Theta(t) := \frac{(\beta(t)+1)\Gamma(\beta(t))}{\Gamma(\alpha(t))}.
\]

Since the Gamma function is continuous and strictly positive on $(0,\infty)$, and both $\alpha(t)$ and $\beta(t)$ belong to the compact interval $[m, M]$, we obtain the uniform bounds
\[
\Gamma_{\mathrm{low}} \le \Gamma(\alpha(t)), \Gamma(\beta(t)) \le \Gamma_{\mathrm{up}} \quad \text{for all } t \in I,
\]
where $\Gamma_{\mathrm{low}} := \min_{z \in [m, M]} \Gamma(z) > 0$ and $\Gamma_{\mathrm{up}} := \max_{z \in [m, M]} \Gamma(z)$.

Consequently,
\begin{equation}\label{eq:Theta_bound}
|\Theta(t)| \le \frac{(M+1)\Gamma_{\mathrm{up}}}{\Gamma_{\mathrm{low}}} =: B_{\Theta} \quad \text{uniformly on } I.
\end{equation}
Observe that $B_{\Theta}$ differs from the constant $C_W$ defined in \eqref{eq:weighted_constant_explicit} precisely by the factor $1/(m+1)$; this separation will streamline the subsequent estimates.

\textbf{Step 2: Lipschitz estimate and integral control.}

Since $x \in \mathcal{C}^{1}(I)$, the function is Lipschitz continuous on the compact interval $I$. 
Let $L_x := \|\dot{x}\|_{\infty}$; this serves as a Lipschitz constant for $x$ on $I$, 
since for any $t, \tau \in I$,
\[
|x(t) - x(\tau)| = \left| \int_{\tau}^{t} \dot{x}(s) ds \right| \leq \|\dot{x}\|_{\infty} |t-\tau| = L_x |t-\tau|.
\]

Applying this Lipschitz estimate to the integral term yields
\begin{align*}
\biggl| \int_{0}^{t} (t-\tau)^{\alpha(t)-1} [x(t)-x(\tau)] d\tau \biggr|
&\le L_x \int_{0}^{t} (t-\tau)^{\alpha(t)-1} |t-\tau| d\tau \\
&= L_x \int_{0}^{t} (t-\tau)^{\alpha(t)} d\tau.
\end{align*}

A direct computation via the change of variable $u = t-\tau$ gives
\[
\int_{0}^{t} (t-\tau)^{\alpha(t)} d\tau = \int_{0}^{t} u^{\alpha(t)} du = \frac{t^{\alpha(t)+1}}{\alpha(t)+1}.
\]

Moreover, since $\alpha(t) \ge m > 0$, we have $\alpha(t)+1 \ge m+1$, and thus
\[
\frac{1}{\alpha(t)+1} \le \frac{1}{m+1}.
\]

Combining these estimates produces the bound
\begin{equation}\label{eq:integral_bound}
\biggl| \int_{0}^{t} (t-\tau)^{\alpha(t)-1} [x(t)-x(\tau)] d\tau \biggr| \le 
\frac{L_x}{m+1} \, t^{\alpha(t)+1}.
\end{equation}

\textbf{Step 3: Derivation of an elementary pointwise estimate.}

From \eqref{eq:Theta_bound} and \eqref{eq:integral_bound}, we obtain
\begin{align*}
\bigl|\mathscr{V}_{\alpha,\beta}[x](t)\bigr| 
&\le |\Theta(t)| \cdot \frac{1}{t^{\beta(t)+1}} \cdot \biggl| \int_{0}^{t} (t-\tau)^{\alpha(t)-1} [x(t)-x(\tau)] d\tau \biggr| \\
&\le B_{\Theta} \cdot \frac{1}{t^{\beta(t)+1}} \cdot \frac{L_x}{m+1} \, t^{\alpha(t)+1} \\
&= \frac{B_{\Theta} L_x}{m+1} \cdot t^{\alpha(t)-\beta(t)} \\
&= \frac{(M+1)\Gamma_{\mathrm{up}}}{\Gamma_{\mathrm{low}}} \cdot \frac{L_x}{m+1} \cdot t^{\alpha(t)-\beta(t)} \quad \text{(by definition of } B_{\Theta} \text{ in \eqref{eq:Theta_bound})}.
\end{align*}

Hence, recalling that $L_x = \|\dot{x}\|_{\infty}$,
\begin{equation}\label{eq:elementary_pointwise}
\bigl|\mathscr{V}_{\alpha,\beta}[x](t)\bigr| \le \frac{(M+1)\Gamma_{\mathrm{up}}}{(m+1)\Gamma_{\mathrm{low}}} \, \|\dot{x}\|_{\infty} \cdot t^{\alpha(t)-\beta(t)}.
\end{equation}

\textbf{Step 4: Analysis of the weighting factor modulated by exponent difference.}

Define $\Delta(t) := \beta(t) - \alpha(t)$. Consider the product
\[
\Pi(t) := \min\{1, t^{\Delta(t)}\} \cdot t^{\alpha(t)-\beta(t)}.
\]

We demonstrate that $\Pi(t) \le 1$ for all $t \in (0,T]$. Distinguish two principal cases:

\emph{Case 1: $\Delta(t) \ge 0$ (i.e., $\beta(t) \ge \alpha(t)$).}

- \textbf{Subcase 1.1: $0 < t \le 1$.} Since $t \le 1$ and $\Delta(t) \ge 0$, we have $t^{\Delta(t)} \le 1$, whence $\min\{1, t^{\Delta(t)}\} = t^{\Delta(t)}$. Consequently,
  \[
  \Pi(t) = t^{\Delta(t)} \cdot t^{-\Delta(t)} = 1.
  \]
  
- \textbf{Subcase 1.2: $1 < t \le T$.} Here $t > 1$ and $\Delta(t) \ge 0$ imply $t^{\Delta(t)} \ge 1$, so $\min\{1, t^{\Delta(t)}\} = 1$. Moreover, for $t > 1$, the function $s \mapsto t^{s}$ is strictly increasing in the exponent $s$. Since $-\Delta(t) \le 0$, we have $t^{-\Delta(t)} \le t^{0} = 1$. Therefore
  \[
  \Pi(t) = 1 \cdot t^{-\Delta(t)} \le 1.
  \]

\emph{Case 2: $\Delta(t) < 0$ (i.e., $\beta(t) < \alpha(t)$).}

Set $\eta(t) := -\Delta(t) = \alpha(t) - \beta(t) > 0$. Then $t^{\Delta(t)} = t^{-\eta(t)}$.

- \textbf{Subcase 2.1: $0 < t \le 1$.} For $0 < t \le 1$, we have $t^{-\eta(t)} \ge 1$, so $\min\{1, t^{\Delta(t)}\} = 1$. Furthermore, since $t \le 1$ and $\eta(t) > 0$, the inequality $t^{\eta(t)} \le 1$ holds. Hence
  \[
  \Pi(t) = 1 \cdot t^{\eta(t)} \le 1.
  \]
  
- \textbf{Subcase 2.2: $1 < t \le T$.} When $t > 1$ and $\eta(t) > 0$, the strict inequality $t^{-\eta(t)} < 1$ holds, whence $\min\{1, t^{\Delta(t)}\} = t^{-\eta(t)}$. Thus
  \[
  \Pi(t) = t^{-\eta(t)} \cdot t^{\eta(t)} = 1.
  \]

In every scenario, $\Pi(t) \le 1$. Summarizing,
\begin{equation}\label{eq:Pi_inequality}
\min\{1, t^{\beta(t)-\alpha(t)}\} \cdot t^{\alpha(t)-\beta(t)} \le 1 \quad \text{for all } t \in (0,T].
\end{equation}

\textbf{Step 5: Completion of the weighted estimate.}

Since $t>0$ and both $1$ and $t^{\beta(t)-\alpha(t)}$ are positive, their minimum $\min\{1, t^{\beta(t)-\alpha(t)}\}$ is also positive. Multiplying both sides of \eqref{eq:elementary_pointwise} by this positive factor and employing \eqref{eq:Pi_inequality}, we obtain
\begin{align*}
\min\{1, t^{\beta(t)-\alpha(t)}\} \cdot \bigl|\mathscr{V}_{\alpha,\beta}[x](t)\bigr|
&\le \frac{(M+1)\Gamma_{\mathrm{up}}}{(m+1)\Gamma_{\mathrm{low}}} \, \|\dot{x}\|_{\infty} \cdot 
\min\{1, t^{\beta(t)-\alpha(t)}\} \cdot t^{\alpha(t)-\beta(t)} \\
&\le \frac{(M+1)\Gamma_{\mathrm{up}}}{(m+1)\Gamma_{\mathrm{low}}} \, \|\dot{x}\|_{\infty} \quad \text{(by \eqref{eq:Pi_inequality})}.
\end{align*}

At this stage we have established a bound involving the sup-norm of the derivative $\|\dot{x}\|_{\infty}$. To express the estimate in terms of the $\mathcal{C}^{1}$-norm of $x$, we invoke the relation $\|\dot{x}\|_{\infty} \le \|x\|_{\mathcal{C}^{1}}$, which follows directly from the definition of the $\mathcal{C}^{1}$-norm in \eqref{eq:C1_norm}. Consequently,
\[
\min\{1, t^{\beta(t)-\alpha(t)}\} \cdot \bigl|\mathscr{V}_{\alpha,\beta}[x](t)\bigr| \le 
\frac{(M+1)\Gamma_{\mathrm{up}}}{(m+1)\Gamma_{\mathrm{low}}} \|x\|_{\mathcal{C}^{1}}.
\]

The expression $\dfrac{(M+1)\Gamma_{\mathrm{up}}}{(m+1)\Gamma_{\mathrm{low}}}$ appearing in the estimate is exactly the constant $C_W$ introduced in the theorem statement. Hence we have established
\[
\min\{1, t^{\beta(t)-\alpha(t)}\} \cdot \bigl|\mathscr{V}_{\alpha,\beta}[x](t)\bigr| \le C_W \|x\|_{\mathcal{C}^{1}},
\]
which is precisely the weighted inequality \eqref{eq:weighted_inequality}. This completes the proof of part (i) of the theorem.

\textbf{Step 6: Verification at the initial time.}

According to Definition~\ref{def:velocity_operator}, $\mathscr{V}_{\alpha,\beta}[x](0) = \dot{x}(0)$. Therefore
\[
\bigl|\mathscr{V}_{\alpha,\beta}[x](0)\bigr| = |\dot{x}(0)| \le \|\dot{x}\|_{\infty} \le \|x\|_{\mathcal{C}^{1}},
\]
where the last inequality again follows from the definition of the $\mathcal{C}^{1}$-norm.

Since $C_W \ge 1$ by definition \eqref{eq:weighted_constant_explicit}, we have $\|x\|_{\mathcal{C}^{1}} \le C_W \|x\|_{\mathcal{C}^{1}}$. Combining this with the preceding estimate yields
\[
\bigl|\mathscr{V}_{\alpha,\beta}[x](0)\bigr| \le C_W \|x\|_{\mathcal{C}^{1}}.
\]

Thus part (ii) of the theorem is established. Together with part (i) proved in Steps 1--5, the proof is complete.
\end{proof}

The weighted estimate established in Theorem~\ref{thm:weighted_boundedness} furnishes a versatile tool for analyzing the mapping properties of $\mathscr{V}_{\alpha,\beta}$. To obtain more refined pointwise bounds that explicitly reflect the extremal behavior of the exponent difference $\beta(t)-\alpha(t)$, we now examine several special configurations based on the minimum and maximum values of this difference over the interval $I$.

\begin{corollary}[Fine estimates based on extremal values of the exponent difference]\label{cor:extremal_estimates}
Under the hypotheses of Theorem~\ref{thm:weighted_boundedness}, define the extremal values of the exponent difference:
\begin{equation}\label{eq:extremal_definitions}
\Delta_{\min} := \min_{t \in I} [\beta(t) - \alpha(t)], \qquad
\Delta_{\max} := \max_{t \in I} [\beta(t) - \alpha(t)].
\end{equation}
Since $\alpha, \beta \in \mathcal{C}(I)$ and $I = [0,T]$ is compact, these extrema are attained. From the bounds $m \le \alpha(t), \beta(t) \le M$, we have
\begin{equation}\label{eq:extremal_bounds}
m - M \le \Delta_{\min} \le \beta(t) - \alpha(t) \le \Delta_{\max} \le M - m \quad \text{for all } t \in I.
\end{equation}

\textbf{(i) For $t \in (0,T]$:} The following detailed pointwise estimates hold according to the signs of $\Delta_{\min}$ and $\Delta_{\max}$.

\begin{enumerate}
    \item \textbf{Case A: $\Delta_{\max} \le 0$ (i.e., $\beta(t) \le \alpha(t)$ for all $t \in I$).}
    
    In this situation $\beta(t) - \alpha(t) \le 0$ throughout $I$.
    
    \begin{itemize}
        \item \textbf{Subcase A1: $t \in (0, 1]$.}
        Since $\beta(t) - \alpha(t) \le 0$ and $0 < t \le 1$, we have
        \[
        t^{\beta(t)-\alpha(t)} \ge t^0 = 1.
        \]
        Hence $\min\{1, t^{\beta(t)-\alpha(t)}\} = 1$. Substituting this into the weighted inequality \eqref{eq:weighted_inequality} yields immediately
        \[
        \bigl|\mathscr{V}_{\alpha,\beta}[x](t)\bigr| \le C_W \|x\|_{\mathcal{C}^{1}} \quad \text{for all } t \in (0, 1].
        \]
        
        \item \textbf{Subcase A2: $t \in (1, T]$.}
        Since $\beta(t) - \alpha(t) \le 0$ and $t > 1$, we have
        \[
        t^{\beta(t)-\alpha(t)} \le t^0 = 1,
        \]
        so $\min\{1, t^{\beta(t)-\alpha(t)}\} = t^{\beta(t)-\alpha(t)}$.
        
        Applying the weighted inequality \eqref{eq:weighted_inequality} with this specific value of the minimum gives
        \[
        t^{\beta(t)-\alpha(t)} \bigl|\mathscr{V}_{\alpha,\beta}[x](t)\bigr| \le C_W \|x\|_{\mathcal{C}^{1}}.
        \]
        Equivalently,
        \[
        \bigl|\mathscr{V}_{\alpha,\beta}[x](t)\bigr| \le C_W \|x\|_{\mathcal{C}^{1}} \cdot t^{\alpha(t)-\beta(t)}.
        \]
        
        Define $\eta(t) := \alpha(t) - \beta(t) \ge 0$. Since $t > 1$ and $\eta(t) \le M - m$,
        \[
        t^{\eta(t)} \le t^{M-m} \le T^{M-m}.
        \]
        Consequently,
        \[
        \bigl|\mathscr{V}_{\alpha,\beta}[x](t)\bigr| \le C_W T^{M-m} \|x\|_{\mathcal{C}^{1}} \quad \text{for all } t \in (1, T].
        \]
    \end{itemize}
    
    Combining Subcases A1 and A2, we conclude that when $\Delta_{\max} \le 0$,
\[
\bigl|\mathscr{V}_{\alpha,\beta}[x](t)\bigr| \le C_{\text{caseA}} \|x\|_{\mathcal{C}^{1}} \quad \text{for all } t \in (0,T],
\]
with the explicit constant
\[
C_{\text{caseA}} := \max\bigl\{ C_W,\; C_W T^{M-m} \bigr\} = C_W \cdot \max\bigl\{ 1,\; T^{M-m} \bigr\}.
\]
This yields the pointwise estimate for configurations where $\beta(t) \le \alpha(t)$ holds for all $t \in (0,T]$ (i.e., $\Delta_{\max} \le 0$).
    
    \item \textbf{Case B: $\Delta_{\min} \ge 0$ (i.e., $\beta(t) \ge \alpha(t)$ for all $t \in I$).}
    
    Here $\beta(t) - \alpha(t) \ge 0$ on $I$.
    
    \begin{itemize}
        \item \textbf{Subcase B1: $t \in (0, 1]$.}
        Since $\beta(t) - \alpha(t) \ge 0$ and $0 < t \le 1$, we have
        \[
        t^{\beta(t)-\alpha(t)} \le t^0 = 1,
        \]
        hence $\min\{1, t^{\beta(t)-\alpha(t)}\} = t^{\beta(t)-\alpha(t)}$. The weighted inequality \eqref{eq:weighted_inequality} then gives
        \[
        t^{\beta(t)-\alpha(t)} \bigl|\mathscr{V}_{\alpha,\beta}[x](t)\bigr| \le C_W \|x\|_{\mathcal{C}^{1}}.
        \]
        
        Since $\beta(t) - \alpha(t) \le \Delta_{\max}$ and for $t \in (0,1]$, $t^a$ is non-increasing in $a$, we have
        \[
        t^{\beta(t)-\alpha(t)} \ge t^{\Delta_{\max}}.
        \]
        Therefore
        \[
        t^{\Delta_{\max}} \bigl|\mathscr{V}_{\alpha,\beta}[x](t)\bigr| \le C_W \|x\|_{\mathcal{C}^{1}} \quad \Rightarrow \quad
        \bigl|\mathscr{V}_{\alpha,\beta}[x](t)\bigr| \le C_W t^{-\Delta_{\max}} \|x\|_{\mathcal{C}^{1}}.
        \]
        
        \item \textbf{Subcase B2: $t \in (1, T]$.}
        Since $\beta(t) - \alpha(t) \ge 0$ and $t > 1$, we have
        \[
        t^{\beta(t)-\alpha(t)} \ge t^0 = 1,
        \]
        so $\min\{1, t^{\beta(t)-\alpha(t)}\} = 1$. The weighted inequality \eqref{eq:weighted_inequality} then yields directly
        \[
        \bigl|\mathscr{V}_{\alpha,\beta}[x](t)\bigr| \le C_W \|x\|_{\mathcal{C}^{1}}.
        \]
    \end{itemize}
    
    Summarizing Subcases B1 and B2, when $\Delta_{\min} \ge 0$,
    \[
    \bigl|\mathscr{V}_{\alpha,\beta}[x](t)\bigr| \le 
    \begin{cases}
    C_W t^{-\Delta_{\max}} \|x\|_{\mathcal{C}^{1}}, & 0 < t \le 1, \\
    C_W \|x\|_{\mathcal{C}^{1}}, & 1 < t \le T.
    \end{cases}
    \]
    
        \item \textbf{Case C: $\Delta_{\min} < 0 < \Delta_{\max}$ (mixed case).}
    
    In this situation $\beta(t) - \alpha(t)$ can take both negative and positive values, but always satisfies $\Delta_{\min} \le \beta(t) - \alpha(t) \le \Delta_{\max}$.
    
    \begin{itemize}
        \item \textbf{Subcase C1: $t \in (0, 1]$.}
        For $t \in (0,1]$, since $t^a$ is non-increasing in $a$, we have
        \[
        t^{\Delta_{\max}} \le t^{\beta(t)-\alpha(t)} \le t^{\Delta_{\min}}.
        \]
        Observe that $t^{\Delta_{\max}} \le 1$ (since $\Delta_{\max} > 0$ and $t \le 1$) and $t^{\Delta_{\min}} \ge 1$ (since $\Delta_{\min} < 0$ and $t \le 1$).
        
        From the weighted inequality \eqref{eq:weighted_inequality},
        \[
        \min\{1, t^{\beta(t)-\alpha(t)}\} \cdot \bigl|\mathscr{V}_{\alpha,\beta}[x](t)\bigr| \le C_W \|x\|_{\mathcal{C}^{1}}.
        \]
        
        We now establish a lower bound for the factor $\min\{1, t^{\beta(t)-\alpha(t)}\}$. Two possibilities occur:
        \begin{itemize}
            \item If $\beta(t) - \alpha(t) \ge 0$, then $\min\{1, t^{\beta(t)-\alpha(t)}\} = t^{\beta(t)-\alpha(t)} \ge t^{\Delta_{\max}}$ (since $t^{\beta(t)-\alpha(t)} \ge t^{\Delta_{\max}}$ when $t \le 1$).
            \item If $\beta(t) - \alpha(t) < 0$, then $\min\{1, t^{\beta(t)-\alpha(t)}\} = 1 \ge t^{\Delta_{\max}}$ (because $t^{\Delta_{\max}} \le 1$).
        \end{itemize}
        In either situation, $\min\{1, t^{\beta(t)-\alpha(t)}\} \ge t^{\Delta_{\max}}$. Consequently,
        \[
        t^{\Delta_{\max}} \bigl|\mathscr{V}_{\alpha,\beta}[x](t)\bigr| \le \min\{1, t^{\beta(t)-\alpha(t)}\} \cdot \bigl|\mathscr{V}_{\alpha,\beta}[x](t)\bigr| \le C_W \|x\|_{\mathcal{C}^{1}},
        \]
        which yields
        \[
        \bigl|\mathscr{V}_{\alpha,\beta}[x](t)\bigr| \le C_W t^{-\Delta_{\max}} \|x\|_{\mathcal{C}^{1}}.
        \]
        
        \item \textbf{Subcase C2: $t \in (1, T]$.}
        For $t > 1$, since $t^a$ is increasing in $a$, we have
        \[
        t^{\Delta_{\min}} \le t^{\beta(t)-\alpha(t)} \le t^{\Delta_{\max}}.
        \]
        Note that $t^{\Delta_{\min}} \le 1$ (since $\Delta_{\min} < 0$ and $t > 1$) and $t^{\Delta_{\max}} \ge 1$ (since $\Delta_{\max} > 0$ and $t > 1$).
        
        Again we bound the factor $\min\{1, t^{\beta(t)-\alpha(t)}\}$ from below:
        \begin{itemize}
            \item If $\beta(t) - \alpha(t) \le 0$, then $\min\{1, t^{\beta(t)-\alpha(t)}\} = t^{\beta(t)-\alpha(t)} \ge t^{\Delta_{\min}}$ (since $t^{\beta(t)-\alpha(t)} \ge t^{\Delta_{\min}}$ when $t > 1$).
            \item If $\beta(t) - \alpha(t) > 0$, then $\min\{1, t^{\beta(t)-\alpha(t)}\} = 1 \ge t^{\Delta_{\min}}$ (because $t^{\Delta_{\min}} \le 1$).
        \end{itemize}
        Thus $\min\{1, t^{\beta(t)-\alpha(t)}\} \ge t^{\Delta_{\min}}$ in all cases. Applying \eqref{eq:weighted_inequality} gives
        \[
        t^{\Delta_{\min}} \bigl|\mathscr{V}_{\alpha,\beta}[x](t)\bigr| \le \min\{1, t^{\beta(t)-\alpha(t)}\} \cdot \bigl|\mathscr{V}_{\alpha,\beta}[x](t)\bigr| \le C_W \|x\|_{\mathcal{C}^{1}},
        \]
        which yields
        \[
        \bigl|\mathscr{V}_{\alpha,\beta}[x](t)\bigr| \le C_W t^{-\Delta_{\min}} \|x\|_{\mathcal{C}^{1}}.
        \]
    \end{itemize}
    
    Combining Subcases C1 and C2, when $\Delta_{\min} < 0 < \Delta_{\max}$,
    \[
    \bigl|\mathscr{V}_{\alpha,\beta}[x](t)\bigr| \le 
    \begin{cases}
    C_W t^{-\Delta_{\max}} \|x\|_{\mathcal{C}^{1}}, & 0 < t \le 1, \\
    C_W t^{-\Delta_{\min}} \|x\|_{\mathcal{C}^{1}}, & 1 < t \le T.
    \end{cases}
    \]
\end{enumerate}

\textbf{(ii) At the initial time $t=0$:} As established in part (ii) of Theorem~\ref{thm:weighted_boundedness},
\[
\bigl|\mathscr{V}_{\alpha,\beta}[x](0)\bigr| \le C_W \|x\|_{\mathcal{C}^{1}}.
\]

In all cases, the constant $C_W$ is that defined in Theorem~\ref{thm:weighted_boundedness} (see \eqref{eq:weighted_constant_explicit}):
\[
C_W = \frac{(M+1)\Gamma_{\mathrm{up}}}{(m+1)\Gamma_{\mathrm{low}}}.
\]
\end{corollary}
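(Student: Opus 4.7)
The plan is to derive this corollary as a direct consequence of the weighted inequality \eqref{eq:weighted_inequality} from Theorem~\ref{thm:weighted_boundedness}. The idea is that in each of the three cases listed in the statement, one can either evaluate the factor $\min\{1,t^{\beta(t)-\alpha(t)}\}$ explicitly or bound it from below by a suitable power of $t$ with exponent $\Delta_{\min}$ or $\Delta_{\max}$; rearranging then yields the stated pointwise bound on $|\mathscr{V}_{\alpha,\beta}[x](t)|$.

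First I would split each of Cases A, B, and C according to whether $t\in(0,1]$ or $t\in(1,T]$. The governing observation is the monotonicity of $a\mapsto t^a$: for $t\le 1$ the map is non-increasing, while for $t>1$ it is non-decreasing. In Cases A and B the sign of $\beta(t)-\alpha(t)$ is fixed, so the minimum can be evaluated directly (either $1$ or $t^{\beta(t)-\alpha(t)}$), and the inclusion $\beta(t)-\alpha(t)\in[\Delta_{\min},\Delta_{\max}]\subset[m-M,M-m]$ produces the bounds, including the $T^{M-m}$ factor appearing in the large-$t$ subcase of Case A.

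The subtle part, which I expect to be the main obstacle, is the mixed Case C where the sign of $\beta(t)-\alpha(t)$ varies with $t$. There I cannot compute the minimum in closed form; instead I would establish the uniform lower bounds $\min\{1,t^{\beta(t)-\alpha(t)}\}\ge t^{\Delta_{\max}}$ on $(0,1]$ and $\min\{1,t^{\beta(t)-\alpha(t)}\}\ge t^{\Delta_{\min}}$ on $(1,T]$. Each of these is proved by a short two-branch argument: if the sign of $\beta(t)-\alpha(t)$ selects the power branch of the minimum, the bound follows from monotonicity applied to the inclusion $\beta(t)-\alpha(t)\in[\Delta_{\min},\Delta_{\max}]$; otherwise the minimum equals $1$ and one uses $t^{\Delta_{\max}}\le 1$ on $(0,1]$ or $t^{\Delta_{\min}}\le 1$ on $(1,T]$, which hold precisely because $\Delta_{\max}>0$ and $\Delta_{\min}<0$ in the mixed case.

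Finally, the initial-time bound $|\mathscr{V}_{\alpha,\beta}[x](0)|\le C_W\|x\|_{\mathcal{C}^1}$ is immediate from part (ii) of Theorem~\ref{thm:weighted_boundedness}. Collecting the subcase estimates then delivers the piecewise pointwise bounds displayed in the statement, with no analytical input required beyond Theorem~\ref{thm:weighted_boundedness} and the elementary monotonicity of power functions.
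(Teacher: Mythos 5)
Your proposal matches the paper's argument exactly: both split each case at $t=1$, use the monotonicity of $a\mapsto t^a$ to evaluate or lower-bound $\min\{1,t^{\beta(t)-\alpha(t)}\}$ (directly in Cases A and B, and via the two-branch argument producing $t^{\Delta_{\max}}$ on $(0,1]$ and $t^{\Delta_{\min}}$ on $(1,T]$ in Case C), and then rearrange the weighted inequality from Theorem~\ref{thm:weighted_boundedness}. Correct and same approach.
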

\begin{remark}[Interpretation and significance of the weighted estimates]\label{rem:weighted_estimates_significance}
The results presented in this section illuminate several distinctive features of the memory-weighted velocity operator $\mathscr{V}_{\alpha,\beta}$, revealing how its analytical character is intimately connected to the interplay between the two memory exponents.

\textbf{Modulation by the exponent difference and physical interpretation.} 
Theorem~\ref{thm:weighted_boundedness} demonstrates that the pointwise magnitude of $\mathscr{V}_{\alpha,\beta}[x](t)$ is intrinsically governed by the difference $\beta(t)-\alpha(t)$. The weighting factor $\min\{1, t^{\beta(t)-\alpha(t)}\}$ serves as a natural compensator: precisely when $\beta(t) > \alpha(t)$—so that the elementary estimate $t^{\alpha(t)-\beta(t)}$ could become unbounded as $t \to 0^+$—the factor $\min\{1, t^{\beta(t)-\alpha(t)}\}$ counterbalances this potential growth, thereby ensuring a uniform bound. From a physical perspective, this compensation mechanism reflects how systems with stronger time-scale memory (larger $\beta$) relative to state-increment memory (smaller $\alpha$) require additional regularization near the initial time to maintain bounded rates of change—a feature relevant to modeling processes where short-time behavior is particularly sensitive to historical weighting.

\textbf{Adaptive control through extremal values.} 
Corollary~\ref{cor:extremal_estimates} further refines this understanding by showing how the global extremal values $\Delta_{\min}$ and $\Delta_{\max}$ of the exponent difference translate into explicit pointwise bounds. The casewise analysis illustrates that the operator's behavior bifurcates according to whether $\beta(t)$ is consistently larger than, consistently smaller than, or variably related to $\alpha(t)$ over the interval. These distinct regimes correspond to different physical scenarios: when time-scale memory dominates ($\beta > \alpha$), when state-increment memory dominates ($\beta < \alpha$), or when the relationship varies, as might occur in systems with evolving memory characteristics. This provides a precise mathematical framework for classifying memory effects based on the relative weighting of different aspects of history.

\textbf{Inherent continuity with classical calculus.} 
Notably, the weighted estimate \eqref{eq:weighted_inequality} remains valid at $t=0$, where it reduces to the classical bound $|\dot{x}(0)| \le C_W \|x\|_{\mathcal{C}^{1}}$. This continuity at the initial instant underscores that the memory‑weighted framework does not introduce artificial singularities; rather, it extends the classical derivative in a manner that continuously incorporates historical information while preserving consistency with ordinary differentiation at the starting point—an essential property for physical models that should reduce to established local theories in appropriate limits.

Collectively, these results furnish a nuanced portrait of $\mathscr{V}_{\alpha,\beta}$ as an operator whose regularity—viewed through the lens of functional analysis—is delicately tuned by the relative magnitude of its two memory exponents. The explicit dependence on $\beta(t)-\alpha(t)$, captured both pointwise and through global extrema, offers a transparent link between the choice of memory laws and the resulting analytical behavior. This structured approach to memory modeling—where distinct aspects of history are weighted separately—may prove valuable in describing systems whose memory characteristics evolve in time or differ between various physical mechanisms, such as in materials with complex relaxation spectra or in transport processes with multiple time-scale dependencies.
\end{remark}

\addcontentsline{toc}{section}{REFERENCES}

\appendix
\section{Technical Lemmas on Logarithmic and Power Function Inequalities} \label{app:log_power_inequalities}

This appendix presents several elementary but useful inequalities that provide control 
over logarithmic singularities and power functions. They will be employed in the analysis 
of integral kernels with variable exponents and in establishing uniform estimates.

\begin{lemma}[Logarithmic power-control inequality]\label{lem:log_power}
    For any $\delta > 0$, there exists a constant $C_\delta > 0$ such that for 
    all $x \in (0,1]$,
    \begin{equation}\label{eq:log_power}
        |\ln x| \leq C_\delta \, x^{-\delta}.
    \end{equation}
    An admissible choice is $C_\delta = \frac{1}{\delta e}$.
\end{lemma}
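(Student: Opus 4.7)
The plan is to reduce the inequality to a single-variable optimization and use elementary calculus to pin down the sharp constant. For $x \in (0,1]$ one has $\ln x \leq 0$, so $|\ln x| = -\ln x = \ln(1/x)$. Setting $y := 1/x \geq 1$, the claim is equivalent to showing that $\ln y \leq C_\delta \, y^{\delta}$ for all $y \geq 1$, with the admissible choice $C_\delta = 1/(\delta e)$. This reformulation replaces a singularity at the origin with an asymptotic growth comparison at infinity, which is easier to handle directly.

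Next I would introduce the auxiliary function $g(y) := y^{-\delta} \ln y$ on $[1, \infty)$ and show that its supremum equals $1/(\delta e)$. Differentiation gives
\[
g'(y) = y^{-\delta - 1}\bigl(1 - \delta \ln y\bigr),
\]
so $g'(y) = 0$ exactly at the critical point $y^* = e^{1/\delta}$, with $g'>0$ on $[1, y^*)$ and $g'<0$ on $(y^*, \infty)$. Hence $g$ attains its global maximum on $[1,\infty)$ at $y^*$, and evaluation yields
\[
g(y^*) = e^{-1} \cdot \tfrac{1}{\delta} = \tfrac{1}{\delta e}.
\]
Consequently $\ln y \leq \tfrac{1}{\delta e}\, y^{\delta}$ for every $y \geq 1$ (with equality at $y = y^*$), and substituting $y = 1/x$ yields the required bound $|\ln x| \leq \tfrac{1}{\delta e}\, x^{-\delta}$ for all $x \in (0,1]$. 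The endpoint $x = 1$ is consistent since both sides equal zero or are nonnegative.

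There is no real obstacle here: the argument is a textbook optimization, and the only mild subtlety is ensuring that the critical point $y^* = e^{1/\delta}$ actually lies in the admissible range $[1, \infty)$, which it does because $\delta > 0$ implies $1/\delta > 0$ and hence $y^* > 1$. The sharpness of the constant $1/(\delta e)$ follows from attaining the maximum at $y^*$, so no further refinement is possible within this functional form. I would therefore present the proof as: (i) reduce via $y = 1/x$; (ii) compute the unique critical point of $g$; (iii) read off the maximum value and conclude.
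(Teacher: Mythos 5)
Your proof is correct and is essentially the same optimization argument the paper uses: the paper directly maximizes $f(x) = x^{\delta}\,|\ln x|$ on $(0,1)$, finds the unique critical point $x_0 = e^{-1/\delta}$, and evaluates $f(x_0) = 1/(\delta e)$, which is precisely your computation transported by the change of variable $y = 1/x$ (so $g(y) = f(1/y)$ and $y^* = 1/x_0$). No substantive difference.
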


\begin{proof}
    Consider the auxiliary function $f(x) = x^{\delta} |\ln x|$ for $x \in (0,1)$.  
    Its derivative is
    \[
    f'(x) = \delta x^{\delta-1} |\ln x| + x^{\delta} \cdot \frac{d}{dx}|\ln x|.
    \]
    For $x \in (0,1)$ we have $\frac{d}{dx}|\ln x| = -1/x$; hence
    \[
    f'(x) = x^{\delta-1}\bigl(\delta |\ln x| - 1\bigr).
    \]
    Setting $f'(x)=0$ yields the unique critical point $x_0 = e^{-1/\delta}$.  
    Examining the sign of $f'(x)$:
    \begin{itemize}
        \item For $x \in (0, x_0)$: $\delta|\ln x| > 1$, so $f'(x) > 0$;
        \item For $x \in (x_0, 1)$: $\delta|\ln x| < 1$, so $f'(x) < 0$.
    \end{itemize}
    Consequently $f$ attains its global maximum at $x_0$.  The maximum value is
    \[
    f(x_0) = (e^{-1/\delta})^{\delta} \cdot \frac{1}{\delta} = \frac{1}{\delta e}.
    \]
    Therefore $x^{\delta} |\ln x| \leq \frac{1}{\delta e}$ for all $x \in (0,1)$.  
    Multiplying by $x^{-\delta}$ yields $|\ln x| \leq \frac{1}{\delta e} x^{-\delta}$ 
    for $x \in (0,1)$.  At $x=1$ the inequality holds trivially because 
    $|\ln 1| = 0 \leq \frac{1}{\delta e}$.  Hence \eqref{eq:log_power} is valid on 
    the whole interval $(0,1]$ with $C_\delta = \frac{1}{\delta e}$.
\end{proof}

\begin{lemma}[Logarithmic growth-control inequality]\label{lem:log_growth}
    For any $\alpha_0 > 0$, there exists a constant $C_{\alpha_0} > 0$ such that for 
    all $x \in [1,\infty)$,
    \begin{equation}\label{eq:log_growth}
        |\ln x| \leq C_{\alpha_0} \, x^{\alpha_0}.
    \end{equation}
    An admissible choice is $C_{\alpha_0} = \frac{1}{\alpha_0}$.
\end{lemma}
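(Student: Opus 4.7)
The plan is to reduce the inequality to the standard elementary bound $\ln u \leq u$ for $u \geq 1$ by a simple substitution, or equivalently to verify the claimed bound by a direct monotonicity argument. Since $\ln x \geq 0$ for $x \in [1,\infty)$, we have $|\ln x| = \ln x$, so the task reduces to proving $\ln x \leq \frac{1}{\alpha_0} x^{\alpha_0}$ on $[1,\infty)$.

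First I would introduce the auxiliary function
\[
g(x) := \frac{1}{\alpha_0}\, x^{\alpha_0} - \ln x, \qquad x \in [1, \infty),
\]
and show $g(x) \geq 0$ throughout. Evaluation at the left endpoint yields $g(1) = \frac{1}{\alpha_0} > 0$, establishing the base case. Differentiating gives
\[
g'(x) = x^{\alpha_0 - 1} - \frac{1}{x} = \frac{x^{\alpha_0} - 1}{x},
\]
and for every $x \geq 1$ the numerator satisfies $x^{\alpha_0} - 1 \geq 0$ (since $\alpha_0 > 0$), while the denominator is positive. Hence $g'(x) \geq 0$ on $[1,\infty)$, so $g$ is non-decreasing there, and combining with $g(1) > 0$ yields $g(x) \geq g(1) > 0$ for all $x \geq 1$. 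Rearranging this inequality produces exactly $\ln x \leq \frac{1}{\alpha_0}\, x^{\alpha_0}$, which is \eqref{eq:log_growth} with the admissible choice $C_{\alpha_0} = 1/\alpha_0$.

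An equivalent, essentially cosmetic alternative would be the substitution $u = x^{\alpha_0}$, which maps $[1,\infty)$ bijectively onto itself and converts the claim into the standard inequality $\ln u \leq u$ for $u \geq 1$ (itself a one-line consequence of the same monotonicity argument applied to $u \mapsto u - \ln u$). There is no genuine obstacle in this proof; the only points requiring a brief remark are the sign of $\ln x$ on $[1,\infty)$ (so that the absolute value can be dropped) and the verification that $x^{\alpha_0} \geq 1$ for $x \geq 1$ and $\alpha_0 > 0$, both of which are immediate from the monotonicity of the exponential function.
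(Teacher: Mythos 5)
Your argument is correct and is essentially the paper's proof up to a sign: the paper works with $g_{\alpha_0}(x) = \ln x - x^{\alpha_0}/\alpha_0$ and shows it is decreasing with $g_{\alpha_0}(1) = -1/\alpha_0 < 0$, whereas you work with the negation $g(x) = x^{\alpha_0}/\alpha_0 - \ln x$ and show it is non-decreasing with $g(1) = 1/\alpha_0 > 0$. Both reduce to the same derivative computation and endpoint check, so the two proofs are equivalent.
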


\begin{proof}
    Define $g_{\alpha_0}(x) = \ln x - \frac{x^{\alpha_0}}{\alpha_0}$ for $x \ge 1$.  
    Its derivative is
    \[
    g_{\alpha_0}'(x) = \frac{1}{x} - x^{\alpha_0-1} = \frac{1 - x^{\alpha_0}}{x}.
    \]
    For all $x \ge 1$, we have $x^{\alpha_0} \ge 1$, whence $g_{\alpha_0}'(x) \le 0$.
    Therefore $g_{\alpha_0}$ is decreasing on $[1,\infty)$.

    Consequently,
    \[
    g_{\alpha_0}(x) \leq g_{\alpha_0}(1) = 0 - \frac{1}{\alpha_0} = -\frac{1}{\alpha_0}
    \quad \text{for all } x \ge 1.
    \]
    Hence $\ln x \leq \frac{x^{\alpha_0}}{\alpha_0} - \frac{1}{\alpha_0} < \frac{x^{\alpha_0}}{\alpha_0}$.  
    Because $\ln x \ge 0$ for $x \ge 1$, we obtain $|\ln x| = \ln x \leq \frac{1}{\alpha_0} x^{\alpha_0}$, 
    which is precisely \eqref{eq:log_growth} with $C_{\alpha_0} = \frac{1}{\alpha_0}$.
\end{proof}

\begin{remark}[Unified logarithmic bound]\label{rem:unified_log_bound}
    Combining Lemmas~\ref{lem:log_power} and \ref{lem:log_growth} yields a bound valid 
    for all positive $x$:
    for any $\delta > 0$ and $\alpha > 0$,
    \[
    |\ln x| \leq C_{\delta,\alpha}\,
    \begin{cases}
        x^{-\delta}, & 0 < x \le 1,\\[4pt]
        x^{\alpha},  & x \ge 1,
    \end{cases}
    \]
    where one may take $C_{\delta,\alpha} = \max\bigl(\frac{1}{\delta e},\; \frac{1}{\alpha}\bigr)$.  
    This unified estimate proves convenient when handling integrals whose kernels involve 
    logarithms together with power‑law factors.
\end{remark}

\begin{lemma}[Linear bound for power functions with exponent at most one]\label{lem:power_linear_bound}
    For all $u \geq 0$ and all $\gamma \in (0,1]$, the following inequality holds:
    \[
    u^{\gamma} \leq 1 + u.
    \]
\end{lemma}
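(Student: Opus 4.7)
The plan is to dispatch the inequality by a two-case split on the size of $u$ relative to $1$, exploiting the monotonicity behavior of the map $\gamma \mapsto u^{\gamma}$ on each side of $u=1$. The case $u=0$ is trivial since $u^{\gamma}=0\leq 1$, so I restrict attention to $u>0$.

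First I would handle the regime $0 < u \leq 1$. Here, since the base satisfies $0 < u \leq 1$ and the exponent is positive, the function $\gamma \mapsto u^{\gamma} = e^{\gamma \ln u}$ is non-increasing in $\gamma$ (because $\ln u \leq 0$). Consequently $u^{\gamma} \leq u^{0} = 1 \leq 1+u$, where the last inequality uses $u \geq 0$. Next, for $u \geq 1$, the same function $\gamma \mapsto u^{\gamma}$ is non-decreasing in $\gamma$ (since now $\ln u \geq 0$), so using $\gamma \leq 1$ I obtain $u^{\gamma} \leq u^{1} = u \leq 1+u$. The two cases together cover all $u \geq 0$ and $\gamma \in (0,1]$.

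There is no genuine obstacle here; the only point requiring mild care is making the split exhaustive and ensuring that the boundary value $u=1$ (where $u^{\gamma}=1$) is covered consistently by either branch, which it is. An alternative derivation based on the concavity of $u \mapsto u^{\gamma}$ on $[0,\infty)$ for $\gamma \in (0,1]$, giving the tangent-line estimate $u^{\gamma} \leq 1 + \gamma(u-1)$ at the point $u=1$ followed by the observation $1+\gamma(u-1) \leq 1+u$ for $u \geq 0$ and $\gamma \leq 1$, would also work, but the direct case analysis above is shorter and avoids invoking concavity.
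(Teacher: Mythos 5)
Your proof is correct, and it takes a genuinely different route from the paper's. The paper analyzes the auxiliary function $g(u) = u^{\gamma} - u$ by computing its derivative, locating the unique critical point $u_0 = \gamma^{1/(1-\gamma)}$, and evaluating the maximum value $g(u_0) = (1-\gamma)\gamma^{\gamma/(1-\gamma)}$, which it then bounds strictly below $1$. Your argument instead splits on $u \lessgtr 1$ and exploits the elementary monotonicity of $\gamma \mapsto u^{\gamma}$ on each side: on $(0,1]$ you compare with $u^{0}=1$, and on $[1,\infty)$ with $u^{1}=u$. Your approach is shorter and avoids differential calculus entirely, using only the sign of $\ln u$; the paper's approach is more elaborate but yields additional information, namely the sharp constant $(1-\gamma)\gamma^{\gamma/(1-\gamma)} < 1$ bounding $\sup_{u>0}(u^{\gamma}-u)$ for $\gamma \in (0,1)$, showing in fact that the inequality is strict for $\gamma < 1$. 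For the purpose of the lemma as stated, your version is cleaner; the paper's gives quantitative slack that might matter if one wanted to sharpen the estimate, though the main text never uses that slack.
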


\begin{proof}
    We distinguish two possibilities according to the value of the exponent $\gamma$.

    \vspace{0.5em}
    \noindent
    \textbf{Case 1: $\gamma = 1$.} 
    In this situation the inequality reduces to $u \leq 1 + u$, which is evidently true 
    for every $u \geq 0$.

    \vspace{0.5em}
    \noindent
    \textbf{Case 2: $0 < \gamma < 1$.} 
    
    \textbf{Subcase 2.1: $u = 0$.} 
    The statement becomes $0 \leq 1$, which holds trivially.
    
    \textbf{Subcase 2.2: $u > 0$.} 
    Define the auxiliary function $g(u) = u^{\gamma} - u$ for $u > 0$. 
    Its derivative is
    \[
    g'(u) = \gamma u^{\gamma-1} - 1.
    \]
    Since $\gamma - 1 < 0$, the function $u \mapsto u^{\gamma-1}$ is strictly decreasing on $(0,\infty)$.
    Setting $g'(u) = 0$ yields the unique critical point
    \[
    u_0 = \gamma^{1/(1-\gamma)} > 0,
    \]
    which satisfies $\gamma u_0^{\gamma-1} = 1$, i.e., $u_0^{\gamma-1} = 1/\gamma$.
    
    \vspace{0.5em}
    \noindent
    Because $u^{\gamma-1}$ is strictly decreasing:
    \begin{itemize}
        \item For $0 < u < u_0$: $u^{\gamma-1} > u_0^{\gamma-1} = 1/\gamma$, whence $g'(u) > 0$;
        \item For $u > u_0$: $u^{\gamma-1} < u_0^{\gamma-1} = 1/\gamma$, whence $g'(u) < 0$.
    \end{itemize}
    Consequently $g$ is strictly increasing on $(0, u_0)$ and strictly decreasing on $(u_0, \infty)$;
    therefore $g$ attains its global maximum at $u_0$.

    \vspace{0.5em}
    \noindent
    Evaluating $g$ at the critical point:
    \begin{align*}
        g(u_0) &= u_0^{\gamma} - u_0 \\
               &= \gamma^{\gamma/(1-\gamma)} - \gamma^{1/(1-\gamma)} \\
               &= \gamma^{\gamma/(1-\gamma)} (1 - \gamma).
    \end{align*}
    Set $p = \dfrac{\gamma}{1-\gamma} > 0$. Then $g(u_0) = (1-\gamma) \gamma^{p}$.  
    Since $0 < \gamma < 1$ and $p > 0$, we have $0 < \gamma^{p} < 1$, and consequently
    \[
    g(u_0) < 1 - \gamma < 1.
    \]

    \vspace{0.5em}
    \noindent
    Thus, for every $u > 0$,
    \[
    g(u) \leq g(u_0) < 1.
    \]
    Rewriting this inequality gives $u^{\gamma} - u < 1$, i.e.,
    \[
    u^{\gamma} < 1 + u \quad \text{for all } u > 0.
    \]
    
    \vspace{0.5em}
    \noindent
    Combining Subcases 2.1 and 2.2, we obtain for all $u \geq 0$,
    \[
    u^{\gamma} \leq 1 + u \qquad \text{whenever } 0 < \gamma < 1.
    \]
    
    \vspace{0.5em}
    \noindent
    Finally, merging Cases 1 and 2 establishes the inequality for every $\gamma \in (0,1]$ 
    and every $u \geq 0$.
\end{proof}
\section{Auxiliary convergence results}\label{app:uniform_convergence}

This appendix provides a fundamental yet frequently employed result concerning the uniform convergence 
of products of function sequences. It is utilized in the proof of 
Theorem~\ref{thm:continuous_dependence} to analyze the product $A_n(t)I_n(t)$.

\begin{lemma}[Uniform convergence of products of uniformly convergent and bounded sequences]\label{lem:product_uniform_convergence}
    Let $E \subset \mathbb{R}$ and let $\{f_{1,n}\}, \{f_{2,n}\}, \dots, \{f_{k,n}\}$ 
    be $k$ sequences of real-valued functions defined on $E$.  Assume that:
    \begin{enumerate}
        \item For each $i = 1,\dots,k$, $f_{i,n}$ converges uniformly on $E$ 
              to a limit function $f_i$, i.e.,
              \[
              \lim_{n\to\infty} \sup_{x\in E} \bigl| f_{i,n}(x) - f_i(x) \bigr| = 0.
              \]
        \item Each sequence is uniformly bounded on $E$: there exist constants 
              $M_i > 0$ such that for all $n \geq 1$ and all $x \in E$,
              \[
              \bigl| f_{i,n}(x) \bigr| \leq M_i.
              \]
    \end{enumerate}
    Define the product sequences
    \[
    F_n(x) = \prod_{i=1}^{k} f_{i,n}(x), \qquad 
    F(x)   = \prod_{i=1}^{k} f_i(x), \qquad x \in E.
    \]
    Then $F_n$ converges uniformly on $E$ to $F$.
\end{lemma}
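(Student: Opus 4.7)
The plan is to proceed by induction on the number of factors $k$, after first isolating a short preliminary observation that each limit function $f_i$ inherits the same uniform bound $M_i$ as its approximants. This preliminary step is important because it is invoked verbatim in Step 2.4 of the proof of Theorem~\ref{thm:continuous_dependence}, and it is elementary: fixing $x \in E$ and using $|f_i(x)| \le |f_i(x) - f_{i,n}(x)| + |f_{i,n}(x)| \le |f_i(x) - f_{i,n}(x)| + M_i$, one lets $n \to \infty$ and uses uniform (hence pointwise) convergence to conclude $|f_i(x)| \le M_i$ for every $x \in E$. I will label this as \textbf{Step 1} of the proof so that the cross-reference in Theorem~\ref{thm:continuous_dependence} is correct.

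\textbf{Step 2 (base case $k=2$).} The core of the argument is the standard telescoping identity
\[
f_{1,n}(x) f_{2,n}(x) - f_1(x) f_2(x)
= f_{1,n}(x)\bigl[f_{2,n}(x) - f_2(x)\bigr] + \bigl[f_{1,n}(x) - f_1(x)\bigr] f_2(x).
\]
Taking absolute values and using $|f_{1,n}(x)| \le M_1$ together with the bound $|f_2(x)| \le M_2$ obtained in Step 1 gives the pointwise estimate
\[
\bigl| f_{1,n}(x)f_{2,n}(x) - f_1(x)f_2(x) \bigr|
\le M_1 \, |f_{2,n}(x) - f_2(x)| + M_2 \, |f_{1,n}(x) - f_1(x)|.
\]
Passing to the supremum over $x \in E$ and invoking the two uniform convergences $f_{1,n} \rightrightarrows f_1$ and $f_{2,n} \rightrightarrows f_2$ yields $F_n \rightrightarrows F$ in the case $k = 2$.

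\textbf{Step 3 (inductive step).} Assuming the conclusion for products of $k$ factors, I will write $F_n = G_n \cdot f_{k+1,n}$ with $G_n := \prod_{i=1}^{k} f_{i,n}$ and analogously $G := \prod_{i=1}^{k} f_i$. By the inductive hypothesis, $G_n \rightrightarrows G$ on $E$; moreover $|G_n(x)| \le \prod_{i=1}^k M_i$ is a uniform bound for the sequence $\{G_n\}$. Since $\{f_{k+1,n}\}$ is uniformly bounded by $M_{k+1}$ and converges uniformly to $f_{k+1}$, the $k=2$ case applied to the two sequences $\{G_n\}$ and $\{f_{k+1,n}\}$ (and their limits $G$ and $f_{k+1}$) immediately gives $F_n = G_n f_{k+1,n} \rightrightarrows G f_{k+1} = F$.

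I do not anticipate a serious obstacle: the only subtle point is to remember that the $k=2$ step requires a uniform bound on \emph{one} of the two limit functions, not just on the approximants, which is precisely why Step 1 is spelled out separately and why the induction propagates cleanly (the product $G = \prod_{i=1}^k f_i$ of limits is automatically bounded by $\prod_{i=1}^k M_i$). Everything else is a matter of a single add-subtract decomposition combined with the triangle inequality and the supremum.
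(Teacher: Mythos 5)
Your proof is correct and takes essentially the same approach as the paper: both establish that the limit functions inherit the bound $M_i$ (the paper's Step 1, which is indeed what Theorem~\ref{thm:continuous_dependence} cites), and both then use the add-subtract telescoping trick together with induction on $k$. The only organizational difference is that the paper first proves a $k$-fold telescoping identity by induction and then applies it in a single pass to get the explicit estimate $|F_n(x)-F(x)|\le M_0^{k-1}\sum_{j=1}^{k}|f_{j,n}(x)-f_j(x)|$, whereas you induct directly on the convergence statement itself with a $k=2$ base case; the two are equivalent in substance and in the resulting bound.
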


\begin{proof}
    The proof is organized into five systematic steps.
    
    \vspace{0.5em}
    \noindent
    \textbf{Step 1: Uniform boundedness of the limit functions.}
    From condition (2), for each fixed $x \in E$ and each index $i$,
    the real numbers $a_n = f_{i,n}(x)$ satisfy $|a_n| \leq M_i$ for all $n$.
    Since uniform convergence implies pointwise convergence, $f_{i,n}(x) \to f_i(x)$.
    We now establish that the limit $a = f_i(x)$ also satisfies $|a| \leq M_i$.
    
    Suppose, for a contradiction, that $|a| > M_i$.  Set $\upsilon = |a| - M_i > 0$.
    By pointwise convergence, there exists an integer $N$ such that for all $n > N$,
    $|a_n - a| < \upsilon/2$.  Applying the reverse triangle inequality,
    \[
    |a_n| = |a - (a - a_n)| \geq |a| - |a - a_n| > |a| - \frac{\upsilon}{2}.
    \]
    Substituting $|a| = M_i + \upsilon$ yields
    \[
    |a_n| > (M_i + \upsilon) - \frac{\upsilon}{2} = M_i + \frac{\upsilon}{2} > M_i,
    \]
    which contradicts $|a_n| \leq M_i$.  Hence $|a| \leq M_i$, i.e.,
    \[
    \bigl| f_i(x) \bigr| \leq M_i \quad \text{for all } x \in E.
    \]
    
    Set $M_0 = \max\{M_1,\dots,M_k\}$.  Then for every $i$, every $n$, and every $x \in E$,
    \[
    \bigl| f_{i,n}(x) \bigr| \leq M_0, \qquad 
    \bigl| f_i(x) \bigr| \leq M_0.
    \]
    If $M_0 = 0$, then all functions vanish identically and the conclusion holds trivially. 
    We therefore assume $M_0 > 0$ in the remainder of the proof.
    
    \vspace{0.5em}
    \noindent
    \textbf{Step 2: A telescoping identity for products.}
    For arbitrary real numbers $a_1,\dots,a_k, b_1,\dots,b_k$ we establish the identity
    \begin{equation}\label{eq:product_telescope}
    \prod_{i=1}^{k} a_i - \prod_{i=1}^{k} b_i
    = \sum_{j=1}^{k} \Biggl( \Bigl(\prod_{i=1}^{j-1} a_i\Bigr) 
                            (a_j - b_j) 
                            \Bigl(\prod_{i=j+1}^{k} b_i\Bigr) \Biggr),
    \end{equation}
    where we adopt the following convention for indexed products: for integers $p, q$ 
    (representing the lower and upper limits of the product index), we define
    \[
    \prod_{i=p}^{q} c_i := 
    \begin{cases}
    1, & \text{if } p > q \ (\text{the product contains no factors} \\
        & \text{and is defined to be } 1, \text{ the \textbf{empty product}}), \\[6pt]
    c_p \cdot c_{p+1} \cdots c_q, & \text{if } p \le q .
    \end{cases}
    \]
    In particular, when $p = q$ the product reduces to the single factor $c_p$.
    
    \textbf{Proof of the identity by induction on $k$.}
    
    \emph{Base case $k = 1$:}  
    Left‑hand side: $a_1 - b_1$.  
    Right‑hand side for $j = 1$:
    \[
    \prod_{i=1}^{0} a_i = 1 \ (\text{empty product}), \quad
    \prod_{i=2}^{1} b_i = 1 \ (\text{empty product}),
    \]
    so the right‑hand side equals $1 \cdot (a_1 - b_1) \cdot 1 = a_1 - b_1$.  
    Hence the identity holds for $k = 1$.
    
    \emph{Induction hypothesis:} Assume the identity is valid for some $k \geq 1$.
    
    \emph{Induction step $k \to k+1$:}
    \begin{align*}
        &\prod_{i=1}^{k+1} a_i - \prod_{i=1}^{k+1} b_i \\
        &= \Bigl( \prod_{i=1}^{k} a_i \Bigr) a_{k+1} 
           - \Bigl( \prod_{i=1}^{k} b_i \Bigr) b_{k+1} \\
        &= \Bigl( \prod_{i=1}^{k} a_i \Bigr) a_{k+1} 
           - \Bigl( \prod_{i=1}^{k} a_i \Bigr) b_{k+1}
           + \Bigl( \prod_{i=1}^{k} a_i \Bigr) b_{k+1}
           - \Bigl( \prod_{i=1}^{k} b_i \Bigr) b_{k+1} \\
        &= \Bigl( \prod_{i=1}^{k} a_i \Bigr) (a_{k+1} - b_{k+1})
           + b_{k+1} \Bigl( \prod_{i=1}^{k} a_i - \prod_{i=1}^{k} b_i \Bigr).
    \end{align*}
    Applying the induction hypothesis to the difference enclosed in parentheses,
    \[
    \prod_{i=1}^{k} a_i - \prod_{i=1}^{k} b_i
    = \sum_{j=1}^{k} \Biggl( \Bigl(\prod_{i=1}^{j-1} a_i\Bigr) 
                            (a_j - b_j) 
                            \Bigl(\prod_{i=j+1}^{k} b_i\Bigr) \Biggr),
    \]
    we obtain
    \begin{align*}
        &\prod_{i=1}^{k+1} a_i - \prod_{i=1}^{k+1} b_i \\
        &= \Bigl( \prod_{i=1}^{k} a_i \Bigr) (a_{k+1} - b_{k+1}) \\
        &\quad + b_{k+1} \sum_{j=1}^{k} 
           \Biggl( \Bigl(\prod_{i=1}^{j-1} a_i\Bigr) 
                  (a_j - b_j) 
                  \Bigl(\prod_{i=j+1}^{k} b_i\Bigr) \Biggr).
    \end{align*}
    In each term of the second sum, the factor $b_{k+1}$ can be absorbed into the 
    final product, yielding $\prod_{i=j+1}^{k+1} b_i$.  
    The first term corresponds to the index $j = k+1$, since in this case 
    $\prod_{i=1}^{j-1} a_i = \prod_{i=1}^{k} a_i$ (by the definition above, when $j = k+1$, 
    the product runs from $i=1$ to $i=k$) and 
    $\prod_{i=j+1}^{k+1} b_i$ is an empty product (as $j+1 = k+2 > k+1$) and equals $1$.  
    Consequently, the entire expression equals
    \[
    \sum_{j=1}^{k+1} \Biggl( \Bigl(\prod_{i=1}^{j-1} a_i\Bigr) 
                            (a_j - b_j) 
                            \Bigl(\prod_{i=j+1}^{k+1} b_i\Bigr) \Biggr),
    \]
    which is precisely \eqref{eq:product_telescope} with $k$ replaced by $k+1$.  
    By the principle of mathematical induction, the identity holds for every positive integer $k$.
    
    \vspace{0.5em}
    \noindent
    \textbf{Step 3: Applying the identity to $F_n(x) - F(x)$.}
    Take $a_i = f_{i,n}(x)$ and $b_i = f_i(x)$ in \eqref{eq:product_telescope}.
    Utilizing the uniform bounds established in Step 1,
    \[
    \Bigl| \prod_{i=1}^{j-1} f_{i,n}(x) \Bigr| \leq M_0^{\,j-1}, \qquad
    \Bigl| \prod_{i=j+1}^{k} f_i(x) \Bigr| \leq M_0^{\,k-j},
    \]
    for every index $j = 1,\dots,k$, where these bounds follow from the fact that each factor 
    is bounded in absolute value by $M_0$.  Consequently,
    \begin{align}
    |F_n(x) - F(x)|
    &\leq \sum_{j=1}^{k} 
         \Bigl| \prod_{i=1}^{j-1} f_{i,n}(x) \Bigr|
         \bigl| f_{j,n}(x) - f_j(x) \bigr|
         \Bigl| \prod_{i=j+1}^{k} f_i(x) \Bigr| \notag \\
    &\leq \sum_{j=1}^{k} M_0^{\,j-1} 
           \bigl| f_{j,n}(x) - f_j(x) \bigr|
           M_0^{\,k-j} \notag \\
    &= M_0^{\,k-1} \sum_{j=1}^{k} \bigl| f_{j,n}(x) - f_j(x) \bigr|. \label{eq:product_bound}
    \end{align}
    
    \vspace{0.5em}
    \noindent
    \textbf{Step 4: Uniform convergence of the individual factors.}
    Let $\zeta > 0$ be arbitrary.  Since each $f_{i,n}$ converges uniformly to $f_i$,
    there exists, for each $i$, an integer $N_i$ such that for all $n > N_i$,
    \[
    \sup_{x\in E} \bigl| f_{i,n}(x) - f_i(x) \bigr| < \frac{\zeta}{k M_0^{\,k-1}}.
    \]
    Set $N_0 = \max\{N_1,\dots,N_k\}$.
    
    \vspace{0.5em}
    \noindent
    \textbf{Step 5: Completion of the uniform convergence proof.}
    For every $n > N_0$ and every $x \in E$, inequality \eqref{eq:product_bound} combined with 
    the estimate from Step 4 yields
    \[
    |F_n(x) - F(x)| 
    < M_0^{\,k-1} \cdot k \cdot \frac{\zeta}{k M_0^{\,k-1}} = \zeta.
    \]
    Since the integer $N_0$ is independent of $x$, we have demonstrated that
    \[
    \sup_{x\in E} |F_n(x) - F(x)| < \zeta \qquad \text{for all } n > N_0,
    \]
    which is precisely the definition of uniform convergence of $F_n$ to $F$ on $E$.
\end{proof}

\begin{remark}
    This lemma represents a standard result in analysis; we present a complete proof here 
    for completeness and to maintain a self‑contained exposition.  The argument demonstrates 
    that the product of finitely many function sequences converges uniformly whenever each 
    factor sequence converges uniformly and is uniformly bounded.
\end{remark}

\section{Systematic Construction of Error-Control Functions} 
\label{app:error_control}

This appendix provides a self‑contained construction of auxiliary functions $\epsilon(s)$ that furnish uniform bounds for the Taylor remainder $r(h)=x(h)-x(0)-\dot{x}(0)h$ on intervals $[0,s]$. Unlike the pointwise limit $r(h)/h\to0$, the functions $\epsilon(s)$ capture the uniform decay rate, which is essential for handling integrals involving $r(\tau)$ over the whole interval $[0,t]$. The construction will be invoked in Step 6 of the proof of Theorem~\ref{thm:uniform_memory_limit}, and all required properties are established rigorously.

\subsection{Basic Setup and Notation}

Throughout this appendix we work on the fixed time interval $I = [0, T]$ with $T > 0$, exactly as in the main text. Let $x \in \mathcal{C}^{1}(I)$ be a continuously differentiable real-valued function, and let
\[
a_0 := \dot{x}(0)
\]
denote the classical derivative of $x$ at the initial instant $t = 0$.

\begin{definition}[Remainder function] \label{def:remainder_function_app}
For each $h \in I$, define the \textbf{remainder function}
\[
r(h) := x(h) - x(0) - a_0 h.
\]
Thus $r(h)$ is the remainder term in the first-order Taylor expansion of $x$ about the origin.
\end{definition}

To obtain a \emph{uniform} quantitative control over this decay—a control that will be essential for the limit analysis in the main proof—we introduce the following central object.

\begin{definition}[Error‑control function $\epsilon(s)$] \label{def:epsilon_function_app}
For $s > 0$, define
\[
\epsilon(s) := \sup_{0 < h \leq s} \Bigl| \frac{r(h)}{h} \Bigr|,
\]
and set $\epsilon(0) := 0$ by convention.
\end{definition}

Since our application concerns the right‑hand limit $t \to 0^{+}$, we restrict $h$ to be positive in the supremum. (For a bilateral analysis one would simply replace $0 < h \leq s$ by $0 < |h| \leq s$.)

\subsection{Well‑Posedness and Fundamental Properties}

We begin by establishing that the error‑control function $\epsilon(s)$ constitutes a well‑defined real‑valued function. A pivotal observation is that the supremum appearing in Definition~\ref{def:epsilon_function_app} is not merely finite but is actually attained; this follows from the continuity of a suitably extended version of the pointwise ratio $r(h)/h$.

\begin{lemma}[Continuity of the extended ratio function] \label{lem:continuity_aux_app}
Define the auxiliary function $\tilde{f}: [0, T] \to \mathbb{R}$ by
\[
\tilde{f}(h) := 
\begin{cases}
\displaystyle \frac{r(h)}{h}, & h \in (0, T], \\[10pt]
0, & h = 0.
\end{cases}
\]
Then $\tilde{f}$ is continuous on the compact interval $[0, T]$. (On the open‑closed interval $(0, T]$, $\tilde{f}$ coincides pointwise with $r(h)/h$.)
\end{lemma}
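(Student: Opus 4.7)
The plan is to split the continuity verification into two regimes: the open part $(0,T]$, where continuity is automatic, and the boundary point $h=0$, where continuity reduces precisely to the definition of $\dot{x}(0)$.

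First, I would establish continuity on $(0,T]$. Since $x$ is continuous on $I$ (as $x \in \mathcal{C}^1(I)$) and the maps $h \mapsto x(0)$ and $h \mapsto a_0 h$ are obviously continuous, the remainder $r(h) = x(h) - x(0) - a_0 h$ is continuous on $[0,T]$ as an algebraic combination of continuous functions. On the subset $(0,T]$ the denominator $h$ is continuous and non-vanishing, so the quotient $r(h)/h$ is continuous there by the standard quotient rule. Thus $\tilde{f}$ is continuous at every $h_0 \in (0,T]$.

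Next, I would verify continuity at $h=0$, which is the only non-trivial point. By definition, $\tilde{f}(0)=0$, so I need to show $\lim_{h \to 0^+} \tilde{f}(h) = 0$. Writing out the ratio,
\[
\frac{r(h)}{h} = \frac{x(h)-x(0)-a_0 h}{h} = \frac{x(h)-x(0)}{h} - a_0,
\]
valid for $h \in (0,T]$. Since $x$ is differentiable at $0$ with $\dot{x}(0) = a_0$, the classical definition of the one-sided derivative yields $\lim_{h \to 0^+} \frac{x(h)-x(0)}{h} = a_0$, and hence
\[
\lim_{h \to 0^+} \tilde{f}(h) = a_0 - a_0 = 0 = \tilde{f}(0).
\]
This establishes continuity at $h=0$, completing the argument on the whole compact interval $[0,T]$.

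There is no real obstacle here: the lemma is essentially a rephrasing of the definition of differentiability at the origin, repackaged into a statement about the continuity of the removable-singularity extension of $r(h)/h$. The only delicate point worth emphasizing is that continuity of the extension requires $\dot{x}(0)$ to exist in the ordinary sense (which is granted by $x \in \mathcal{C}^1(I)$); no stronger regularity, such as a Hölder condition or continuity of $\dot{x}$ at $0$ beyond the value $\dot{x}(0)$, is invoked in the argument. Consequently the lemma holds under the weakest possible hypothesis consistent with the setup.
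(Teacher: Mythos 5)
Your proof is correct and follows essentially the same route as the paper's: continuity on $(0,T]$ by the quotient rule, and right-continuity at $0$ by rewriting $r(h)/h$ as the difference quotient minus $a_0$ and invoking the definition of $\dot{x}(0)$. Your closing observation that only the existence of $\dot{x}(0)$ is needed (not continuity of $\dot{x}$) is accurate but not developed further in the paper.
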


\begin{proof}
We verify continuity by examining the two components of the domain separately.

\textbf{Continuity on $(0, T]$.} For $h > 0$, the numerator $r(h) = x(h) - x(0) - a_0 h$ is continuous owing to the inclusion $x \in \mathcal{C}^{1}(I) \subset \mathcal{C}(I)$. The denominator $h$ is continuous and non‑zero throughout $(0, T]$. Consequently, $\tilde{f}$, being the quotient of two continuous functions with a non‑vanishing denominator, is continuous on $(0, T]$.

\textbf{Right‑continuity at the endpoint $h = 0$.} By the definition of the (right‑hand) derivative at the origin,
\[
\lim_{h \to 0^{+}} \tilde{f}(h) = \lim_{h \to 0^{+}} \frac{r(h)}{h}
= \lim_{h \to 0^{+}} \Bigl( \frac{x(h)-x(0)}{h} - a_0 \Bigr) = 0 = \tilde{f}(0).
\]
Thus $\tilde{f}$ is right‑continuous at $h = 0$.

Since $\tilde{f}$ is continuous on $(0, T]$ and right‑continuous at the left endpoint, it follows that $\tilde{f}$ is continuous on the entire interval $[0, T]$. In particular, its restriction to any closed subinterval $[0, s]$ with $0 < s \le T$ remains continuous.
\end{proof}

\begin{proposition}[Existence and finiteness of $\epsilon(s)$] \label{prop:epsilon_finite_app}
For every $s \in [0, T]$, the quantity $\epsilon(s)$ defined in Definition~\ref{def:epsilon_function_app} exists (the supremum is attained) and constitutes a finite non‑negative real number.
\end{proposition}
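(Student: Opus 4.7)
The plan is to reduce the statement to the extreme value theorem applied to the continuous extension $\tilde{f}$ furnished by Lemma~\ref{lem:continuity_aux_app}. First, for $s = 0$ the value $\epsilon(0) = 0$ is assigned by convention in Definition~\ref{def:epsilon_function_app}, so it is manifestly a finite non-negative real number; this case is disposed of immediately.

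For $s \in (0, T]$, I would argue as follows. By Lemma~\ref{lem:continuity_aux_app}, the extended function $\tilde{f}$ is continuous on the full interval $[0, T]$, and in particular on the compact subinterval $[0, s]$. Hence $|\tilde{f}|$ is continuous on $[0, s]$, so the extreme value theorem guarantees the existence of a point $h^{*} \in [0, s]$ at which $|\tilde{f}|$ attains its maximum; denote this maximum by $M(s) := |\tilde{f}(h^{*})| \in [0, \infty)$.

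The second step is to identify $\epsilon(s)$ with $M(s)$. Since $\tilde{f}(h) = r(h)/h$ for $h \in (0, s]$, while $\tilde{f}(0) = 0$, we have
\[
\max_{h \in [0, s]} |\tilde{f}(h)| = \max\Bigl\{\,0,\ \sup_{0 < h \leq s} \bigl|r(h)/h\bigr|\,\Bigr\} = \sup_{0 < h \leq s} \bigl|r(h)/h\bigr|,
\]
because the supremum over $(0, s]$ is non-negative and therefore dominates the value $0$ contributed by the endpoint. Consequently $\epsilon(s) = M(s)$, which is finite, non-negative, and in fact attained at some $h^{*} \in [0, s]$.

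No real obstacle is expected here; the only subtlety is the mild bookkeeping verifying that the supremum over the half-open interval $(0, s]$ coincides with the maximum over the closed interval $[0, s]$, which is guaranteed by the removable-singularity structure of $r(h)/h$ at the origin established in Lemma~\ref{lem:continuity_aux_app}. This completes the argument for all $s \in [0, T]$.
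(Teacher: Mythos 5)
Your proposal is correct and follows essentially the same route as the paper's proof: both dispose of $s=0$ by the convention, invoke Lemma~\ref{lem:continuity_aux_app} to obtain continuity of $\tilde{f}$ (hence $|\tilde{f}|$) on the compact interval $[0,s]$, apply the extreme value theorem, and identify the supremum over $(0,s]$ with the attained maximum over $[0,s]$ via the observation that the endpoint $h=0$ contributes the value $0$ and thus cannot dominate.
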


\begin{proof}
The case $s = 0$ is immediate, since $\epsilon(0) = 0$ by definition.

Now fix $s > 0$. Lemma~\ref{lem:continuity_aux_app} implies that the extended function $\tilde{f}$ is continuous on every compact subinterval of $[0, T]$; in particular, it is continuous on $[0, s]$. The composition $|\tilde{f}|$, defined by $|\tilde{f}|(h) = |\tilde{f}(h)|$, is therefore likewise continuous on $[0, s]$.

A key observation is that, because $\tilde{f}(0)=0$, the two suprema
\[
\sup_{0 < h \leq s} \bigl| \tilde{f}(h) \bigr| \quad\text{and}\quad \sup_{0 \leq h \leq s} \bigl| \tilde{f}(h) \bigr|
\]
coincide. Indeed, adjoining the point $h=0$ (where $|\tilde{f}(0)|=0$) cannot elevate the supremum. Hence
\[
\epsilon(s) = \sup_{0 < h \leq s} \bigl| \tilde{f}(h) \bigr| = \sup_{0 \leq h \leq s} \bigl| \tilde{f}(h) \bigr|.
\]

Since $|\tilde{f}|$ is continuous on the compact interval $[0, s]$, the extreme‑value theorem guarantees that it attains its maximum at some point $h_s \in [0, s]$. Therefore
\[
\epsilon(s) = \max_{0 \leq h \leq s} \bigl| \tilde{f}(h) \bigr| = \bigl| \tilde{f}(h_s) \bigr| < +\infty.
\]
Finally, $\epsilon(s) \geq 0$ because it is defined as a supremum of absolute values.
\end{proof}

\begin{remark}
The fact that the supremum in the definition of $\epsilon(s)$ is actually attained (i.e., $\epsilon(s)$ equals a maximum) follows from the continuity of $\tilde{f}$ (Lemma~\ref{lem:continuity_aux_app}) coupled with the compactness of $[0,s]$. For the proof of Theorem~\ref{thm:uniform_memory_limit}, two attributes of $\epsilon(s)$ are essential: 
\begin{enumerate}
    \item $\epsilon(s) < \infty$ for every $s>0$;
    \item $|r(h)/h| \leq \epsilon(s)$ for all $0 < h \leq s$ (uniform control).
\end{enumerate}
\end{remark}

\subsection{Limit Behavior at the Origin}

The utility of $\epsilon(s)$ stems from the fact that it furnishes a uniform upper bound for $|r(h)/h|$ on $[0,s]$, and---as demonstrated by the following proposition---this bound itself decays to zero as $s\to0^{+}$.

\begin{proposition}[Vanishing limit of $\epsilon(s)$] \label{prop:epsilon_limit_zero_app}
\[
\lim_{s \to 0^{+}} \epsilon(s) = 0.
\]
\end{proposition}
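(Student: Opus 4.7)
The plan is to exploit directly the continuity of the extended ratio function $\tilde{f}$ at the origin established in Lemma~\ref{lem:continuity_aux_app}, together with the reformulation of $\epsilon(s)$ as an attained maximum provided by Proposition~\ref{prop:epsilon_finite_app}. Once these two ingredients are in hand, the conclusion follows by a brief $\varepsilon$--$\delta$ argument.

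First, I would record a preliminary monotonicity observation: if $0 \le s_1 \le s_2 \le T$, then the set $(0, s_1]$ is contained in $(0, s_2]$, whence $\epsilon(s_1) \le \epsilon(s_2)$. Combined with the finiteness guaranteed by Proposition~\ref{prop:epsilon_finite_app} and the convention $\epsilon(0) = 0$, this implies that $\epsilon$ is nondecreasing on $[0, T]$ and bounded below by $0$, so the right-hand limit $\lim_{s \to 0^{+}} \epsilon(s)$ exists in $[0, +\infty)$. It therefore suffices to show that this limit does not exceed any prescribed positive quantity.

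Next, I would fix an arbitrary $\eta > 0$. Since $\tilde{f}$ is continuous at $0$ with $\tilde{f}(0) = 0$ by Lemma~\ref{lem:continuity_aux_app}, there exists $\delta \in (0, T]$ such that $|\tilde{f}(h)| < \eta$ for every $h \in [0, \delta]$. Invoking the identification $\epsilon(s) = \max_{0 \le h \le s} |\tilde{f}(h)|$ from Proposition~\ref{prop:epsilon_finite_app}, I would then deduce that for every $s \in (0, \delta]$,
\[
\epsilon(s) \;=\; \max_{0 \le h \le s} \bigl|\tilde{f}(h)\bigr| \;\le\; \max_{0 \le h \le \delta} \bigl|\tilde{f}(h)\bigr| \;\le\; \eta.
\]
Combined with $\epsilon(s) \ge 0$, this gives $0 \le \limsup_{s \to 0^{+}} \epsilon(s) \le \eta$, and the arbitrariness of $\eta$ yields $\lim_{s \to 0^{+}} \epsilon(s) = 0$.

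There is no real obstacle here: the analytical content of the statement is entirely encoded in Lemma~\ref{lem:continuity_aux_app}, which in turn rests on the differentiability of $x$ at the origin. The present proposition is, in effect, a clean repackaging of that continuity into a uniform remainder bound suitable for application in Step~6 of the proof of Theorem~\ref{thm:uniform_memory_limit}. The only point requiring a modest amount of care is the passage from the defining supremum over the half-open set $(0, s]$ to the maximum over the closed set $[0, s]$, but this has already been justified by the extended-ratio argument in Proposition~\ref{prop:epsilon_finite_app}, so no additional work is needed.
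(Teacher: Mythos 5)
Your proof is correct and follows essentially the same route as the paper's: both arguments hinge on the continuity of $\tilde{f}$ at the origin (Lemma~\ref{lem:continuity_aux_app}), the identification $\epsilon(s) = \sup_{0 \le h \le s}|\tilde{f}(h)|$ from Proposition~\ref{prop:epsilon_finite_app}, and a short $\varepsilon$--$\delta$ argument. The preliminary monotonicity remark you include is a harmless addition (the paper defers it to Lemma~\ref{lem:epsilon_monotone_app}); the only cosmetic point is that continuity at $0$ directly yields $|\tilde{f}(h)| < \eta$ on a half-open interval $[0,\delta)$, so one should either shrink $\delta$ or write the bound as the paper does, but this does not affect the substance.
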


\begin{proof}
From Definition~\ref{def:epsilon_function_app}, $\epsilon(s) = \sup_{0 < h \leq s} |r(h)/h|$. By Lemma~\ref{lem:continuity_aux_app}, the function $\tilde{f}$ coincides with $r(h)/h$ for $h>0$ and satisfies $\tilde{f}(0)=0$. Moreover, the proof of Proposition~\ref{prop:epsilon_finite_app} shows that $\epsilon(s) = \sup_{0 \leq h \leq s} |\tilde{f}(h)|$ (since adjoining the point $h=0$, where $|\tilde{f}(0)|=0$, does not alter the supremum).

Let $\varepsilon_0 > 0$ be arbitrary. Since $\tilde{f}$ is continuous at $0$ and $\tilde{f}(0)=0$, there exists $\delta > 0$ such that for all $h$ satisfying $0 \leq h < \delta$, we have $|\tilde{f}(h)| < \varepsilon_0$.

Now select any $s$ with $0 < s < \delta$. For every $h \in [0, s]$ we have $h < \delta$, and consequently $|\tilde{f}(h)| < \varepsilon_0$. By the definition of the supremum,
\[
\epsilon(s) = \sup_{0 \leq h \leq s} |\tilde{f}(h)| \leq \varepsilon_0.
\]

Since $\varepsilon_0 > 0$ was arbitrary, this establishes $\lim_{s \to 0^{+}} \epsilon(s) = 0$.
\end{proof}

\subsection{Monotonicity and the Uniform Control Parameter $\epsilon_{t}$}

In the principal proof we shall require estimates that are uniform over an entire interval $[0,t]$, simultaneously governing $r(\tau)$ for all $0\leq\tau\leq t$ as well as $r(t)$ itself. The monotonicity of $\epsilon(s)$ and the introduction of a compact notation $\epsilon_t$ will facilitate this analysis.

\begin{lemma}[Monotonicity of $\epsilon(s)$] \label{lem:epsilon_monotone_app}
The error‑control function $\epsilon(s)$ is non‑decreasing on $[0, T]$; that is, for any $0 \leq s_{1} < s_{2} \leq T$,
\[
\epsilon(s_{1}) \leq \epsilon(s_{2}).
\]
\end{lemma}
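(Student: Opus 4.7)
The plan is to establish monotonicity by appealing directly to the set-monotonicity of the supremum. Specifically, for $0 < s_1 < s_2 \leq T$ the index set defining $\epsilon(s_1)$ is contained in the one defining $\epsilon(s_2)$, so enlarging the index set cannot decrease the supremum of the same non-negative function $h \mapsto |r(h)/h|$. The well-posedness of both quantities as finite real numbers has already been secured by Proposition~\ref{prop:epsilon_finite_app}, so no issue of indeterminacy arises.

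First I would dispose of the boundary case $s_1 = 0$. Here $\epsilon(0) = 0$ holds by convention from Definition~\ref{def:epsilon_function_app}, whereas for $s_2 > 0$ the value $\epsilon(s_2) = \sup_{0 < h \leq s_2} |r(h)/h|$ is manifestly non-negative as a supremum of absolute values. Hence $\epsilon(0) \leq \epsilon(s_2)$ is immediate.

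For the main case $0 < s_1 < s_2 \leq T$, I would invoke the inclusion of index sets
\[
\{h \in \mathbb{R} : 0 < h \leq s_1\} \subseteq \{h \in \mathbb{R} : 0 < h \leq s_2\}
\]
together with the elementary fact that if $A \subseteq B$ and $\phi : B \to [0,\infty)$, then $\sup_A \phi \leq \sup_B \phi$. Applying this with $\phi(h) = |r(h)/h|$ yields
\[
\epsilon(s_1) = \sup_{0 < h \leq s_1} \Bigl|\frac{r(h)}{h}\Bigr| \;\leq\; \sup_{0 < h \leq s_2} \Bigl|\frac{r(h)}{h}\Bigr| = \epsilon(s_2),
\]
which is the desired inequality.

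No genuine obstacle is anticipated: the result is essentially a reformulation of the monotonicity of the supremum under set inclusion. The only subtlety worth flagging is the boundary convention $\epsilon(0) := 0$, which must be treated separately because the supremum over the empty set $\{h : 0 < h \leq 0\}$ would otherwise be ambiguous; the convention and the non-negativity of $\epsilon(s_2)$ together resolve this cleanly.
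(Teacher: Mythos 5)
Your proof is correct and takes essentially the same approach as the paper's: both rest on the fact that the supremum is monotone under inclusion of index sets. The only difference is cosmetic — the paper phrases the supremum in terms of the extended function $\tilde{f}$ while you work directly with $|r(h)/h|$ and handle the $s_1 = 0$ boundary case explicitly, which is if anything slightly more careful than the paper's treatment.
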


\begin{proof}
Recall from Definition~\ref{def:epsilon_function_app} that $\epsilon(s) = \sup_{0<h\leq s} |\tilde{f}(h)|$, where $\tilde{f}$ denotes the continuous extension from Lemma~\ref{lem:continuity_aux_app}. For $0 \leq s_1 < s_2 \leq T$, the inclusion $\{h : 0<h\leq s_1\} \subset \{h : 0<h\leq s_2\}$ yields
\[
\epsilon(s_1) = \sup_{0<h\leq s_1} |\tilde{f}(h)| \leq \sup_{0<h\leq s_2} |\tilde{f}(h)| = \epsilon(s_2). \qedhere
\]
\end{proof}

Thus enlarging the interval $[0,s]$ cannot diminish the maximal relative error $|r(h)/h|$.

Since $\epsilon(s)$ is non‑decreasing, the supremum over $s\in[0,t]$ coincides with $\epsilon(t)$. This observation motivates the following compact notation, which will streamline the estimates in the main proof.

\begin{definition}[Uniform error‑control parameter $\epsilon_{t}$] \label{def:epsilon_t_app}
For $t \in [0, T]$, define
\[
\epsilon_{t} := \sup_{0 \leq s \leq t} \epsilon(s).
\]
\end{definition}

\begin{proposition}[Explicit form and properties of $\epsilon_{t}$] \label{prop:epsilon_t_explicit_app}
For every $t \in [0, T]$,
\[
\epsilon_{t} = \epsilon(t).
\]
Consequently, $\epsilon_{t}$ inherits the essential properties of $\epsilon(t)$:
\begin{enumerate}
    \item $\epsilon_{t} \geq 0$;
    \item If $0 \leq t_{1} < t_{2} \leq T$, then $\epsilon_{t_{1}} \leq \epsilon_{t_{2}}$ (monotonicity);
    \item $\displaystyle \lim_{t \to 0^{+}} \epsilon_{t} = 0$.
\end{enumerate}
\end{proposition}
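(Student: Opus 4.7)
The plan is to first establish the central identification $\epsilon_t = \epsilon(t)$, and then read off the three listed properties as immediate corollaries of results already proved for $\epsilon$.

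For the identification, I would argue as follows. By Definition~\ref{def:epsilon_t_app}, $\epsilon_t = \sup_{0\leq s\leq t}\epsilon(s)$. Lemma~\ref{lem:epsilon_monotone_app} states that $\epsilon$ is non-decreasing on $[0,T]$, so for every $s\in[0,t]$ we have $\epsilon(s)\leq\epsilon(t)$, giving the upper bound $\epsilon_t\leq\epsilon(t)$. For the reverse inequality, note that $t$ itself lies in the set $[0,t]$ over which the supremum is taken, so $\epsilon(t)\leq\sup_{0\leq s\leq t}\epsilon(s)=\epsilon_t$. Combining the two bounds yields $\epsilon_t=\epsilon(t)$.

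Once this identification is in hand, the three inherited properties follow with essentially no extra work. Property (1), non-negativity, comes directly from Proposition~\ref{prop:epsilon_finite_app}, which asserts $\epsilon(t)\geq 0$. Property (2), monotonicity, is a restatement of Lemma~\ref{lem:epsilon_monotone_app}: for $0\leq t_1<t_2\leq T$, the identification gives $\epsilon_{t_1}=\epsilon(t_1)\leq\epsilon(t_2)=\epsilon_{t_2}$. Property (3), the vanishing limit at the origin, follows from Proposition~\ref{prop:epsilon_limit_zero_app} via
\[
\lim_{t\to 0^+}\epsilon_t=\lim_{t\to 0^+}\epsilon(t)=0.
\]

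There is essentially no obstacle here: the proposition is a clean consequence of the fact that the supremum of a non-decreasing function over $[0,t]$ is attained at the right endpoint, a one-line observation once the monotonicity of $\epsilon$ (Lemma~\ref{lem:epsilon_monotone_app}) is in place. The only care required is to make the supremum--maximum step explicit and to cite the appropriate earlier result for each of the three inherited properties, so that the passage from $\epsilon$ to $\epsilon_t$ is transparent and the notation $\epsilon_t$ can be freely substituted for $\epsilon(t)$ in Step 6 of the proof of Theorem~\ref{thm:uniform_memory_limit}.
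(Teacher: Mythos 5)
Your proof is correct and follows essentially the same route as the paper's: you establish $\epsilon_t = \epsilon(t)$ by combining the monotonicity of $\epsilon$ (Lemma~\ref{lem:epsilon_monotone_app}) with the observation that the supremum over $[0,t]$ is attained at the right endpoint $s=t$, then read off the three properties from Proposition~\ref{prop:epsilon_finite_app}, Lemma~\ref{lem:epsilon_monotone_app}, and Proposition~\ref{prop:epsilon_limit_zero_app}. The only cosmetic difference is that you split the identification into two explicit inequalities, while the paper states it in a single sentence.
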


\begin{proof}
By Lemma~\ref{lem:epsilon_monotone_app}, $\epsilon(s)$ is non‑decreasing. Therefore for a fixed $t > 0$,
\[
\sup_{0 \leq s \leq t} \epsilon(s) = \epsilon(t),
\]
since $\epsilon(s) \leq \epsilon(t)$ for all $s \in [0,t]$ and the value $\epsilon(t)$ is attained at $s=t$. The left‑hand side is precisely $\epsilon_t$ according to Definition~\ref{def:epsilon_t_app}, whence $\epsilon_t = \epsilon(t)$. The case $t = 0$ follows directly from the conventions $\epsilon_0 = \epsilon(0) = 0$.

Properties (1) and (2) are immediate consequences of the corresponding properties of $\epsilon(t)$. Property (3) follows from Proposition~\ref{prop:epsilon_limit_zero_app}:
\[
\lim_{t \to 0^{+}} \epsilon_{t} = \lim_{t \to 0^{+}} \epsilon(t) = 0. \qedhere
\]
\end{proof}

\subsection{Application: Uniform Bounds for the Remainder}

With the preceding construction we can now formulate a uniform estimate for the remainder function $r(h)$. This estimate will play an essential role in the proof of the main theorem.

\begin{lemma}[Uniform bound for the remainder] \label{lem:uniform_error_bound_app}
Assume $x \in \mathcal{C}^{1}(I)$ and let $r(h)$, $\epsilon(s)$, and $\epsilon_{t}$ be as defined above. Then for any $t > 0$ and any $\tau$ satisfying $0 \leq \tau \leq t$, the following inequalities hold:
\begin{align}
|r(\tau)| &\leq \epsilon_{t} \, \tau, \label{eq:bound_tau_app} \\
|r(t)|   &\leq \epsilon_{t} \, t.   \label{eq:bound_t_app}
\end{align}
\end{lemma}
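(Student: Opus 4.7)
The plan is to derive both inequalities directly from the definition of $\epsilon(s)$ combined with the monotonicity property established in Lemma~\ref{lem:epsilon_monotone_app} and the identification $\epsilon_t = \epsilon(t)$ of Proposition~\ref{prop:epsilon_t_explicit_app}. The entire argument is essentially bookkeeping: the supremum defining $\epsilon(s)$ was tailored precisely so that a bound of the form $|r(h)| \le \epsilon(s)\,h$ holds uniformly for $h$ in $(0,s]$, and the monotonicity converts such a bound into one governed by the larger parameter $\epsilon_t$.

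First I would dispose of the degenerate case $\tau = 0$. Since $r(0) = x(0) - x(0) - a_0 \cdot 0 = 0$, the inequality $|r(0)| \le \epsilon_t \cdot 0$ reduces to $0 \le 0$, which is trivially satisfied. Next, for $0 < \tau \le t$, I would invoke Definition~\ref{def:epsilon_function_app}: because $\tau$ itself lies in the set $(0,\tau]$ over which the supremum defining $\epsilon(\tau)$ is taken, the ratio $|r(\tau)/\tau|$ is one of the values whose supremum constitutes $\epsilon(\tau)$, so $|r(\tau)/\tau| \le \epsilon(\tau)$. Multiplying through by $\tau > 0$ gives $|r(\tau)| \le \epsilon(\tau)\,\tau$.

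The next step is to promote $\epsilon(\tau)$ to $\epsilon_t$. Since $0 < \tau \le t$, Lemma~\ref{lem:epsilon_monotone_app} furnishes $\epsilon(\tau) \le \epsilon(t)$, and Proposition~\ref{prop:epsilon_t_explicit_app} identifies $\epsilon(t)$ with $\epsilon_t$. Combining these, $|r(\tau)| \le \epsilon(\tau)\,\tau \le \epsilon_t\,\tau$, which establishes \eqref{eq:bound_tau_app} for all $\tau \in [0,t]$. The second inequality \eqref{eq:bound_t_app} is then obtained as the special case $\tau = t$ of \eqref{eq:bound_tau_app}, with no additional work required.

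There is no genuine obstacle in this proof; the only thing to be careful about is the handling of the endpoint $\tau = 0$, which must be treated separately because the ratio $r(\tau)/\tau$ is not literally defined there, but is rendered harmless by the fact that $r(0) = 0$ makes the claimed inequality degenerate. This division into the cases $\tau = 0$ and $\tau > 0$ is the only non-mechanical element of the argument.
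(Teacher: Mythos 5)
Your proof is correct and reaches the same conclusion, but it routes slightly differently than the paper does. The paper observes directly that any $\tau \in (0,t]$ lies in the index set of the supremum defining $\epsilon(t) = \epsilon_t$, so $|r(\tau)/\tau| \le \epsilon_t$ follows in one step without any appeal to monotonicity. You instead first bound $|r(\tau)/\tau|$ by $\epsilon(\tau)$ (using the supremum over the smaller interval $(0,\tau]$), then invoke Lemma~\ref{lem:epsilon_monotone_app} to promote $\epsilon(\tau)$ to $\epsilon(t)$. Both are valid; yours introduces a dependence on the monotonicity lemma that is not strictly needed here, while the paper's argument stays entirely within the definition of $\epsilon(t)$. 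Your observation that the second inequality \eqref{eq:bound_t_app} is just the $\tau = t$ instance of the first is a small streamlining over the paper, which argues it as a parallel case. The handling of $\tau = 0$ via $r(0)=0$ is identical in both.
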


\begin{proof}
Combining Definition~\ref{def:epsilon_t_app} with Proposition~\ref{prop:epsilon_t_explicit_app} yields
\[
\epsilon_{t} = \epsilon(t) = \sup_{0 < \tau \leq t} \Bigl| \frac{r(\tau)}{\tau} \Bigr|.
\]

\textbf{Estimate for $r(\tau)$.} If $\tau = 0$, then $r(0) = 0$ and the inequality $|0| \leq \epsilon_{t} \cdot 0$ holds trivially. If $0 < \tau \leq t$, then $\tau$ belongs to the set over which the supremum is taken; consequently
\[
\Bigl| \frac{r(\tau)}{\tau} \Bigr| \leq \epsilon_{t}.
\]
Multiplying by the positive number $\tau$ gives $|r(\tau)| \leq \epsilon_{t} \tau$.

\textbf{Estimate for $r(t)$.} Since $t > 0$, it is itself an element of $\{ \tau : 0 < \tau \leq t \}$. Therefore
\[
\Bigl| \frac{r(t)}{t} \Bigr| \leq \epsilon_{t},
\]
and multiplication by $t$ yields $|r(t)| \leq \epsilon_{t} t$.
\end{proof}
\end{document}